\documentclass[11pt]{article}
\usepackage[margin=1in]{geometry}
\usepackage{complexity}
\usepackage{mathtools}
\usepackage{xspace}
\usepackage{rpmacros}
\RequirePackage[colorlinks=true]{hyperref}
\hypersetup{
  linkcolor=[rgb]{0,0,0.4},
  citecolor=[rgb]{0, 0.4, 0},
  urlcolor=[rgb]{0.6, 0, 0}
}
\usepackage{mathpazo}
\usepackage{bbm}
\usepackage{dsfont}

\usepackage{amsthm}
\usepackage{thmtools,thm-restate}

\numberwithin{equation}{section}
\declaretheoremstyle[bodyfont=\it,qed=\qedsymbol]{noproofstyle}

\declaretheorem[name=Observation,numbered=no]{observation*}

\declaretheorem[numberlike=equation]{fact}

\declaretheorem[numberlike=equation]{theorem}
\declaretheorem[numberlike=equation,style=noproofstyle,name=Theorem]{theoremwp}
\declaretheorem[name=Theorem,numbered=no]{theorem*}

\declaretheorem[numberlike=equation]{lemma}
\declaretheorem[name=Lemma,numbered=no]{lemma*}

\declaretheorem[numberlike=equation]{corollary}
\declaretheorem[name=Corollary,numbered=no]{corollary*}
\declaretheorem[numberlike=equation,style=noproofstyle,name=Corollary]{corollarywp}

\declaretheorem[numberlike=equation]{proposition}
\declaretheorem[name=Proposition,numbered=no]{proposition*}
\declaretheorem[numberlike=equation,style=noproofstyle,name=Proposition]{propositionwp}

\declaretheorem[numberlike=equation]{claim}
\declaretheorem[name=Claim,numbered=no]{claim*}

\declaretheorem[name=Conjecture,numbered=no]{conjecture*}

\declaretheorem[numberlike=equation]{question}
\declaretheorem[name=Question,numbered=no]{question*}

\declaretheoremstyle[bodyfont=\it,qed=$\lozenge$]{defstyle} 

\declaretheorem[numberlike=equation,style=defstyle]{definition}
\declaretheorem[unnumbered,name=Definition,style=defstyle]{definition*}

\declaretheorem[unnumbered,name=Example,style=defstyle]{example*}

\declaretheorem[unnumbered,name=Notation=defstyle]{notation*}

\declaretheorem[numberlike=equation,style=defstyle]{construction}
\declaretheorem[unnumbered,name=Construction,style=defstyle]{construction*}

\declaretheorem[unnumbered,name=Remark,style=defstyle]{remark*}

%%% Local Variables: 
%%% mode: latex
%%% TeX-master: "main"
%%% End: 

\usepackage{nth}
\usepackage{intcalc}
\usepackage{etoolbox}
\usepackage{xstring}
\hypersetup{
%breaklinks=true % this is to handle linbreaks in url if latex+dvips+ps2pdf used
                % Thanks to Ben Lee Volk for this fix
}

\usepackage{ifpdf}
\ifpdf
\else
\usepackage[quadpoints=false]{hypdvips}
\fi

\newcommand{\shortECCC}[2]{\texttt{\href{http://eccc.weizmann.ac.il/report/\ifnumcomp{#1}{>}{93}{19}{20}#1/#2/}{eccc:TR#1-#2}}}

\newcommand{\parseECCC}[1]{% Takes a string of the form TRxx/xxx or
%                          % TRxx-xxx and returns short ECCC link
\StrSubstitute{#1}{TR}{}[\tmpstring]%
\IfSubStr{\tmpstring}{/}{ %assuming string is of the form TRxx/xxx
\StrBefore{\tmpstring}{/}[\ecccyear]%
\StrBehind{\tmpstring}{/}[\ecccreport]%
}{% assuming string is of the form TRxx-xxx
\StrBefore{\tmpstring}{-}[\ecccyear]%
\StrBehind{\tmpstring}{-}[\ecccreport]%
}%
\shortECCC{\ecccyear}{\ecccreport}}

\newcommand{\coeff}{\mathbf{coeff}}

\newcommand{\vecalpha}{\boldsymbol{\alpha}}
\newcommand{\vecbeta}{\boldsymbol{\beta}}

%miforbes - begin
\newcommand{\Prc}{P^{\mathrm{RC}}}
\newcommand{\Grc}{\cG^{\mathrm{RC}}}

\newcommand{\SPSk}{\ensuremath{\Sigma^k\Pi\Sigma}\xspace}
\declaretheorem[numberlike=equation,name=Meta-Conjecture]{metaconjecture}
\newcommand{\demph}[1]{\textbf{#1}}
\newcommand{\bits}{\{0,1\}}
\newcommand{\rk}{\mathsf{k}}
\newcommand{\rf}{\mathsf{f}}
\newcommand{\RR}{Razborov and Rudich~\cite{RR97}\xspace}
\newcommand\Burgisser{B{\"u}rgisser}
\DeclareMathOperator{\ideg}{ideg}
\DeclareMathOperator{\homog}{hom}
\newcommand{\eqdef}{\vcentcolon=}
\newcommand{\quasi}{\lang{quasi}}

\newcommand{\totdeg}{\ensuremath{\F[x_1,\ldots,x_n]^d}\xspace}
\newcommand{\totdeghom}{\ensuremath{\F[x_1,\ldots,x_n]^{d}_{\homog}}\xspace}
\newcommand{\inddeg}{\ensuremath{\F[x_1,\ldots,x_n]^{d}_{\ideg}}\xspace}
\newcommand{\inddegm}{\ensuremath{\F[x_1,\ldots,x_n]^{1}_{\ideg}}\xspace}

\newcommand{\totdegs}{\ensuremath{\F[\vx]^d}\xspace}
\newcommand{\totdeghoms}{\ensuremath{\F[\vx]^{d}_{\homog}}\xspace}
\newcommand{\inddegs}{\ensuremath{\F[\vx]^{d}_{\ideg}}\xspace}
\newcommand{\inddegsm}{\ensuremath{\F[\vx]^{1}_{\ideg}}\xspace}

\newcommand{\totdegN}{\ensuremath{{N_{n,d}}}\xspace}
\newcommand{\totdeghomN}{\ensuremath{N_{n,d}^{\homog}}\xspace}
\newcommand{\inddegN}{\ensuremath{N_{n,d}^{\ideg}}\xspace}
\newcommand{\inddegNm}{\ensuremath{N_{n,1}^{\ideg}}\xspace}

\newcommand{\coefftot}{\ensuremath{\coeff_{n,d}}\xspace}
\newcommand{\coefftothom}{\ensuremath{\coeff_{n,d}^{\homog}}\xspace}
\newcommand{\coeffind}{\ensuremath{\coeff_{n,d}^{\ideg}}\xspace}

	\renewcommand{\vec}[1]{{\mathbf{#1}}}

	\makeatletter
	\newcommand{\va}{{\vec{a}}\@ifnextchar{^}{\!\:}{}}
	\newcommand{\vb}{{\vec{b}}\@ifnextchar{^}{\!\:}{}}
	\newcommand{\vc}{{\vec{c}}\@ifnextchar{^}{\!\:}{}}
	\newcommand{\vd}{{\vec{d}}\@ifnextchar{^}{\!\:}{}}
	\newcommand{\ve}{{\vec{e}}\@ifnextchar{^}{\!\:}{}}
	\newcommand{\vy}{{\vec{y}}\@ifnextchar{^}{\!\:}{}}
	\newcommand{\vs}{{\vec{s}}\@ifnextchar{^}{\!\:}{}}
	\newcommand{\vt}{{\vec{t}}\@ifnextchar{^}{\!\:}{}}
	\newcommand{\vx}{{\vec{x}}\@ifnextchar{^}{}{}}		%\vec{x} seems fine already
	\newcommand{\vz}{{\vec{z}}\@ifnextchar{^}{\!\:}{}}

	\newcommand{\vY}{{\vec{Y}}\@ifnextchar{^}{\!\:}{}}
	\newcommand{\vX}{{\vec{X}}\@ifnextchar{^}{}{}}		%\vec{x} seems fine already
	\newcommand{\vZ}{{\vec{Z}}\@ifnextchar{^}{\!\:}{}}
	\newcommand{\vG}{{\vec{G}}\@ifnextchar{^}{\!\:}{}}
	\newcommand{\vecX}{\vX}
	\makeatother

	\newcommand{\vaa}{{\vecalpha}}
	\newcommand{\vbb}{{\vecbeta}}
	
%miforbes - end

\newcommand{\cC}{{\mathcal{C}}}
\newcommand{\cD}{{\mathcal{D}}}
\newcommand{\cE}{{\mathcal{E}}}
\newcommand{\cH}{{\mathcal{H}}}
\newcommand{\cG}{{\mathcal{G}}}

\newcommand{\cM}{{\mathcal{M}}}

\newcommand{\cvG}{{\boldsymbol{\mathcal{G}}}}

\newcommand{\SMESP}[1]{\Sigma \mathrm{m} \! \wedge \! \Sigma\Pi^{#1}}

\newcommand{\trdeg}[1]{\mathsf{trdeg}\left\{#1\right\}}
\newcommand{\Char}{\mathsf{char}}

\newcommand{\powerproduct}{\times\mbox{\small$\wedge$}} % makes the times and wedge signs the same height.

\newcommand{\Jac}{\mathcal{J}} 

%generators
\newcommand{\SSV}{\mathrm{SSV}}
\newcommand{\SSSV}{\mathrm{SSSV}}
\newcommand{\ASSS}{\mathrm{ASSS}}
\newcommand{\FS}{\mathrm{FS}}
\newcommand{\BMS}{\mathrm{BMS}}

%% RP: Please don't judge me.
%% I just picked some colours from a wiki 
%% page on dvipscolor names and chose some
%% that didn't make my eyes bleed :-)

\def\epsilon{\varepsilon} %% RP: I prefer this to the usual epsilon
\let\eps\epsilon

\newcommand*\samethanks[1][\value{footnote}]{\footnotemark[#1]}

%\onehalfspacing
\date{}

\title{Succinct Hitting Sets and Barriers to Proving Algebraic Circuits Lower Bounds}
\author{
Michael A. Forbes\thanks{%
	University of Illinois at Urbana-Champaign.
	E-mail: \texttt{miforbes@illinois.edu}.
	This work was performed when the author was at Stanford University, while supported by the NSF, including NSF CCF-1617580, and the DARPA Safeware program.}%
\and%
Amir Shpilka\thanks{Department of Computer Science, Tel Aviv University, Tel Aviv, Israel, E-mails: \texttt{shpilka@post.tau.ac.il, benleevolk@gmail.com}. The research leading to these results has received funding from the European Community's Seventh Framework Programme (FP7/2007-2013) under grant agreement number 257575  and from the Israel Science Foundation (grant number 552/16).}
\and%
Ben Lee Volk\samethanks[2]
}
\begin{document}
\maketitle

\begin{abstract}

	We formalize a framework of \emph{algebraically natural} lower bounds for algebraic circuits.  Just as with the natural proofs notion of Razborov and Rudich~\cite{RR97} for boolean circuit lower bounds, our notion of algebraically natural lower bounds captures nearly all lower bound techniques known.  However, unlike the boolean setting, there has been no concrete evidence demonstrating that this is a \emph{barrier} to obtaining super-polynomial lower bounds for general algebraic circuits, as there is little understanding whether algebraic circuits are expressive enough to support ``cryptography'' secure against algebraic circuits.

	Following a similar result of Williams~\cite{Williams16} in the boolean setting, we show that the existence of an algebraic natural proofs barrier is \emph{equivalent} to the existence of \emph{succinct} derandomization of the polynomial identity testing problem.  That is, whether the coefficient vectors of $\polylog(N)$-degree $\polylog(N)$-size circuits is a hitting set for the class of $\poly(N)$-degree $\poly(N)$-size circuits.  Further, we give an explicit universal construction showing that \emph{if} such a succinct hitting set exists, then our universal construction suffices.

	Further, we assess the existing literature constructing hitting sets for restricted classes of algebraic circuits and observe that \emph{none} of them are succinct as given.  Yet, we show how to modify some of these constructions to obtain succinct hitting sets.  This constitutes the first evidence supporting the existence of an algebraic natural proofs barrier.

	Our framework is similar to the Geometric Complexity Theory (GCT) program of Mulmuley and Sohoni~\cite{GCT1}, except that here we emphasize constructiveness of the proofs while the GCT program emphasizes symmetry.  Nevertheless, our succinct hitting sets have relevance to the GCT program as they imply lower bounds for the complexity of the defining equations of polynomials computed by small circuits.
\end{abstract}

\thispagestyle{empty}
\newpage
\pagenumbering{arabic}

\section{Introduction}\label{sec:intro}

Computational complexity theory studies the limits of efficient computation, and a particular goal is to quantify the power of different computational resources such as time, space, non-determinism, and randomness.  Such questions can be instantiated as asking to prove equalities or separations between complexity classes, such as resolving $\P$ versus $\NP$.  Indeed, there have been various successes: the (deterministic) time-hierarchy theorem showing that $\P\ne\EXP$ (\cite{HS65}), circuit lower bounds showing that $\AC^0\ne \P$ (\cite{Ajtai83,FurstSS84,Yao85,Hastad89}), and interactive proofs showing $\IP=\PSPACE$ (\cite{LFKN92,Shamir90}). However, for each of these seminal works we have now established \emph{barriers} for why their underlying techniques \emph{cannot} resolve questions such as $\P$ versus $\NP$. Respectively, the above results are covered by the barriers of relativization of Baker, Gill and Solovay~\cite{BGS75}, natural proofs of Razborov and Rudich~\cite{RR97}, and algebraization of Aaronson and Wigderson~\cite{AW09}. In this work we revisit the natural proofs barrier of Razborov and Rudich~\cite{RR97} and seek to understand how it extends to a barrier to algebraic circuit lower bounds.  While previous works have considered versions of an algebraic natural proofs barrier, we give the \emph{first} evidence of such a barrier against restricted algebraic reasoning.

\paragraph{Natural Proofs:} The setting of Razborov and Rudich~\cite{RR97} is that of \emph{non-uniform} complexity, where instead of considering a Turing machine solving a problem on all input sizes, one considers a model such as boolean circuits where the computational device can change with the size of the input.  While circuits are at least as powerful as Turing machines, and can even (trivially) compute undecidable languages, their ability to solve computational problems of interest can seem closer to uniform computation.  For example, if circuits can solve $\NP$-hard problems then there are unexpected implications for uniform computation similar to $\P=\NP$ (the polynomial hierarchy collapses (\cite{KarpLipton})). As such, obtaining lower bounds for boolean circuits was seen as a viable method to indirectly tackle Turing machine lower bounds, with the benefit of being able to appeal to more combinatorial methods and thus bypassing the relativization barrier of Baker, Gill and Solovay~\cite{BGS75} which seems to obstruct most methods that can exploit uniformity.

There have been many important lower bounds obtained for restricted classes of circuits: constant-depth circuits (\cite{Ajtai83,FurstSS84,Yao85,Hastad89}), constant-depth circuits with prime modular gates (\cite{Razborov87,Smolensky87}), as well as lower bounds for monotone circuits (\cite{Razborov85, AB87,Tardos88}). Razborov and Rudich~\cite{RR97} observed that many of these lower bounds prove \emph{more} than just a lower bound for a single explicit function.  Indeed, they observed that such lower bounds often distinguish functions computable by small circuits from \emph{random} functions, and in fact they do so \emph{efficiently}.  Specifically, a \emph{natural property} $P$ is a subset of boolean functions $P\subseteq\cup_{n\ge 1}\{f:\bits^n\to\bits\}$ with the following properties, where we denote $N\eqdef 2^n$ to be the input size to the property.\footnote{The Razborov and Rudich~\cite{RR97} definition of a natural property actually applies to the complement of the property $P$ we use here. This is a trivial difference for boolean complexity, but is important for algebraic complexity as there natural properties are one-sided, see \autoref{sec:alg-nat-pf}.}
\begin{enumerate}
	\item \emph{Usefulness:} If $f:\bits^n\to\bits$ is computable by $\poly(n)$-size circuits then $f$ has property $P$.

	\item \emph{Largeness:} Random functions $f:\bits^n\to\bits$ do not have the property $P$ with noticeable probability, that is, with probability at least $1/\poly(N)=2^{-O(n)}$.

	\item \emph{Constructivity:} Given a truth-table of a function $f:\bits^n\to\bits$, of size $N=2^n$, deciding whether $f$ has the property $P$ can be checked in $\poly(N)=2^{O(n)}$ time.
\end{enumerate}
To obtain a circuit lower bound, a priori one only needs to obtain a (non-trivial) property $P$ that is useful in the above sense.  However, Razborov and Rudich~\cite{RR97} showed that (possibly after a small modification) most circuit lower bounds (such as those for constant-depth circuits (\cite{Ajtai83,FurstSS84,Yao85,Hastad89,Razborov87,Smolensky87})) yield large and constructive properties, and called such lower bounds \emph{natural proofs}. 

Further, Razborov and Rudich~\cite{RR97} argued that standard cryptographic assumptions imply that natural proofs \emph{cannot} yield super-polynomial lower bounds against any restricted class of circuits that is sufficiently rich to implement cryptography. That is, a \emph{pseudorandom function} is an efficiently computable function $f:\bits^n\times\bits^\lambda\to\bits$ such that when sampling the key $\rk\in\bits^\lambda$ at random the resulting distribution of functions $f(\cdot,\rk)$ is computationally indistinguishable from a truly random function $\rf:\bits^n\to\bits$. The existence of pseudorandom functions follows from the existence of \emph{one-way functions} (\cite{HILL,GGM}) which is essentially the weakest interesting cryptographic assumption. There are even candidate constructions of pseudorandom functions computable by polynomial-size constant-depth threshold circuits ($\TC^0$) as given by Naor and Reingold~\cite{NaorReingold97}, whose security rests on the intractability of discrete-log and factoring-type assumptions (see also Krause and Lucks~\cite{KL01}).  As such, it is widely-believed that there are pseudorandom functions, even ones computationally indistinguishable from random except to adversaries running in $\exp(\lambda^{\Omega(1)})$-time.

In contrast, Razborov and Rudich~\cite{RR97} showed that a natural proof useful against $\poly(n)$-size circuits can distinguish a pseudorandom function from a truly random function in $\poly(2^n)$-time, which would contradict the believed $\exp(\lambda^{\Omega(1)})$-indistinguishability when taking $\lambda$ to be a large enough polynomial in $n$.  That is, suppose $P$ is a natural property.  Then for a pseudorandom function $f(\cdot,\cdot)$ and each value $k\in\bits^\lambda$ of the key, the resulting function $f(\cdot,k):\bits^n\to\bits$ has a $\poly(n)$-size circuit, and has property $P$ (by usefulness).  In contrast, random functions will not have property $P$ with noticeable probability (by largeness).  As the property is constructive, this gives a $\poly(2^n)$-time algorithm distinguishing $f(\cdot,\rk)$ from a random function, as desired.

While the natural proofs barrier has proved difficult to overcome, there are results that seem to circumvent it.  For example, the barrier does not seem to apply to the lower bounds obtained for monotone circuits (\cite{Razborov85}), as there the notion of a ``random monotone function'' is not well-defined.  Further, there are results (such as Williams'~\cite{Williams14} result of $\ACC^0\ne\NEXP$) that circumvent the natural proofs barrier by incorporating techniques from uniform complexity.  Other work has demonstrated that relaxing the notion of natural proof can avoid the implications to breaking cryptography.  Chow~\cite{Chow11} has shown that \emph{almost} natural proofs (which relax largeness slightly) \emph{can} prove super-polynomial circuit lower bounds (under plausible cryptographic or complexity-theoretic assumptions).  Williams~\cite{Williams16} has shown, among other results, that some circuit lower bounds (such as for \EXP\ or \NEXP) are \emph{equivalent} to constructive (non-trivial) properties useful against small circuits, which yet have no need for any sort of largeness. Chapman and Williams~\cite{CW15} have shown that obtaining circuit lower bounds for a self-checkable problem (such as \SAT) is \emph{essentially equivalent} to obtaining a natural property against circuits that ``check their work''.  These works suggest that the exact implications of the natural proofs barrier remains not fully understood.

\paragraph{Algebraic Natural Proofs:} Algebraic circuits are one of the most natural models for computing polynomials by using addition and multiplication.  While more restricted than general (boolean) computation, proving lower bounds for algebraic circuits has proved challenging. Yet, we do not have formal barrier results for understanding the difficulty of such lower bounds. While such lower bounds are not a priori subject to the natural proofs barrier due to the formal differences in the computational model, the relevance of the ideas of natural proofs to algebraic circuits has been repeatedly asked. Aaronson-Drucker~\cite{aaronson-blog-post} as well as Grochow~\cite{Grochow15} noticed that many of the prominent algebraic circuit lower bounds (such as \cite{nis91,nw1997,Raz06,raz-yehudayoff}) are \emph{algebraically natural}, in that they obey an algebraic form of usefulness, largeness, and constructivity.

While this would seemingly then imply a Razborov and Rudich~\cite{RR97}-type barrier for existing techniques, there is a key piece missing: we have very little evidence for the existence of \emph{algebraic} pseudorandom functions.  That is, the pseudorandom functions used by Razborov and Rudich~\cite{RR97} are \emph{boolean} functions, and naive attempts to algebrize them seemingly do not yield pseudorandom polynomials. Indeed, as algebraic circuits are a computational model weaker than general computation, it is conceivable that they are too weak to implement cryptography, so that natural proofs barrier would \emph{not} apply.  In contrast, it is also conceivable that algebraic circuits are sufficiently strong so that they can compute ``enough'' cryptography to be secure against algebraic circuits, so that a natural proofs barrier \emph{would} apply.

\paragraph{Our Work:} In this work we formalize the study of pseudorandom polynomials by exhibiting the \emph{first} constructions provably secure against restricted classes of algebraic circuits. Our notion of pseudorandomness is related to the \emph{polynomial identity testing} problem, the derandomization of which is one of the main open problems in algebraic complexity theory (see \autoref{sec:alg-PRF} for more details). In particular, we follow Williams~\cite{Williams16} in treating the existence of a natural proofs barrier as the problem of \emph{succinct} derandomization: replacing randomness with pseudorandomness that further has a \emph{succinct} description.  We revisit existing derandomization of restricted classes of algebraic circuits and show (via non-trivial modification) that they can be made succinct in many cases.
 
A more formal statement of the results appears in \autoref{sec:results}. In order to present them, however, we require some technical background and definitions, which will be presented in the forthcoming sections.

Recently, and independently of our work, Grochow, Kumar, Saks, and Saraf~\cite{GKSS17} observed a similar connection between a natural proofs barrier for algebraic circuits and succinct derandomization. Their work also presents connections with Geometric Complexity Theory (which we discuss below in \autoref{sec:GCT}) and algebraic proof complexity. However, unlike our work they do not present any constructions of succinct derandomization.

\subsection{Algebraic Complexity}\label{subsec:alg-complex}

We now discuss the algebraic setting for which we wish to present the natural proofs barrier. Algebraic complexity theory studies the complexity of syntactic computation of polynomials using algebraic operations. The most natural model of computation is that of an \emph{algebraic circuit}, which is a directed acyclic graph whose leaves are labeled by either variables $x_1, \ldots, x_n$ or elements from the field $\F$, and whose internal nodes are labeled by the algebraic operations of addition ($+$) or multiplication ($\times$). Each node in the circuit computes a polynomial in the natural way, and the circuit has one or more \emph{output nodes}, which are nodes of out-degree zero. The \emph{size} of the circuit is defined to be  the number of wires, and the \emph{depth} is defined to be the length of a longest path from an input node to an output node. As usual, a circuit whose underlying graph is a tree is called a \emph{formula}. One can associate various complexity classes with algebraic circuits, and the most important one for us is $\VP$, which the classes of $n$-variate polynomials with $\poly(n)$-degree computable by $\poly(n)$-size algebraic circuits. There is also $\VNP$, which we will informally define as the class of ``explicit'' polynomials.

A central open problem in algebraic complexity theory is to prove a super-polynomial lower bound for the algebraic circuit size of any explicit polynomial, that is, proving $\VP\ne\VNP$.  Substantial attention has been given to this problem, using various techniques that leverage non-trivial algebraic tools to study the syntactic nature of these circuits.  Indeed, our knowledge of algebraic lower bounds seem to surpass that of boolean circuits, as we have super-linear lower bounds for general circuits (\cite{Str73, BS83}) --- a goal as yet unachieved in the boolean setting.  Similarly, there are a wide array of super-polynomial or even exponential lower bounds known for various weaker models of computation such as non-commutative formulas (\cite{nis91}), multilinear formulas (\cite{raz2004, ry08}), and homogeneous depth-3 and depth-4 circuits (\cite{nw1997, GKKS16, KSS13, FLMS15, KLSS, KS14}). We refer the reader to Saptharishi~\cite{github} for a continuously-updating comprehensive compendium of these lower bounds.

However, this landscape might still feel reminiscent of the boolean setting, in that there are various restricted models where lower bounds techniques are known, and yet lower bounds for general circuits or formulas remain relatively poorly understood.  Yet, there has been some significant recent cause for optimism for obtaining \emph{general} circuit lower bounds, as various depth-reduction results (\cite{vsbr83,ajmv98,av08,Koiran,Tav15,GKKS16,CKSV16}) have shown that $n$-variable degree-$d$ polynomials computable by size-$s$ algebraic circuits have $s^{O(\sqrt{d})}$-size depth-3 or homogeneous depth-4 formulas. Further, recent methods (\cite{kayal12,GKKS14, KSS13, FLMS15, KLSS, KS14}) have proven $(nd)^{\Omega(\sqrt{d})}$ lower bounds computing explicit polynomials by homogeneous depth-4 formulas. If one could simply push these methods to obtain an $(nd)^{\omega(\sqrt{d})}$ lower bound then this would obtain super-polynomial lower bounds for general circuits!  Unfortunately, all of the lower bounds methods known seem to apply not just to candidate hard polynomials, but also to certain \emph{easy} polynomials, demonstrating that these techniques cannot yield a $(nd)^{\omega(\sqrt{d})}$ lower bound as this would contradict the depth-reduction theorems.

Given this state of affairs, it is unclear whether to be optimistic or pessimistic regarding future prospects for obtaining superpolynomial lower bounds for general algebraic circuits.  To resolve this uncertainty it is clearly important to formalize the barriers constraining our lower bound techniques.  Indeed, as mentioned above all known lower-bound methods apply not just to hard polynomials but also to easy polynomials --- is this intrinsic to current methods?  This is essentially the question of whether there is an algebraic natural proofs barrier, as we now describe.

\subsection{Algebraic Natural Proofs}\label{sec:alg-nat-pf}

We now define the notion of an \emph{algebraically natural proof} used in this paper.  Intuitively, we want to know whether lower bounds methods can distinguish between low-complexity and high-complexity polynomials, so that they are \emph{useful} in the sense of \RR. In particular, we want to know if such distinguishers\footnote{Grochow~\cite{Grochow15} referred to distinguishers as \emph{test polynomials}, as they test whether an input polynomial is of low- or high-complexity.} can be \emph{efficient}, so that they are also \emph{constructive}.  Several works, such as Aaronson and Drucker~\cite{aaronson-blog-post}, Grochow~\cite{Grochow15} (see also Shpilka and Yehudayoff~\cite[Section 3.9]{sy}, and Aaronson~\cite[Section 6.5.3]{Aar16}) have noticed that almost all of the lower bounds methods in algebraic complexity theory are themselves algebraic in a certain sense which we now describe.

The simplest example is to consider matrix rank, where the complexity of an $n\times n$ matrix $M$ is exactly captured by its determinant, which is a polynomial. That is, if $M$ is of rank $<n$ then $\det M=0$, and if rank $=n$ then $\det M\ne 0$. The key feature here is that $\det M$ is a polynomial in the \emph{coefficients of the underlying algebraic object}, which in this case is the matrix $M$. Most of the central lower bounds techniques, such as partial derivatives (\cite{nw1997}), evaluation/coefficient dimension (\cite{nis91,Raz06,raz-yehudayoff,FS13}), or shifted partial derivatives (\cite{kayal12,GKKS14}) are generalizations of this idea, specifically leveraging notions of linear algebra and rank. Abstractly, these methods take an $n$-variate polynomial $f$, inspect its coefficients, and then form an exponentially-large (in $n$) matrix $M_f$ whose entries are polynomials in the coefficients of $f$. One then shows that if $f$ is simple then $\rank M_f < r$, while for an explicit polynomial $h$ one can show that $\rank M_{h}\ge r$.  In particular, by basic linear algebra this shows that there is some $r\times r$ submatrix $M'_{h}$ of $M_{h}$ such that $\det M'_{h}\ne 0$, and yet $\det M'_f=0$ for simple $f$, where $M'_f$ denotes the restriction of $M_f$ to the same set of rows and columns. This proves that $h$ is a hard polynomial.

We now observe that the above outline gives a natural property $P\eqdef\{f:\det M_f'=0\}$ in the sense of \RR.

\begin{enumerate}
	\item \emph{Usefulness:} For low-complexity $f$ we have that $f\in P$ as argued above.  Further, $P$ is a non-trivial property as $h\notin P$.
		
	\item \emph{Constructivity:} For a given $f$, deciding whether ``$f\in P$?'' is tantamount to computing $\det M'_f$.  Even though $M'_f$ might be exponentially-large, it is often polynomially-large in the \emph{size of $f$} (which is exponential in the number $n$ of variables in $f$).  As typically $M'_f$ is a simple matrix in terms of $f$, computing $\det M'_f$ is essentially the complexity of computing the determinant, which is computable by small algebraic circuits (\cite{Berk84,mv97}).  Thus, the property $P$ is efficiently decidable in the size of its input.
		
	\item \emph{Largeness:} The largeness condition is \emph{intrinsic} here, as the property is governed by the vanishing of a non-zero polynomial; $\det M'_f$ is non-zero as a polynomial as in particular $\det M'_{h}\ne 0$. As non-zero polynomials evaluate to non-zero at random points with high probability (\cite{S80, Z79, DL78}), this means that such distinguishers certify that random polynomials are of high-complexity.
\end{enumerate}

Thus, we see that the above meta-method forms a very natural instance of a natural property.  As such, one might expect the \RR barrier to then rule out such properties, however their barrier result only holds when the underlying circuit class can compute pseudorandom functions. While it is widely believed that simple boolean circuit classes can compute pseudorandom functions (as discussed above), the ability of algebraic circuits to compute pseudorandom functions is significantly less understood. As such, the \RR barrier's applicability to the algebraic setting is not immediate.  However, as the above meta-method obeys algebraic restrictions on the natural properties being considered, this suggests that barrier could follow from a weaker assumption than that of algebraic circuits computing pseudorandom functions. 

We now give a formalization of the above meta-method for algebraic circuit lower bounds, which is implicit in prior work and known to experts. To begin, we must first note that in comparing low-complexity to high-complexity polynomials, we must detail the space in which the polynomials reside.  There are three spaces of primary interest.
\begin{enumerate}
	\item $\totdeg$: The space of $n$-variate polynomials of total degree at most $d$. There are $\totdegN\eqdef\binom{n+d}{d}$ many monomials $\vx^\va\eqdef x_1^{a_1}\cdots x_n^{a_n}$ in this space.

	\item $\totdeghom$: The space of homogeneous $n$-variate polynomials of total degree exactly $d$. There are $\totdeghomN\eqdef\binom{n+d-1}{d}$ many monomials $\vx^\va$ in this space.

	\item $\inddeg$: The space of $n$-variate polynomials of individual degree at most $d$. There are $\inddegN\eqdef(d+1)^n$ many monomials $\vx^\va$ in this space.
\end{enumerate}
While this may seem pedantic, it is important to distinguish these spaces.  That is, while homogeneous degree-$d$ polynomials capture nearly all of the interesting complexity of polynomials of degree \emph{at most} $d$, it is trivial to distinguish the two. That is, consider the distinguisher polynomial $c_{\vec{0}}$ that simply returns the constant coefficient (the coefficient of $1$) of a polynomial $f=\sum_{\va}c_\va\vx^\va$.  This polynomial vanishes on \totdeghom for $d>0$, but does not vanish on the constant polynomial $1\in\totdeg$.  However, it would be absurd to say that ``$1$ is a hard polynomial for \totdeghom''.  Thus, in discussing how properties can distinguish polynomials we must specify the domain of interest. Indeed, to discuss lower bounds for homogeneous computation one must restrict attention to the space \totdeghoms, and likewise to discuss lower bounds for multilinear computation one must restrict attention to the space \inddegsm.

We now present our definition, with enough generality to handle the above spaces of polynomials simultaneously.  That is, for a fixed set of monomials $\cM$ (such as all monomials of degree at most $d$) we consider the space $\Span(\cM)$, which is defined as all linear combinations over monomials in $\cM$.  We then identify a polynomial $f\in\Span(\cM)$ defined by $f=\sum_{\vx^\va\in \cM} c_\va\vx^\va$ with its list of such coefficients, which is a vector $\coeff_{\cM}(f)\in\F^{\cM}$ defined $\coeff_{\cM}(f)\eqdef (c_\va)_{\vx^\va\in \cM}$.  We then ask for distinguisher $D$ which take as input these $|\cM|$ many coefficients, which can separate low-complexity polynomials from high-complexity polynomials.

\begin{definition}[Algebraically Natural Proof]\label{defn:alg-nat-pf}
	Let $\cM\subseteq\F[x_1,\ldots,x_n]$ be a set of monomials $\cM=\{\vx^\va\}_\va$, and let the set $\Span(\cM)\eqdef\{\sum_{\vx^\va\in \cM} c_\va\vx^\va: c_\va\in\F\}$ be all linear combinations of these monomials.  Let $\cC\subseteq\Span(\cM)$ and $\cD\subseteq \F[\{c_\va\}_{\vx^\va\in \cM}]$ be classes of polynomials, where the latter is in $|\cM|$ many variables. 
	
	A polynomial $D\in\cD$ is an \demph{algebraic $\cD$-natural proof against $\cC$}, also called a \demph{distinguisher}, if
	\begin{enumerate}
		\item $D$ is a non-zero polynomial.\label{defn:alg-nat-pf:nz}
		\item For all $f\in\cC$, $D$ vanishes on the coefficient vector of $f$, that is, $D(\coeff_{\cM}(f))=0$.\label{defn:alg-nat-pf:vanish}
			\qedhere
	\end{enumerate}
\end{definition}

We will be primarily interested in taking the set of monomials $\cM$ to correspond to one of the above three sets of polynomials, \totdegs, \totdeghoms and \inddegs, to which we define the relevant coefficient vectors as \coefftot, \coefftothom and \coeffind. We will use ``$\coeff$'' if the space of polynomials is clear from the context.

Thus, to revisit the comparison with \RR, condition \eqref{defn:alg-nat-pf:vanish} says that the distinguisher $D$ is \emph{useful} against the class $\cC$.  Condition \eqref{defn:alg-nat-pf:nz} indicates that the property is non-trivial, and in particular is \emph{large}, as a non-zero polynomial will evaluate to non-zero at a random point with high probability (\cite{S80, Z79, DL78}).  Finally, the fact that distinguisher $D$ comes from the restricted class $\cD$ is the \emph{constructivity} requirement, and the main question is how simple the distinguisher $D$ can be.

Further, note how the above distinguishers naturally have a \emph{one-sided} nature to them as in algebraic complexity one typically seeks lower bounds against computations using \emph{any} field extension of the base field of coefficients.  In using the above to define the \RR style property $P\eqdef\{f:D(\coeff_{\cM}(f))=0\}$, we note that the complement property $\neg P=\Span(\cM)\setminus P=\{f:D(\coeff_{\cM}(f))\ne 0\}$ \emph{cannot} be expressed in the above framework.  That is, for non-zero polynomials $p$ and $q$, it cannot be that the product $pq$ vanishes everywhere (over large enough fields), so that in particular it cannot be that $p(\vaa)=0$ iff $q(\vaa)\ne 0$.

We argued above that most of the main lower bound techniques fall into the above algebraic natural proof paradigm where the distinguisher has polynomial-size algebraic circuits, so that the proof is $\VP$-natural. This motivates the following question about algebraic $\VP$-natural proofs against $\VP$.

\begin{question}
	For the space of total degree $d$ polynomials $\totdeg$, is there an algebraic $\poly(\totdegN)$-size natural proof for lower bounds against $\poly(n,d)$-size circuits?
\end{question}

While one could make a detailed study of existing lower bounds to prove the intuitive fact that $\VP$-natural properties suffice for them, our attention will be to studying the \emph{limits} of this framework.  That said, it is worth mentioning that there are known techniques for algebraic circuit lower bounds that fall outside this framework.  

First, the shifted partial derivative technique of Gupta, Kamath, Kayal and Saptharishi~\cite{kayal12,GKKS14} is not currently known to be $\VP$-natural.  That is, while it does fall into the above rank-based meta-method (and thus the algebraic natural proof paradigm), the matrices involved are actually \emph{quasi}-polynomially large in their input, so the method is only $\quasi\VP$-natural.  However, as the shifted partial technique proves exponential lower bounds the required $\quasi\VP$-naturalness still seems rather modest.  

In contrast, there are actually methods which \emph{completely} avoid the algebraic framework (constructive or not).  That is, as discussed below in \autoref{sec:GCT}, this algebraic distinguisher framework is limited to proving \emph{border} complexity lower bounds, where border complexity is always upper bounded by usual complexity notions.  For the \emph{tensor rank} model, distinguishers actually prove border rank lower bounds. In contrast, the substitution method (\cite[Chapter 6]{bcs97}, \cite{Blaser14}) can prove tensor rank lower bounds which are \emph{higher} than known border rank upper bounds (for explicit tensors), giving a separation between these two complexities and thus showing the substitution method is not captured by the algebraic natural proof framework.  However, all such known separations are by at most a multiplicative constant factor, so the inability of the substitution method to be algebraically natural does not currently seem to be a serious deficiency in the framework developed here.

\subsection{Pseudorandom Polynomials}\label{sec:alg-PRF}

Having given our formal definition of algebraic natural proofs, we now explain our notion of the algebraic natural proof \emph{barrier}. In particular, as algebraically natural proofs concern the zeros of (non-zero) polynomials computable by small circuits, this naturally leads us to the \emph{polynomial identity testing (PIT)} problem.

\paragraph{Polynomial Identity Testing:}

Polynomial identity testing is the following algorithmic problem: given an algebraic circuit $D$ computing an $N$-variate polynomial, decide whether $D$ computes the identically zero polynomial. The problem admits a simple efficient randomized algorithm by the Schwartz-Zippel-DeMillo-Lipton Lemma~\cite{S80, Z79, DL78}. That is, evaluations of a low-degree non-zero polynomial at random points taken from a large enough field will be non-zero with high probability. Thus, to check non-zeroness it is enough to evaluate $D$ on a random input $\vecalpha$ and observe whether $D(\vecalpha) = 0$, which is clearly efficient. However, the best known deterministic algorithms run in exponential time. Designing an efficient deterministic algorithm for PIT is another major open problem in algebraic complexity, with intricate and bidirectional connections to proving algebraic and boolean circuit lower bounds \cite{HS80, a05, ki03, dsy09}.

The two flavors in which the problem appears are the \emph{white-box} model, in which the algorithm is allowed to inspect the structure of the circuit, and the \emph{black-box} model, in which the algorithm is only allowed to access evaluations of the circuit on inputs of its choice, such as the randomized algorithm described above. It can be easily seen that efficient deterministic black-box algorithms are equivalent to constructing small \emph{hitting sets}: a hitting set for a class $\cD\subseteq\F[c_1,\ldots,c_N]$ of circuits is a set $\cH \subseteq \F^N$ such that for any non-zero circuit $D \in \cD$, there exists $\vecalpha \in\cH$ such that $D(\vecalpha) \neq 0$.  While small hitting sets \emph{exist} for $\VP$, little progress has been made for explicitly constructing any non-trivial hitting sets for general algebraic circuits (or even solving PIT in the white-box model).  In contrast, there has been substantial work developing efficient deterministic white- and black-box PIT algorithms for non-trivial restricted classes of algebraic computation. For more, see the surveys of Saxena~\cite{Saxena09,Saxena14} and Shpilka-Yehudayoff~\cite{sy}.

\paragraph{Succinct Derandomization:} 

We now define our notion of pseudorandom polynomials by connecting the algebraic natural proof framework with hitting sets.  Consider a class $\cC$ of polynomials, say within the space of polynomials of bounded total degree $\totdeg$.  If $D$ is an algebraic natural proof against $\cC$ then we have:
\begin{enumerate}
	\item $D$ is a non-zero polynomial.

	\item $D$ vanishes on the set $\cH\eqdef\{\coefftot(f):f\in\cC\}$ of coefficient vectors of polynomials in $\cC$.
\end{enumerate}
Put together, these conditions are equivalent to saying that that $\cH$ is \emph{not} a hitting set for $D$.  Thus, we see that there are algebraically natural proofs \emph{if and only if} coefficient-vectors of simple polynomials are \emph{not} hitting sets.  In other words, the existence of an algebraic natural proofs barrier can be rephrased as whether PIT can be derandomized using \emph{succinct} pseudorandomness.  A completely analogous statement was proven by Williams~\cite{Williams16} in the boolean setting, where the existence of the \RR natural proofs barrier was shown equivalent to succinct derandomization of $\ZPE$, those problems solvable in zero-error $2^{O(n)}$-time.  However, that equivalence there is slightly more involved, while it is immediate here.

We now give the formal definition mirroring the above discussion, in the same generality of \autoref{defn:alg-nat-pf}.

\begin{definition}[Succinct Hitting Set]\label{defn:alg-PRF}
	Let $\cM\subseteq\F[x_1,\ldots,x_n]$ be a set of monomials $\cM=\{\vx^\va\}_\va$, and let the set $\Span(\cM)\eqdef\{\sum_{\vx^\va\in \cM} c_\va\vx^\va: c_\va\in\F\}$ be all linear combinations of these monomials.  Let $\cC\subseteq\Span(\cM)$ and $\cD\subseteq \F[\{c_\va\}_{\vx^\va\in \cM}]$ be classes of polynomials, where the latter is in $|\cM|$ many variables. 
	
	$\cC$ is a \demph{$\cC$-succinct hitting set for $\cD$} if $\cH\eqdef\{\coeff_\cM(f):f\in\cC\}$ is a hitting set for $\cD$.  That is, $D\in\cD$ is non-zero iff $D|_\cH$ is non-zero, that is, there is some $f\in\cC$ such that $D(\coeff_\cM(f))\ne 0$.
\end{definition}

To make our statements more concise, we often abbreviate the name of the class $\cC$ in a way which is understood from the context. For example, the modifier ``$s$-succinct'', with $s$ being an integer, will refer to a $\cC$-hitting set with $\cC$ being the class of circuits of size at most $s$. Similarly, $s$-$\SPS$-succinct will refer to $\cC$ being the class of depth-3 circuits of size at most $s$, and so on.

The above argument showing the tension between algebraic natural proofs and pseudorandom polynomials can be summarized in the following theorem, which follows immediately from the definitions.

\begin{theoremwp}\label{thm:natural-proofs-implies-no-succinct-hitting-set}
	Let $\cM\subseteq\F[x_1,\ldots,x_n]$ be a set of monomials $\cM=\{\vx^\va\}_\va$, and let the set $\Span(\cM)\eqdef\{\sum_{\vx^\va\in \cM} c_\va\vx^\va: c_\va\in\F\}$ be all linear combinations of these monomials.  Let $\cC\subseteq\Span(\cM)$ and $\cD\subseteq \F[\{c_\va\}_{\vx^\va\in \cM}]$ be classes of polynomials, where the latter is in $|\cM|$ many variables.
	
	Then there is an algebraic $\cD$-natural proof against $\cC$ iff $\cC$ is not a $\cC$-succinct hitting set for $\cD$.
\end{theoremwp}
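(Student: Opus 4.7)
The plan is to prove the theorem by a direct unpacking of the two definitions (\autoref{defn:alg-nat-pf} and \autoref{defn:alg-PRF}), since the equivalence is essentially tautological once one sees that ``$D$ is a distinguisher'' and ``$\cH \eqdef \{\coeff_\cM(f) : f \in \cC\}$ fails to hit $D$'' assert exactly the same two conditions.

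First I would prove the forward direction. Assume $D \in \cD$ is an algebraic $\cD$-natural proof against $\cC$. By \autoref{defn:alg-nat-pf}, $D$ is a non-zero polynomial in $\cD$, and $D(\coeff_\cM(f)) = 0$ for every $f \in \cC$. Setting $\cH \eqdef \{\coeff_\cM(f) : f \in \cC\}$, this says exactly that $D|_\cH \equiv 0$ while $D \not\equiv 0$ in $\cD$, so $\cH$ is not a hitting set for $\cD$. By \autoref{defn:alg-PRF}, $\cC$ is therefore not a $\cC$-succinct hitting set for $\cD$.

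For the reverse direction I would just run the implication backwards. Suppose $\cC$ is not a $\cC$-succinct hitting set for $\cD$, so by \autoref{defn:alg-PRF} the set $\cH \eqdef \{\coeff_\cM(f):f\in\cC\}$ is not a hitting set for $\cD$. By the definition of a hitting set, this means there exists a non-zero $D \in \cD$ with $D(\vecalpha) = 0$ for every $\vecalpha \in \cH$, i.e.\ $D(\coeff_\cM(f)) = 0$ for every $f \in \cC$. This is precisely conditions (\ref{defn:alg-nat-pf:nz}) and (\ref{defn:alg-nat-pf:vanish}) of \autoref{defn:alg-nat-pf}, so $D$ is an algebraic $\cD$-natural proof against $\cC$.

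Because the statement is a definitional equivalence with no quantitative content or hidden choices to make, I do not anticipate any real obstacle; the only thing to be careful about is to match the logical structure exactly (non-zeroness of $D$ corresponding to $D$ witnessing failure of $\cH$ to hit $\cD$, and vanishing on coefficient vectors corresponding to $D|_\cH \equiv 0$). No Schwartz--Zippel-type largeness argument is needed here, since largeness was already built into the framework via the insistence that $D$ be a non-zero polynomial.
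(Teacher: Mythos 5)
Your proof is correct and is exactly the argument the paper intends: the theorem is stated without proof because, as you observe, it follows immediately by unpacking \autoref{defn:alg-nat-pf} and \autoref{defn:alg-PRF}, with non-zeroness of $D$ and vanishing on $\{\coeff_\cM(f):f\in\cC\}$ matching precisely the failure of that set to hit $\cD$. No gaps; nothing further is needed.
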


Instantiating this claim with $\cM$ being the space of degree-$d$ monomials, we get the following quantitative version of the above.

\begin{corollarywp}
	Let $\cC\subseteq\totdeg$ be the class of $\poly(n,d)$-size circuits of total degree at most $d$.  Then there is an algebraic $\poly(\totdegN)$-natural proof against $\cC$ iff $\cC$ is not a $\poly(n,d)$-succinct hitting set for $\poly(\totdegN)$-size circuits in $\totdegN$ variables.
\end{corollarywp}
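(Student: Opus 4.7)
The plan is to deduce this corollary as a direct instantiation of the preceding \hyperref[thm:natural-proofs-implies-no-succinct-hitting-set]{theorem}, unwrapping the notational conventions introduced after \autoref{defn:alg-PRF}. Concretely, I would choose $\cM$ to be the set of all monomials $\vx^\va$ of total degree at most $d$ in $x_1,\ldots,x_n$, so that $\Span(\cM)=\totdeg$ and $|\cM|=\totdegN$. I would then take $\cC\subseteq\Span(\cM)$ to be the class of polynomials in $\totdeg$ computable by algebraic circuits of size $\poly(n,d)$, and take $\cD\subseteq \F[\{c_{\va}\}_{\vx^\va\in \cM}]$ to be the class of algebraic circuits of size $\poly(\totdegN)$ in $\totdegN$ many variables. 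This matches the abbreviation convention that ``$s$-succinct'' refers to $\cC$ being size-$s$ circuits (in the appropriate polynomial space), and that a ``$\poly(\totdegN)$-natural proof'' names the class $\cD$ of size-$\poly(\totdegN)$ circuits used for the distinguisher.

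With these identifications, the two clauses of the corollary are term-by-term restatements of the two clauses of the theorem. On one side, ``there is an algebraic $\poly(\totdegN)$-natural proof against $\cC$'' is exactly the assertion that there exists a nonzero $D\in\cD$ with $D(\coefftot(f))=0$ for every $f\in\cC$, as in \autoref{defn:alg-nat-pf}. On the other side, ``$\cC$ is not a $\poly(n,d)$-succinct hitting set for $\poly(\totdegN)$-size circuits in $\totdegN$ variables'' is exactly the statement that $\cC$ fails to be a $\cC$-succinct hitting set for $\cD$ in the sense of \autoref{defn:alg-PRF}, that is, the set $\{\coefftot(f):f\in\cC\}$ does not hit some nonzero $D\in\cD$. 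Applying the theorem to $(\cM,\cC,\cD)$ then yields the corollary.

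I do not anticipate any real obstacle; the argument is purely definitional bookkeeping. The only thing to double-check is that the shorthand in the corollary is parsed consistently with the conventions above: the $\poly(n,d)$ parameter governs the circuit size for the class of polynomials being tested (i.e., for $\cC$), while the $\poly(\totdegN)$ parameter governs the circuit size of the distinguisher (i.e., for $\cD$). No quantitative estimate, no explicit construction, and no appeal to field size or characteristic is needed, since both of these are already encoded into the theorem being instantiated.
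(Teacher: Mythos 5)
Your proposal is correct and matches the paper's intent exactly: the corollary is stated without proof precisely because it is the instantiation of \autoref{thm:natural-proofs-implies-no-succinct-hitting-set} with $\cM$ the degree-$d$ monomials, $\cC$ the $\poly(n,d)$-size circuits in \totdegs, and $\cD$ the $\poly(\totdegN)$-size circuits, just as you describe. Your parsing of the two size parameters ($\poly(n,d)$ for $\cC$, $\poly(\totdegN)$ for $\cD$) is consistent with the paper's naming conventions, so nothing further is needed.
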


In the common regime when $d=\poly(n)$, we have that $\poly(n)=\polylog(\totdegN)$.  That is, this existence of an algebraic natural proofs barrier is equivalent to saying that coefficient-vectors of polylogarithmic-size circuits (in polylogarithmic many variables) form a hitting set of polynomial-size.

With this equivalence in hand, we can now phrase the question of an algebraic natural proofs barrier.

\begin{question}[Algebraic Natural Proofs Barrier]\label{q:barrier}
	Is there a $\polylog(N)$-succinct hitting set for circuits of $\poly(N)$-size?
\end{question}

Again, we note that \autoref{q:barrier} was also raised by Grochow, Kumar, Saks, and Saraf~\cite{GKSS17}, who presented a definition similar to \autoref{defn:alg-PRF} and also observed the implication in \autoref{thm:natural-proofs-implies-no-succinct-hitting-set}.

\paragraph{Succinct Generators:} 

While the above equivalence already suffices for studying the barrier, the notion of a hitting set is sometimes fragile. A more robust way to obtain hitting sets for a class $\cD\subseteq\F[c_1,\ldots,c_N]$ is to obtain a \emph{generator}, which is a polynomial map $\cvG:\F^\ell\to\F^N$ such that $D\in\cD$ is a non-zero iff $D\circ \cvG\not\equiv 0$, that is, the composition $D(\cvG(\vy))\ne 0$ is non-zero as a polynomial in $\vy$.  Here one measures the quality of the generator by asking to minimize the seed-length $\ell$.  By polynomial interpolation, it follows that constructing small hitting sets is equivalent to constructing generators with $\ell$ small, see for example Shpilka-Yehudayoff~\cite{sy}.  

However, in our setting we want \emph{succinct} generators so that the polynomial-map $\cvG$ is a coefficient vector of a polynomial $G(\vecx, \vecy)$ computable by a small algebraic circuit.  In particular, converting a succinct hitting set $\cH$ to a generator using the standard interpolation methods would give a generator which has circuit size $\poly(|\cH|)$.  However, as we are trying to hit polynomials on $N$ variables, this would yield a $\poly(N)$-size generator whereas we would want a generator of complexity $\polylog(N)$.  As such, we now define succinct generators and give a tighter relationship with succinct hitting sets.

\begin{definition}\label{defn:succ-gen}
	Let $\cM\subseteq\F[x_1,\ldots,x_n]$ be a set of monomials $\cM=\{\vx^\va\}_\va$, and let the set $\Span(\cM)\eqdef\{\sum_{\vx^\va\in \cM} c_\va\vx^\va: c_\va\in\F\}$ be all linear combinations of these monomials.  Let $\cC\subseteq\Span(\cM)$ and $\cD\subseteq \F[\{c_\va\}_{\vx^\va\in \cM}]$ be classes of polynomials, where the latter is in $|\cM|$ many variables.  Further, let $\cC'\subseteq\F[x_1,\ldots,x_n,y_1,\ldots,y_\ell]$ be another class of polynomials.

	We say that a polynomial map $\cvG:\F^\ell\to\F^\cM$ is a \demph{$\cC$-succinct generator for $\cD$ computable in $\cC'$} if
	\begin{enumerate}
		\item The polynomial $G(\vx,\vy)\eqdef \sum_{\vx^\va\in\cM} \cG_{\vx^\va}(\vy)\cdot\vx^\va$ is a polynomial in $\cC'$, where $\cG_{\vx^\va}(\vy)$ is the polynomial computed by the $\vx^\va$-coordinate of $\cvG$.\label{defn:succ-gen:indexing}
		\item For every value $\vaa\in\F^\ell$, the polynomial $G(\vx,\vaa)\in\cC$.\label{defn:succ-gen:output}
		\item $\cvG$ is a generator for $\cD$. That is, $D\in\cD$ is a non-zero polynomial in $\F[\vc]$ iff $D\circ \cvG\not\equiv 0$ in $\F[\vy]$, meaning that $D(\coeff_\cM (G(\vx,\vy)))\ne 0$ as a polynomial in $\F[\vy]$, where we think of $G(\vx, \vy)$ as a polynomial in the ring $\left( \F[\vy] \right) [\vx]$  and take these coefficients with respect to the $\vx$ variables, so that $\coeff_\cM (G(\vx,\vy))\in\F[\vy]^\cM$.\label{defn:succ-gen:gen}
			\qedhere
	\end{enumerate}
\end{definition}

Conditions \eqref{defn:succ-gen:output} and \eqref{defn:succ-gen:gen} are equivalent, over large enough fields, to the property that the output of the generator $\cvG(\vx,\F^\ell)=\{G(\vx,\vaa): \vaa\in\F^\ell\}$ is a $\cC$-succinct hitting set for $\cD$. However, the generator result is a priori stronger as it says that the hitting set can be succinctly indexed by a polynomial in $\cC'$.  

Also, note that the $\cC'$ computability of the generator implies $\cC'$-succinctness, that is, that its image $\{G(\vx,\vaa):\vaa\in\F^\ell\}$ are all circuits which are $\cC'$-circuits, at least assuming that $\cC'$ is a class of polynomials which is closed under substitution.  However, sometimes the actual succinctness $\cC$ can be more stringent than $\cC'$ for restricted classes of computation. Since the implication regarding barriers to lower bounds only concerns the class $\cC$, we often omit mentioning $\cC'$ and only talk about $\cC$-succinct generators for $\cD$.

This definition bears a slight resemblance to Mulmuley's~\cite{Mulmuley12} definition of an ``explicit variety''. A discussion about the connections between our work and Geometric Complexity Theory appears in \autoref{sec:GCT}.

We now give our first result, which uses the construction of a universal circuit to show that there is an explicit universal construction of a succinct generator, that is, this circuit is a succinct generator if there are \emph{any} succinct hitting sets.  Further, this shows that \emph{any} succinct hitting set (even infinite) implies a quasipolynomial deterministic black-box PIT algorithm.  To make this theorem clear, let $\VP_m$ denote the class of small low-degree circuits in $m$ variables.

\begin{theorem*}[Informal summary of \autoref{sec:universal}]
	There is an explicit $\polylog(N)$-size circuit which is a $\VP_{\polylog(N)}$-succinct generator for $\VP_N$ iff there is a $\VP_{\polylog(N)}$-succinct hitting set for $\VP_N$. Further, the existence of any $\VP_{\polylog(N)}$-succinct hitting set for $\VP_N$ implies an explicit $\poly(N)^{\polylog(N)}$-size hitting set for $\VP_N$.
\end{theorem*}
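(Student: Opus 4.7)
The plan is to construct the universal generator from a classical \emph{universal algebraic circuit}, and then read off both directions of the equivalence together with the derandomization consequence by elementary interpolation. Recall that for every $n,s$ there is an explicit circuit $U_{n,s}(\vx,\vy)$ of size $\poly(n,s)$ and with $\poly(n,s)$ many $\vy$-parameters such that, as $\vaa$ ranges over $\F^{\poly(n,s)}$, the polynomials $U_{n,s}(\vx,\vaa)$ realize every $n$-variate algebraic circuit of size at most $s$ (Raz's universal circuit, or Ben-Or--Cleve style constructions). I would choose $n$ and the degree $d$ so that $\binom{n+d}{d}=N$ with both $n,d=\polylog(N)$, and take $s=\polylog(N)$, so that $U\eqdef U_{n,s}$ itself has size $\polylog(N)$ and at most $\polylog(N)$ many $\vy$-variables.

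Next I would define the candidate map by extracting coefficients: set
\[
\cvG(\vy)\eqdef \coeff_\cM(U(\vx,\vy))\in\F[\vy]^\cM,
\]
where $\cM$ is the monomial set for $\totdeg$ (or $\totdeghom$/$\inddeg$ in the obvious analogues). By construction, condition \eqref{defn:succ-gen:indexing} of \autoref{defn:succ-gen} is met since $G(\vx,\vy)=U(\vx,\vy)$ is a $\polylog(N)$-size circuit, and condition \eqref{defn:succ-gen:output} holds since every substitution $U(\vx,\vaa)$ is a $\polylog(N)$-size circuit, hence in $\VP_{\polylog(N)}$.

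The biconditional then follows directly. The ``only if'' direction is immediate from \autoref{thm:natural-proofs-implies-no-succinct-hitting-set}, since the image of $\cvG$ is by definition a $\VP_{\polylog(N)}$-succinct hitting set. For the ``if'' direction, suppose some $\VP_{\polylog(N)}$-succinct hitting set $\cH$ exists for $\VP_N$. Every $f\in\cH$ has a $\polylog(N)$-size circuit, so by universality there exists $\vaa_f\in\F^\ell$ with $U(\vx,\vaa_f)=f$; hence $\{\coeff_\cM(U(\vx,\vaa_f)):f\in\cH\}\supseteq\cH$ lies in the image of $\cvG$. Therefore $\cvG$ hits every non-zero $D\in\VP_N$, verifying condition \eqref{defn:succ-gen:gen} and making $\cvG$ a succinct generator.

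For the derandomization consequence, I would apply the standard polynomial-interpolation hitting set to the composed map. Any non-zero $D\in\VP_N$ has degree $\poly(N)$, and $U$ has degree at most its size, i.e.\ $\polylog(N)$, so $D\circ \cvG$ is a non-zero polynomial in $\vy$ of total degree $\poly(N)\cdot\polylog(N)=\poly(N)$ in $\ell=\polylog(N)$ variables. Choosing any grid $S^\ell$ with $|S|>\deg(D\circ \cvG)$ then yields, by Schwartz--Zippel, an explicit hitting set $\{\cvG(\vaa):\vaa\in S^\ell\}$ for $\VP_N$ of size $\poly(N)^{\polylog(N)}$, as claimed. The only real design choice is the universal-circuit construction and the precise matching of the parameters $(n,d,s)$ to the monomial space of size $N$; I expect no serious obstacle beyond careful bookkeeping of these parameters and confirming that the chosen universal circuit is syntactically $\polylog(N)$-small.
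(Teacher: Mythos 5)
Your overall route is the same as the paper's: an explicit universal circuit $U(\vx,\vy)$ gives the candidate generator, the ``hitting set $\Rightarrow$ generator'' direction follows because every member of a succinct hitting set lies in the image of $U$, and the $\poly(N)^{\polylog(N)}$-size hitting set comes from interpolating the $\vy$-variables of $D\circ\cvG$ over a small grid (this is exactly \autoref{fact:universal}, \autoref{res:universal:hit-to-gen} and \autoref{res:universal:gen-to-hit}).

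However, one step of your justification is genuinely wrong as stated, and it happens to be the crux of the whole construction: you claim ``$U$ has degree at most its size, i.e.\ $\polylog(N)$.'' For algebraic \emph{circuits} (as opposed to formulas) this is false --- repeated squaring gives size-$s$ circuits of degree $2^{\Omega(s)}$ --- so a generic $\polylog(N)$-size universal circuit could have exponential $\vy$-degree, in which case $\deg(D\circ\cvG)$ would be exponential in $N$ and the interpolation grid $S^\ell$ would have size $\poly(N)^{\poly(N)}$ rather than $\poly(N)^{\polylog(N)}$. The argument therefore needs a universal circuit with an \emph{explicitly guaranteed} bound $\deg_\vy U\le\poly(d)=\polylog(N)$, together with $\deg_\vx U\le d$ so that every substitution $U(\vx,\vaa)$ actually lies in \totdeg (and not merely in some larger space, which is also needed for your condition \eqref{defn:succ-gen:output} and is why, as the paper discusses, determinant-projection candidates do not suffice). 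Raz's universal circuit provides exactly these two degree guarantees, whereas, e.g., B\"urgisser's universal construction has exponential $\vy$-degree and would break the interpolation step; so you should commit to that specific construction (your vague alternative of ``Ben-Or--Cleve style constructions'' is not known to have the required properties) and replace the ``degree $\le$ size'' claim by the stated degree bounds of \autoref{fact:universal}. With that repair your proof matches the paper's.
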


Note that Aaronson and Drucker~\cite{aaronson-blog-post} proposed a candidate algebraic pseudorandom function based on generic projections of determinants. Their construction does not seem sufficient for the above result, as discussed in \autoref{sec:universal}.

\subsection{Evidence for Pseudorandom Polynomials and Our Results}

Having now given our formalization of algebraic natural proofs and the corresponding barrier, we now investigate evidence for such barriers.  To understand these barriers, it is helpful to remind ourselves of the evidence in the boolean setting.

\paragraph{Boolean Complexity:} When speaking of a natural proofs barrier, it is helpful to remember that such barriers are inherently \emph{conditional} (as opposed to relativization (\cite{BGS75}) and algebraization (\cite{AW09}), which are unconditional).  As such, our belief in such barriers rests on the plausibility of these conditional assumptions. We now review two sources of evidence, cryptographic and complexity-theoretic.

The \RR paper showed that there is a natural proofs barrier under the assumption of the existence of pseudorandom functions with exponential security.  As discussed in the introduction, there are two good reasons to believe the plausibility of this assumption.  First, is that there are many well-studied candidate constructions which are believed to have this security.  Second, is that there is a web of security-preserving reductions between cryptographic notions, in particular showing that such pseudorandom functions follow from pseudorandom generators with exponential security (\cite{GGM}) or even one-way functions with exponential security (\cite{HILL}).  One-way functions are the most basic cryptographic object, so that essentially the natural proofs barrier holds unless cryptography fails.\footnote{Furthermore, there exist problems, such as the discrete logarithm problem, for which the natural proof barrier holds unconditionally.}

The above cryptographic evidence already seems strong enough, but it is worth mentioning another evidence based on complexity-theoretic derandomization.  That is, for many classes of restricted computation there have been pseudorandom generators $\cvG:\bits^\ell\to\bits^n$ that fool these restricted classes even when $\ell=\polylog(n)$. For example, $\AC^0$ is fooled by the Nisan-Wigderson~\cite{nw94} generator instantiated with parity (\cite{Nisan91b}) as well as by $\polylog(n)$-wise independence (\cite{Braverman10}), $\RL$ is fooled by Nisan's~\cite{Nisan92} generator, and $\eps$-bias spaces fool linear polynomials over $\F_2$ (\cite{NaorNaor93,AGHP92}).  In each of these cases it turns out that the generators are in fact pseudorandom \emph{functions} that fool these restricted classes, in that for every seed $\ell$, $\cvG(\ell)$ can be thought of as a truth table of a function $\cG_\ell:\bits^{\log n}\to\bits$, such that $\cG_\ell$ can actually be computed in $\poly(\ell,\log n)=\polylog(n)$-time (as $\ell=\polylog(n)$).  That is, these derandomization results actually provide succinct derandomization in the sense of Williams~\cite{Williams16}. In fact, \RR explicitly noted how Nisan's~\cite{Nisan91b} pseudorandom generator for $\AC^0$ is a pseudorandom function (with an application to how the lower bounds for $\AC^0[2]$ are thus provably more complicated than those for $\AC^0$). It can be seen that fooling a restricted class of computation $\cC$ using the Nisan-Wigderson~\cite{nw94} generator, when the hard function $f$ against $\cC$ is actually efficiently computable in $\mathsf{P}$, gives rise to a pseudorandom function \emph{unconditionally} secure against $\cC$. In this case, each output of the Nisan-Wigderson generator can be thought of as a truth table of a simple function; this follows from the assumption on $f$ and the fact that designs are efficiently computable, and see \cite{CIKK16} for further discussion.

\paragraph{Algebraic Complexity:} Having reviewed the evidence for a natural proofs barrier in the boolean setting, we can then ask: what evidence is there for an algebraic natural proofs barrier?  Unfortunately, such evidence has been much more difficult to obtain.

Indeed, the cryptographic evidence in the boolean setting seems less relevant to the algebraic world.  Direct attempts to algebrize the underlying cryptographic objects will only yield \emph{functions} that seem pseudorandom, where as we need \emph{polynomials}. While our universal construction (\autoref{sec:universal}) gives a universal candidate pseudorandom polynomial, we lack the corresponding web of reductions that reduces the analysis of such candidates to more traditional and well-studied conjectures.  In particular, the construction of Goldreich, Goldwasser and Micali~\cite{GGM} that converts a pseudorandom generator to a pseudorandom function seems to have no algebraic analogue (\cite{aaronson-blog-post}) as this construction applied to polynomials produces polynomials of exponential degree and thus do not live in the desired space of low-degree polynomials $\totdeg$. 

Given the complete lack of algebraic-cryptographic evidence for an algebraic natural proofs barrier, it is then natural to turn to \emph{complexity-theoretic} evidence in the form of succinct derandomization, which constitutes our results.

\subsection{Our Results}\label{sec:results}

In this work we present the first unconditional succinct derandomization of various restricted classes of algebraic computation, giving the first evidence at all for an algebraic natural proofs barrier.  It is worth noting that in the boolean setting, as discussed above, many derandomization results are \emph{already} succinct.  It turns out that, to the best of our knowledge, all existing derandomization for restricted algebraic complexity classes are \emph{not} succinct. 

A primary reason for this is that to obtain the best derandomization for polynomials, one typically wants to use univariate generators as this produces more randomness-efficient results (much in the same way that univariate Reed-Solomon codes have better distance than multi-variate Reed-Muller codes). However, univariate polynomials are not $\VP$-succinct essentially by definition as $\VP$ looks for multivariate polynomials where the degree is commensurate with the number of variables. Another reason is that while hardness-vs.-randomness can produce succinct derandomization in the boolean setting as mentioned above, the known algebraic hardness-vs.-randomness paradigm (\cite{ki03}) is much harder to instantiate for restricted classes of algebraic computation.

However, it seems highly plausible that by redoing existing constructions one can obtain succinct derandomization, and as such we posit the following meta-conjecture. 

\begin{metaconjecture}\label{conj:complexity-barrier}
	For any restricted class $\cD\subseteq\F[c_1,\ldots,c_N]$ for which explicit constructions of subexponential-size hitting sets are currently known, there are subexponential-size hitting-sets which are $\polylog(N)$-succinct, where succinctness is measured with respect to one of the spaces of polynomials $\totdeg$, $\totdeghom$, or $\inddeg$.
\end{metaconjecture}

In this work we establish this meta-conjecture for many, but not all, known derandomization results for restricted classes of algebraic circuits.  We obtain succinctness with respect to computations in the space of multilinear polynomials \inddegm. In some cases similar results could be obtained with respect to the space of total degree \totdeg, but we omit discussion of these techniques as the \inddegm results are cleanest. All of our succinct derandomization results will be via succinct generators, but as the hitting sets have succinctness even beyond the succinctness of the generator we will focus on presenting the succinctness of the hitting sets instead.

We now list our results, but defer the exact definitions of these models to the relevant sections. We begin with succinct derandomization covering many of the hitting-set constructions for constant-depth circuits with various restrictions. These formulas will be fooled by hitting sets which are themselves depth-3 formulas, but of polylogarithmic complexity.

\begin{theorem}\label{thm:succinct-results}
	In the space of multilinear polynomials \inddegm, the set of $\poly(\log s,n)$-size multilinear $\SPS$ formulas is a succinct hitting set for $N=2^n$-variate size-$s$ computations of the form
	\begin{itemize}
		\item $\Sigma^{O(1)}\Pi\Sigma$ formulas (\autoref{subsec:spsk})
		\item $\SPS$ formulas of transcendence degree $\le O(1)$ (\autoref{subsec:trdeg})
		\item Sparse polynomials (\autoref{subsec:sparse})
		\item $\SMESP{O(1)}$-formulas (\autoref{subsec:smesp})
		\item Commutative roABPs (\autoref{subsec:comm-roabp})
		\item Depth-$O(1)$ Occur-$O(1)$ formulas (\autoref{sec:occurk})
		\item Arbitrary circuits composed with sparse polynomials of transcendence degree $O(1)$ (\autoref{subsec:gen-sparse-low-trans}).
	\end{itemize}
\end{theorem}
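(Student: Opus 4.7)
The theorem bundles seven independent claims, each established in its own subsection; here I sketch the unifying plan. For each class $\cC$ on the list, the literature supplies an explicit generator $\cvG \colon \F^\ell \to \F^N$ of small seed length $\ell = O(\log s + n)$. My goal is to realize $\cvG$ as the $\vecx$-coefficient vector of a polynomial
\[
    G(\vecx, \vecy) \;=\; \sum_{S \subseteq [n]} \cvG_S(\vecy)\prod_{i \in S} x_i,
\]
that is multilinear in $\vecx = (x_1, \ldots, x_n)$ and computable by a multilinear $\Sigma\Pi\Sigma$ formula of size $\poly(\log s, n)$; once this is done, \autoref{defn:succ-gen} immediately yields that $G$ is a succinct generator for $\cC$ and, hence, that its image is the required succinct hitting set.

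The main template is a product of linear forms in $\vecx$: whenever $\cvG_S(\vecy) = \prod_{i \in S} \phi_i(\vecy)$ factors multiplicatively over $S$, the polynomial
\[
    G(\vecx, \vecy) \;=\; \prod_{i=1}^n \bigl(1 + \phi_i(\vecy)\,x_i\bigr)
\]
is a $\Pi\Sigma$ formula --- hence trivially a $\Sigma\Pi\Sigma$ formula --- of size $\poly(\log s, n)$ provided each $\phi_i$ is. A slightly richer template is required when the generator arises from a small-width read-once algebraic branching program: in that case $\cvG_S(\vecy)$ factors as a matrix product $\langle u, \prod_i M_i(\vecy, S_i)\, v \rangle$, and the standard unrolling of roABPs yields a multilinear-in-$\vecx$ $\Sigma\Pi\Sigma$ formula of size $\poly(n, w, \log s)$, where $w$ is the width.

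Each subsection then recasts a known generator into one of these two templates. For sparse polynomials (\autoref{subsec:sparse}) the Klivans--Spielman-type map $c_i \mapsto y^{i \bmod p}$ is rebuilt along the binary decomposition of the index $i$; for $\Sigma^{O(1)}\Pi\Sigma$ (\autoref{subsec:spsk}), bounded-transcendence $\Sigma\Pi\Sigma$ (\autoref{subsec:trdeg}) and $\SMESP{O(1)}$ formulas (\autoref{subsec:smesp}) the relevant Shpilka--Volkovich-style generators are sparsified so that their action on each subset $S \subseteq [n]$ stays $\Sigma\Pi\Sigma$-computable at the desired size; and for commutative roABPs (\autoref{subsec:comm-roabp}) together with the bounded-depth/bounded-occurrence models (\autoref{sec:occurk}) the underlying hashing is realized as a short matrix product, after which the second template applies. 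The central technical obstacle across every case is honoring three constraints at once --- multilinearity in $\vecx$, depth exactly three, and size $\poly(\log s, n)$ --- because naive translations of the known hitting sets either introduce high individual degree in $\vecx$ or blow the depth past three, in either event landing outside the target class; the case of generic circuits over sparse polynomials of small transcendence degree (\autoref{subsec:gen-sparse-low-trans}) requires, in addition, combining the second template with a Jacobian-based reduction to reconstruct the outer sparse composition succinctly.
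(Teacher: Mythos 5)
Your high-level template---realize each known generator as the $\vx$-coefficient vector of a small multilinear depth-3 polynomial, with $\prod_{i=1}^n\bigl(1+\phi_i(\vy)x_i\bigr)$ and small sums thereof as the basic gadget---is indeed the paper's strategy (this is exactly the shape of the succinct rank condenser of \autoref{con:succinct-rk-cond} and the succinct Shpilka--Volkovich polynomial of \autoref{construction:ssv}). But two of your instantiations have genuine gaps. For sparse polynomials you propose to rebuild the Klivans--Spielman map $c_i\mapsto t^{\,i \bmod p}$ (really $t^{\,k^i \bmod p}$) ``along the binary decomposition of the index $i$.'' This is precisely where the binary-decomposition trick breaks: the reduction modulo $p$ destroys the multiplicative structure over the binary digits of $i$, so the coefficient map no longer factors as $\prod_{i\in S}\phi_i(\vy)$ and your first template does not apply; moreover, dropping the modular reduction leaves a map that no longer isolates monomials of an $s$-sparse polynomial. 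The paper explicitly identifies this as the main construction it does \emph{not} know how to make succinct (\autoref{sec:discussion}), and for \autoref{subsec:sparse} it takes a different route entirely: it gives up polynomial-size hitting sets and uses the \emph{shifted} succinct SV generator, whose hitting property rests on the structural fact (\autoref{lem:small-support-sparse}) that an $s$-sparse polynomial shifted by the all-ones vector has a monomial of support at most $\log s$; the shift itself is succinct because its coefficient vector is that of $\prod_{i=1}^n(x_i+1)$.

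The second gap is your ``second template'': you claim that when the coefficient map factors as a width-$w$ matrix product, standard unrolling yields a multilinear $\Sigma\Pi\Sigma$ formula of size $\poly(n,w,\log s)$. That is false---expanding a width-$w$, length-$n$ read-once branching program as a sum of products of linear forms costs $w^{\Theta(n)}$---so this route cannot deliver the $\poly(\log s,n)$-size depth-3 succinctness the theorem asserts; at best it gives roABP-succinctness, which is exactly the weaker, explicitly non-succinct result the paper separates out into \autoref{subsec:roabp} and excludes from this theorem. For commutative roABPs the paper never unrolls anything: it shows (via the rank-concentration theorem of Forbes--Saptharishi--Shpilka) that the succinct SV generator with seed $O(\log w)$ already hits width-$w$ commutative roABPs, keeping the hitting set inside $\poly(\log w,n)$-size multilinear depth-3 formulas; the occur-$k$ and sparse-low-transcendence-degree classes are handled by adding the succinct rank condenser to the shifted succinct SV generator and importing the hitting property from the Agrawal et al.\ and Beecken--Mittmann--Saxena (Jacobian) results. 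Relatedly, the $\Sigma^{O(1)}\Pi\Sigma$ and bounded-transcendence-degree cases are proved with the succinct rank condenser (Gabizon--Raz/Karnin--Shpilka rank bounds), not with ``sparsified SV-style generators'' as you suggest, though that misattribution is minor compared with the two gaps above.
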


We now conclude with a weaker result, which is not truly succinct in that the hitting set is of complexity commensurate with the class being fooled.  However, this result is for fooling classes of algebraic computation which while restricted, go beyond constant-depth formulas, and as such our result is still non-trivial. This class of computation is known as \emph{read-once oblivious algebraic branching programs (roABPs)}, which can be seen as an algebraic version of $\RL$. 

\begin{theorem}[\autoref{subsec:roabp}]
	In the space of multilinear polynomials \inddegm, the set of width-$w^2$ length-$n$ roABPs is a succinct hitting set for width-$w$ and length-$N=2^n$ roABPs with a monomial compatible ordering of the variables.
\end{theorem}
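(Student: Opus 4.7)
The plan is to exhibit an explicit $\cC$-succinct generator $G(\vx, \vy)$ for the class $\cD$ of width-$w$ length-$N = 2^n$ roABPs with monomial-compatible ordering, following \autoref{defn:succ-gen}. A natural candidate is the ``rank-$w^2$'' multilinear polynomial
\[
G(\vx, \vy) \;=\; \sum_{j=1}^{w^2} \alpha_j \prod_{i=1}^{n} (1 + y_{j,i} x_i),
\]
where $\vy = (y_{j,i})_{j \in [w^2],\, i \in [n]}$ and the $\alpha_j$'s are drawn from an auxiliary univariate hitting set (or treated as further indeterminates). For every setting of $\vy$, this is a sum of $w^2$ width-$1$ multilinear roABPs in $\vx$ (each reading $x_1,\ldots,x_n$ in order), so $G(\vx, \vy)$ lies in the class of width-$w^2$ length-$n$ multilinear roABPs, meeting the succinctness requirement. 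The corresponding $\vx$-coefficient vector is
\[
\coeff_\cM(G)(\va) \;=\; \sum_{j=1}^{w^2} \alpha_j \prod_{i=1}^{n} y_{j,i}^{a_i}, \qquad \va \in \{0,1\}^n.
\]

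The main step is to verify that this polynomial map hits every non-zero $D \in \cD$. The argument leverages the standard structural fact that a width-$w$ roABP has coefficient/evaluation dimension at most $w^2$ at every point in its variable order. I would proceed by induction on $n$, using the monomial-compatible ordering to partition $\vc = (\vc_0, \vc_1)$ according to the $x_1$-degree of the indexing monomial so that $\vc_0$ is read entirely before $\vc_1$ by $D$. The roABP structure then yields a rank-$w$ bilinear decomposition $D(\vc) = \sum_{k=1}^{w} P_k(\vc_0)\, Q_k(\vc_1)$ where each $P_k$ (resp.\ $Q_k$) is a width-$w$ length-$2^{n-1}$ roABP on the remaining coordinates with an inherited monomial-compatible ordering. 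The restriction $G_0 \eqdef G|_{x_1 = 0}$ and the coefficient $G_1$ of $x_1$ in $G$ are again width-$w^2$ length-$(n-1)$ multilinear roABPs of the same rank-$w^2$ form, and the problem reduces to showing
\[
\sum_{k=1}^{w} P_k\!\left(\coeff(G_0)\right) \cdot Q_k\!\left(\coeff(G_1)\right) \;\not\equiv\; 0
\]
as a polynomial in the $\vy$-parameters.

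The main obstacle is controlling this bilinear sum: hitting each $P_k$ and each $Q_k$ individually, as provided directly by the inductive hypothesis, does not preclude cancellation across the index $k$. I expect this to be handled by a strengthened inductive statement asserting that the rank-$w^2$ generator also hits the richer class of non-zero ``bilinear forms'' of width-$w$ roABPs --- equivalently, width-$w$ roABPs whose final output layer is an auxiliary vector of indeterminates rather than a fixed vector. Applying this stronger hypothesis twice --- first to the $P_k$'s to produce a non-zero scalar vector $(P_k(\coeff(G_0)))_k$ playing the role of weights, and then to the $Q_k$'s against this weight vector --- yields the required non-vanishing. Crucially, the monomial-compatibility of the ordering on $\vc$ is what enables the clean factorization $D = \sum_k P_k Q_k$ across the $x_1$-degree split; absent this hypothesis, the two halves $\vc_0$ and $\vc_1$ would be interleaved through the layers of $D$, and a rank-$w^2$ succinct generator would no longer suffice.
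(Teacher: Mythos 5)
Your proposal diverges fundamentally from the paper's route, and the divergence is where the gap lies. The paper never proves a new hitting-set theorem for roABPs at all: it takes the Forbes--Shpilka generator (\autoref{lem:FS-generator}), whose hitting property is imported as a black box from \cite{FS13}, writes down an explicit polynomial $P^{\FS}(\vx,\vy)$ whose $\vx$-coefficient vector coincides with that generator (\autoref{cl:succinct-FS-is-FS}), checks that every fixing $\vy=\vecalpha$ is computable by a width-$w^2$ roABP (\autoref{cl:succinct-FS-succinct}), and finally handles monomial-compatible orders by permuting $x_1,\ldots,x_n$ and taking the union over all $n!$ orderings. You instead propose a much simpler candidate --- coefficient vectors of $\sum_{j=1}^{w^2}\alpha_j\prod_i(1+y_{j,i}x_i)$, i.e.\ tensors of rank at most $w^2$ --- and undertake to prove its hitting property from scratch. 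That property is not known, and it is exactly the hard part: if it were provable this easily, one would obtain a multilinear $\Sigma\Pi\Sigma$-succinct (top fan-in $w^2$) hitting set, considerably cleaner than what the paper manages, which is why the authors fall back on the intricate Forbes--Shpilka construction with its Lagrange interpolation polynomials, high-order element $\omega$, and doubled exponents; those ingredients implement the rank-condensing between halves that your construction has no substitute for.

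The concrete gap is in your inductive step. After splitting on the most significant variable you must show $\sum_{k=1}^{w} P_k(\coeff(G_0))\,Q_k(\coeff(G_1))\not\equiv 0$, but $G_0$ and $G_1$ are built from the \emph{same} parameters: $\coeff(G_0)=\sum_j\alpha_j\bigotimes_{i\ge 2}(1,y_{j,i})$ and $\coeff(G_1)=\sum_j\alpha_j y_{j,1}\bigotimes_{i\ge 2}(1,y_{j,i})$ share the $\alpha_j$ and all $y_{j,i}$ for $i\ge 2$. Your plan to ``apply the strengthened hypothesis twice --- first to the $P_k$'s to produce a non-zero scalar vector of weights, then to the $Q_k$'s against this weight vector'' does not typecheck: the weights $(P_k(\coeff(G_0)))_k$ are polynomials in precisely the variables that are simultaneously being substituted into the $Q_k$'s, so you cannot fix them to scalars without also fixing the second-half substitution, and you cannot invoke the inductive hypothesis for $\sum_k\lambda_k Q_k$ with $\lambda$ free. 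Ruling out cancellation across the index $k$ under such a correlated substitution is the whole content of roABP black-box PIT (rank concentration / rank condensers between layers, as in \cite{FS13,FSS14}); the single extra variable $y_{j,1}$ multiplying the $j$-th component gives no mechanism for it, and nothing in your sketch supplies one. Absent either a proof that rank-$w^2$ tensors hit width-$w$ roABPs (which would be a new result) or a retreat to the paper's strategy of making a known generator succinct, the argument does not go through.
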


As commented above, while the \emph{length} of the roABPs whose coefficient vectors define the hitting set is merely $n=\log(N)$, the \emph{width} is as large as $w^2$, while a truly succinct hitting set would require the width to be $\polylog (w)$.

\subsection{Techniques}

We now discuss the techniques we use to obtain our succinct hitting sets.  The first technique is to carefully choose \emph{which} existing hitting sets constructions to make succinct.  In particular, one would naturally want to start with the simplest restricted classes of circuits to fool, which would be sparse polynomials.  A well-known hitting-set construction is due to Klivans and Spielman~\cite{ks01}, which is often used in hitting-set constructions for more sophisticated algebraic computation.  However, as we explain in \autoref{sec:discussion}, it actually seems difficult to obtain a succinct version of this hitting set (or variants of it).

Instead, we observe that, due to the results of \autoref{sec:universal} mentioned above, we need not focus on the \emph{size} of the hitting sets but rather only on their succinctness.  That is, to obtain succinct hitting sets for $s$-sparse polynomials we need not look at the $\poly(s)$-size hitting sets of Klivans and Spielman~\cite{ks01} but can also consider $\poly(s)^{\polylog(s)}$-size hitting sets which may be more amenable to being made succinct.  In particular, there is a generator of Shpilka and Volkovich~\cite{SV15} which can be seen as an algebraic analogue of $k$-wise independence.  It has been shown that this generator fools sparse polynomials with a hitting set of $\poly(s)^{\polylog(s)}$-size, and we show how to modify this result so the generator is also succinct.  Similarly, there is a family of hitting sets which use the \emph{rank condensers} of Gabizon and Raz~\cite{GR08} to produce a pseudorandom linear map that reduces from $n$ variables down to $r\ll n$ variables.  We also suitably modify this construction to be succinct.  Between these two core constructions, as well as their combination, we are able to make succinct much of the existing hitting set literature.

We now briefly illustrate the simplest example of how we take existing constructions and make them succinct.  Suppose one wanted to hit a non-zero linear polynomial $D(\vc)=\alpha_1c_1+\cdots+\alpha_Nc_N$.  A standard approach would be to replace $c_i\leftarrow z^i$ where $z$ is a new variable, as one now obtains a univariate polynomial $D(z)=\alpha_1z^1+\cdots+\alpha_Nz^N$ which is clearly still non-zero.  Now, however, there is simply one variable of degree $N$ so that interpolation over this variable yields a hitting set of size $N+1$, which is essentially optimal in terms of hitting set size.  To see how to make this succinct, note that the resulting vectors in the hitting set have the form $(\beta,\beta^2,\ldots,\beta^N)$ for $\beta\in\F$.  For $N=2^n$ so that we can identify $\F^N$ as the coefficient vectors of multilinear polynomials $\F[x_0,\ldots,x_{n-1}]_{\ideg}^1$, we can see that such vectors can be succinctly represented as the coefficients of $\beta(1+x_0\beta^{2^0})(1+x_1\beta^{2^1})\cdot(1+x_{n-1}\beta^{2^{n-1}})$, using the fact that we can express each number in $\{1,\ldots,N\}$ uniquely in its binary representation.  Further, we can even make this construction low-degree in all of the variables by considering $\beta(1+x_0\beta_0)(1+x_1\beta_1)\cdot(1+x_{n-1}\beta_{n-1})$ for new variables $\beta_0,\ldots,\beta_{n-1}$. This clearly embeds the previous construction so is still a hitting set, but is now the desired $\VP$-succinct generator.

\subsection{Algebraic Natural Proofs and Geometric Complexity Theory}\label{sec:GCT}

We now comment on the connection between algebraic natural proofs and the Geometric Complexity Theory (GCT) program of Mulmuley and Sohoni~\cite{GCT1}. This program posits a very well-motivated method for obtaining algebraic circuit lower bounds, drawing inspiration from algebraic geometry and representation theory. 

To begin, we briefly discuss some algebraic geometry, so that we now work over an algebraically closed field $\F$. Suppose we have a class of polynomials $\cC\subseteq\totdeg$, which we can thus think of as vectors in the space $\F^{\totdegN}$.  As we did before, we can look at classes of distinguisher polynomials $\cD\subseteq\F[c_1,\ldots,c_{\totdegN}]$ which take as inputs the vector of coefficients of a polynomial in $\totdeg$.  In particular, we wish to look at the class of distinguishers $D$ that vanish on all of $\cC$, that is $\cD=\{D: D(\coeff(f))=0,\ \forall f\in\cC\}$.  Thus, $\cD$ vanishes on $\cC$, but it also may vanish on other points.  The \emph{(Zariski) closure} of $\cC$, denoted $\overline{\cC}$, is simply all polynomials $f\in\totdeg$ which the distinguishers $\cD$ vanish on, that is $\overline{\cC}=\{f\in\totdeg: D(\coeff(f))=0,\ \forall D\in\cD\}$.  Clearly $\cC\subseteq\overline{\cC}$, but this is generally not an equality.  For example, consider the map $(x,y)\mapsto (x,xy)$.  It is easy to see that the image of this map is $\F^2\setminus (\{0\}\times(\F\setminus\{0\}))$, but the closure is all of $\F^2$ (for further examples related to algebraic complexity classes, see \cite{BIZ17}).

From the perspective of algebraic geometry, it is much more natural to study the closure $\overline{\cC}$ rather than the class $\cC$ itself.  And indeed, the algebraic natural proofs we define here necessarily give lower bounds for the closure $\overline{\cC}$ because the lower bound is proven using a distinguisher in $\cD$.  In fact, algebraic geometry shows that lower bounds for $\overline{\cC}$ \emph{necessarily} must use such distinguishers (though they may not have small circuit size).\footnote{It is unclear how much a difference this closure makes. For example, the exact relation between $\VP$ and $\overline{\VP}$ is unclear, see for example the work of Grochow, Mulmuley and Qiao~\cite{GrochowMQ16}. It is conceivable that the algebraic distinguisher approach tries to prove too much, that is, perhaps $\overline{\VP}=\VNP$. We refer again to \cite{BIZ17} for further discussion and examples which separate natural algebraic complexity classes from their closures.} Thus, we see that this distinguisher approach fits well into algebraic geometry and hence the GCT program.

Thus, the GCT approach fits into the algebraic natural proofs structure if one discards the (key) property of constructiveness. However, the GCT approach also uses more than just algebraic geometry and in particular relies on representation theory.  That is, the GCT program notes that polynomials naturally have symmetries through linear changes of variables $\vx\to A\vx$ for an invertible matrix $A$ and these symmetries act not only on the circuits $\cC$ being computed but also their distinguishers $\cD$.  One can thus then ask that the lower bounds methods respect these symmetries, and Grochow~\cite{Grochow15} showed that most lower bounds in the literature do obey the natural symmetries one would expect.  Although this is not exactly precise, a useful picture is that the goal of the GCT program is to use the symmetries of the distinguishers $\cD$ to narrow down the search for them.  

It is unclear to what extent constructivity plays a role in such arguments and as such the GCT program is not a-priori algebraically natural in the sense given here.  Indeed, if there is an algebraically natural proofs barrier then the distinguishers that vanish on $\VP$ must have super-polynomial complexity, so that then clearly GCT is not constructive.  This viewpoint demonstrates that our succinct hitting set constructions have relevance to GCT as they prove super-polynomial lower bounds for distinguishers that vanish on $\VP$ (also known as the defining equations), at least in the restricted models we consider:

\begin{corollary}
Let $\mathcal{T}$ be the set of defining equations for $\overline{\VP}$. For each of the models mentioned in \autoref{thm:succinct-results}, there exists a polynomial $P \in \mathcal{T}$ which requires super-polynomial size when computed in this model.
\end{corollary}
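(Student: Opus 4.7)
The plan is to derive the corollary as a direct consequence of the equivalence in \autoref{thm:natural-proofs-implies-no-succinct-hitting-set}, applied to each succinct hitting set from \autoref{thm:succinct-results}. By definition, a defining equation $P \in \mathcal{T}$ is a non-zero polynomial in the coefficient variables $\{c_{\vx^\va}\}_{\vx^\va \in \inddegm}$ that vanishes on $\coeff(f)$ for every $f \in \overline{\VP}$, and hence in particular for every $f \in \VP$. The multilinear $\SPS$ formulas appearing in the hitting set of \autoref{thm:succinct-results} have $n = \log N$ variables and size $\poly(\log s, n)$, which for $s = \poly(N)$ is $\poly(\log N)$ --- comfortably inside $\VP$. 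Thus any $P \in \mathcal{T}$ vanishes on the entire succinct hitting set supplied by \autoref{thm:succinct-results}.

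Next, I would fix one of the models $\cM$ listed in \autoref{thm:succinct-results} and argue by contradiction. Suppose every non-zero $P \in \mathcal{T}$ could be computed by an $\cM$-circuit of size $s(N) = \poly(N)$. Non-zero $P \in \mathcal{T}$ exist because $\overline{\VP}$ is a proper subvariety of the multilinear space $\inddegm$, hence is cut out by non-trivial equations. Picking any such $P$ yields a non-zero $N$-variate size-$\poly(N)$ $\cM$-circuit that vanishes on every vector in the succinct hitting set, directly contradicting the hitting set property stated in \autoref{thm:succinct-results}. The contradiction proves that in each model $\cM$ there must exist (indeed, any choice of non-zero) defining equations requiring super-polynomial size.

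There is essentially no obstacle here: the corollary is the contrapositive of the already-established \autoref{thm:natural-proofs-implies-no-succinct-hitting-set}, combined with the observation that defining equations of $\overline{\VP}$ are exactly non-zero polynomials vanishing on the coefficient vectors of polynomials in $\VP$. The only sanity checks needed are that $\mathcal{T}$ contains non-zero elements (from $\overline{\VP} \subsetneq \inddegm$, a standard dimension count, since the number of multilinear polynomials vastly exceeds what $\VP$ can compute) and that the succinct hitting set polynomials lie in $\VP$ (immediate from the parameters). If any step merits attention, it is ensuring that the space of polynomials --- multilinear polynomials $\inddegm$ --- is kept consistent across the defining equations $\mathcal{T}$, the succinct hitting set, and the model $\cM$, so that the applicability of \autoref{thm:natural-proofs-implies-no-succinct-hitting-set} is unambiguous.
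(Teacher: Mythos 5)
Your proposal is correct and follows essentially the same route the paper intends: defining equations vanish on coefficient vectors of the (small, hence $\VP$) polynomials forming the succinct hitting sets of \autoref{thm:succinct-results}, so by the hitting property (equivalently, the contrapositive of \autoref{thm:natural-proofs-implies-no-succinct-hitting-set}) no non-zero defining equation can have polynomial size in any of those models, while non-zero defining equations exist by a dimension count. Your closing remark about keeping the ambient space $\inddegm$ consistent is exactly the right caveat, and otherwise nothing is missing.
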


\subsection{Follow-up Work}

We end this section by briefly mentioning two related works that have appeared since the initial version of this paper was posted.

Efremenko, Garg, Oliveira and Wigderson~\cite{EGOW18} studied algebraic circuit lower bounds proved using subadditive complexity measures based on matrix rank.  Such \emph{rank-based} methods are often used in practice to prove lower bounds on restricted models of algebraic computation.  These lower bounds are algebraic, and also often fall in our framework of algebraic natural proofs (as often the corresponding matrices are polynomially-large in the relevant parameters, but this is not always true as seen in \cite{GKKS14}). The main results of \cite{EGOW18} are \emph{unconditional} barriers on proving tensor-rank lower bounds or Waring-rank lower bounds using rank-based methods.

Bl\"{a}ser, Ikenmeyer, Jindal and Lysikov~\cite{BIJL18} studied our notion of algebraic natural proofs in the context of a complexity measure they call \emph{border completion rank} of a affine linear matrix polynomial.  They establish (among other results) that there is an infinite family of linear matrices for which no algebraically natural proof can prove the matrices have high border completion rank, assuming that the polynomial hierarchy does not collase.  This underlying assumption is more widely believed than the conjecture that succinct hitting sets exist, but their conclusion does not rule out an algebraic natural proof for high border completion rank some some \emph{other} set of matrices.

\section{Preliminaries}\label{sec:prelim}

We use boldface letters to denote vectors, where the length of a vector is usually understood from the context. Vectors such as $\vecx, \vecy$ and so on denote vectors of variables, where as $\vecalpha, \vecbeta$ are used to denote vectors of scalars. Similar boldface letters are used to denote tuples of polynomials. As done in the introduction, we will express polynomials $f\in\F[x_1,\ldots,x_n]$ in their monomial basis $f(\vx)=\sum_\va c_\va \vx^\va$ and then the corresponding vector of coefficients $\coeff(f)=(c_\va)_\va$ can then be the input space to another polynomial $D\in\F[\{c_\va\}_\va]$.  The exact size of this coefficient vector will be clear from context, that is, whether $f$ is multilinear (so there are $\inddegNm=2^n$ coefficients) or whether $f$ is of total degree at most $d$ (so there are $\totdegN=\binom{n+d}{d}$ coefficients). Occasionally, we have a polynomial $f \in \F[\vecx, \vecy]$, and in that case we denote $\coeff_\vecx (f)$ the coefficient-vector of $f$ where we think of $f \in \left( \F[\vecy] \right) [\vecx]$, that is, the entries of the vector are now polynomials in $\vecy$.

\section{Universal Constructions of Pseudorandom Polynomials}\label{sec:universal}

In this section we detail \emph{universal circuits} and their applications to pseudorandom polynomials.  That is, a universal circuit for small computation is a polynomial $U(\vx,\vy)$ such that for any polynomial $f(\vx)$ computed by a small computation, there is some value $\vaa$ such that $f(\vx)=U(\vx,\vaa)$.  Intuitively, there should be such universal circuits due to various completeness results, such as the fact that the determinant is complete for algebraic branching programs (\cite{v79}) (and hence complete for $\VP$ under quasipolynomial-size reductions (\cite{vsbr83})).  One would then expect that \emph{if} there are pseudorandom polynomials then such universal circuits would also be pseudorandom.

Indeed, based on this intuition Aaronson and Drucker~\cite{aaronson-blog-post} gave a candidate construction of pseudorandom polynomials based on generic projections of the determinant, with the intention of exploiting the completeness of the determinant.  However, we note here that while it is plausible that this construction is in fact pseudorandom, it is insufficient for our requirement for universality, as we want the computed $f$ and the universal $U$ to live in the same space of polynomials.  That is, if $f$ is of low total-degree so that $f\in\totdeg$, then we want $U(\vx,\vaa)\in\totdeg$ for \emph{every} $\vaa$. This is because we want a collection of polynomials $\cC$ that is indistinguishable from generic polynomials in $\totdeg$.  If we start with such a collection and attempt to embed them into $U$ where $\deg_\vx U(\vx,\vy)=d'\gg d$, the resulting collection of polynomials necessarily lives in the larger space $\F[\vx]^{d'}$ and the indistinguishability property no longer clearly holds.  As a concrete example, suppose $f(\vx)$ is a ``generic'' polynomial in $\totdegs$.  Then the modified polynomial $f(\vx)+z$ still embeds $f$, yet it lives in $\F[\vx,z]^d$, where it is no longer generic as it is linear in $z$.  

Thus, we need a universal circuit construction that does not increase the degree of $\vx$.  For algebraic branching programs, the candidate of Aaronson and Drucker~\cite{aaronson-blog-post} is easy to fix by switching from the determinant to iterated matrix multiplication, which is also complete but due to efficient homogenization of branching programs (\cite{nis91}) can be universal without increasing degree. However, for full generality we want to be universal for circuits, that is, obtaining a polynomial $U$ complete for $\VP$ under polynomial-size reductions which also ensures the $\vx$-degree of $U$ matches that of $f$.  \Burgisser~\cite[Section 5.6]{bur00} first achieved results in this vein by using auxiliary variables to trace through a generic computation, using homogenization to ensure the $\vx$-degree is never larger than needed.  Unfortunately his construction yields exponentially large degree in $\vy$ so it is not sufficient here.  A construction with such low degree was given by Raz~\cite{Raz10a}. We now state this result.

\begin{theorem}[Raz~\cite{Raz10a}]\label{fact:universal}
	Let $\F$ be a field, and let $n,s\ge 1$ and $d\ge 0$. Then there is a $\poly(n,d,s)$-size algebraic circuit $U\in\F[x_1,\ldots,x_n,y_1,\ldots,y_r]$ with $r\le \poly(n,d,s)$ such that $U$ can be constructed in time $\poly(n,d,s)$, and
	\begin{itemize}
		\item $\deg_\vx U(\vx,\vy)\le d$
		\item $\deg_\vy U(\vx,\vy)\le \poly(d)$
		\item If $f\in\F[\vx]$ has $\deg_\vx f\le d$ and $f$ is computed by a size $s$ circuit, then there is some $\vaa\in\F^r$ such that $f(\vx)=U(\vx,\vaa)$.
	\end{itemize}
\end{theorem}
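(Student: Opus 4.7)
The plan is to combine a depth-reduction theorem with a ``generic'' universal version of the reduced circuit.

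\textbf{Step 1.} I will begin by applying the semi-unbounded form of the Valiant--Skyum--Berkowitz--Rackoff depth-reduction \cite{vsbr83}: any $f\in\F[\vx]$ of $\vx$-degree at most $d$ computed by a size-$s$ circuit is also computed by a \emph{homogeneous, semi-unbounded} circuit $C_f$ of depth $L=O(\log d)$ and size $\poly(n,s,d)$. The layers alternate between $\Sigma$-gates of unbounded fan-in (at most $\poly(n,s,d)$) and $\Pi$-gates of fan-in exactly $2$; each gate carries a label $k\in\{0,1,\dots,d\}$ indicating its $\vx$-degree, $\Sigma$-gates combine inputs of the matching degree, and $\Pi$-gates multiply two inputs whose labels sum to the output's.

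\textbf{Step 2.} I will then define $U(\vx,\vy)$ by taking a shape-only template for such circuits and plugging in generic $\vy$-coefficients. At each of the $L$ layers and each target degree $k\in\{0,\dots,d\}$ I create $\poly(n,s,d)$ many slots. For each $\Pi$-slot of output degree $k$, I hardwire the product of a specific pair of earlier gates whose labels sum to $k$, and then loop over \emph{all} such eligible pairs (a further $\poly(n,s,d)$ factor per layer). For each $\Sigma$-slot, I attach a fresh $\vy$-variable to every incoming wire and compute the corresponding linear combination. Leaves are $x_1,\dots,x_n$ together with $1$, with field constants absorbed into the first $\Sigma$-layer via extra $\vy$-variables. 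This pins down an $f$-independent skeleton in which the VSBR output for any admissible $f$ embeds as a subcircuit, with the unused wires silenced by zero $\vy$-coefficients.

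\textbf{Step 3.} The parameter checks are routine. The total wire count---hence circuit size and the number $r$ of $\vy$-variables---is $L\cdot\poly(n,s,d)=\poly(n,s,d)$, and the construction is $\poly(n,s,d)$-time because the skeleton depends only on $(n,s,d)$. The bound $\deg_\vx U\le d$ is immediate from the homogeneous labels. For the $\vy$-degree, a $\Sigma$-gate increases it by $1$ and a fan-in-$2$ $\Pi$-gate at most doubles it, so after $L=O(\log d)$ alternating layers $\deg_\vy U = O(2^{L/2}) = \poly(d)$. Universality is by design: for any admissible $f$, Step~1 yields $C_f$, and $\vaa$ is obtained by copying the $\Sigma$-coefficients of $C_f$ into the corresponding $\vy$-slots of $U$ and zeroing the unused ones, so $U(\vx,\vaa)=f$.

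\textbf{Main obstacle.} The key technical point---and where I expect the hardest work---is making the depth-reduced skeleton genuinely canonical, i.e., independent of $f$. The standard VSBR presentation wires each $\Pi$-gate via an $f$-dependent rank-balancing procedure, and the fix is precisely the ``enumerate all eligible pairs'' step above, which costs only a polynomial blow-up. A secondary subtlety is getting $\deg_\vy U=\poly(d)$ rather than quasi-polynomial in $d$; this requires the semi-unbounded $O(\log d)$-depth form of the depth reduction, since a bounded-fan-in version would only give depth $O(\log s\cdot\log d)$ and would blow $\vy$-degree up to $d^{\Omega(\log s)}$.
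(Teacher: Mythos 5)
This theorem is not proved in the paper at all---it is imported verbatim from Raz~\cite{Raz10a}---and your proposal essentially reconstructs Raz's own argument: depth-reduce to a homogeneous, semi-unbounded, $O(\log d)$-depth normal form \`a la \cite{vsbr83}, then freeze an $f$-independent skeleton in which product slots range over all degree-consistent pairs and sum wires carry fresh $\vy$-coefficients, which is exactly what yields $\deg_\vx U\le d$, $\deg_\vy U\le\poly(d)$, and universality by copying coefficients. The only cosmetic quibble is in your closing remarks: the VSBR wiring is fixed by a degree-splitting (not rank-balancing) decomposition, and the bounded-fan-in penalty is milder than $d^{\Omega(\log s)}$, but neither affects the soundness of your construction.
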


We briefly note that this construction also yields a universal circuit for homogeneous degree-$d$ computations (the space $\totdeghom$).  No such universal circuits are known for efficient multilinear computation (the space \inddegm), as circuits do not likely admit efficient multilinearization. In contrast, there is a universal circuit for the depth-3 set-multilinear formulas, which is the model that we use to construct our succinct hitting sets fooling restricted classes of computation.  However, we restrict attention to total degree $d$ polynomials as this is the cleanest setting.

We now use this universal circuit to convert from succinct hitting sets to succinct generators, as the standard conversion from hitting set to generator would ruin succinctness.

\begin{lemma}\label{res:universal:hit-to-gen}
	Let $\F$ be a field, and let $n,s\ge 1$ and $d\ge 0$. Let $\cD\subseteq\F[c_1,\ldots,c_\totdegN]$ be a class of polynomials in the coefficient vectors of $\totdeg$.  If there is an $s$-succinct hitting set for $\cD$ then there is a $\poly(n,d,s)$-succinct generator for $\cD$ computable by $\poly(n,d,s)$-size circuits.
\end{lemma}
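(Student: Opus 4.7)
The plan is to use Raz's universal circuit \autoref{fact:universal} directly as the generator, after re-interpreting it as a polynomial map from $\vy$-space into coefficient-of-$\vx$ space. Concretely, I would invoke \autoref{fact:universal} with parameters $n$, $d$, $s$ to obtain $U(\vx, \vy) \in \F[x_1, \ldots, x_n, y_1, \ldots, y_r]$ with $r = \poly(n,d,s)$, $\deg_\vx U \le d$, computed by a $\poly(n,d,s)$-size algebraic circuit, and such that every $f \in \totdeg$ computable by a size-$s$ circuit satisfies $f(\vx) = U(\vx, \vaa)$ for some $\vaa \in \F^r$. I would then define
\[
\cvG : \F^r \to \F^{\totdegN}, \qquad \cvG(\vy) \eqdef \coeff_\vx(U(\vx, \vy)),
\]
viewing $U$ as an element of $(\F[\vy])[\vx]$.

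With $G(\vx, \vy) = U(\vx, \vy)$, the three conditions of \autoref{defn:succ-gen} fall out cleanly. Condition \eqref{defn:succ-gen:indexing} is immediate from \autoref{fact:universal} with $\cC'$ taken to be $\poly(n,d,s)$-size circuits. Condition \eqref{defn:succ-gen:output} holds because substituting a constant $\vaa$ for $\vy$ in the $\poly(n,d,s)$-size circuit for $U$ produces a $\poly(n,d,s)$-size circuit for $U(\vx, \vaa)$, so every element of the image of $\cvG$ is $\poly(n,d,s)$-succinct, which is exactly the succinctness the lemma asserts.

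The substantive step is the generator property \eqref{defn:succ-gen:gen}. Fix a non-zero $D \in \cD$. By hypothesis the coefficient vectors of size-$s$ circuits form a hitting set for $\cD$, so there exists $f \in \totdeg$ computable by a size-$s$ circuit with $D(\coeff(f)) \neq 0$. Applying the completeness clause of \autoref{fact:universal} gives $\vaa \in \F^r$ with $f(\vx) = U(\vx, \vaa)$, i.e., $\coeff(f) = \cvG(\vaa)$, so $D(\cvG(\vaa)) \neq 0$. Therefore $D \circ \cvG$ is not identically zero as a polynomial in $\vy$, which is the generator condition (the converse direction is trivial).

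I do not anticipate any real obstacle beyond already having \autoref{fact:universal} in hand. The one subtlety worth flagging is that we crucially need $\deg_\vx U \le d$: without this control, the image of $\cvG$ would escape the ambient space $\totdeg$, and neither the succinctness bookkeeping nor the hypothesis on $\cD$ would apply. This is precisely why the Aaronson--Drucker determinantal candidate and \Burgisser's earlier universal construction (with exponential $\vy$-degree) are inadequate for this argument, whereas Raz's construction is tailored to deliver exactly the guarantees we need.
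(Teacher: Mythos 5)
Your proposal is correct and follows essentially the same route as the paper: instantiate Raz's universal circuit (\autoref{fact:universal}) as the generator, note that its image contains the coefficient vectors of all size-$s$ circuits in $\totdeg$ (hence contains the assumed succinct hitting set, giving the generator property pointwise and therefore as a polynomial identity in $\vy$), and use the degree bound $\deg_\vx U \le d$ plus the fact that substituting $\vy \leftarrow \vaa$ does not increase circuit size to get succinctness of the image. Your closing remark about why the $\vx$-degree control is essential matches the paper's own discussion preceding the lemma.
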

\begin{proof}
	Let the $s$-succinct hitting set arise from the set of size-$s$ polynomials $\cC\subseteq\totdeg$.  Let $U\in\F[\vx,y_1,\ldots,y_r]$ be the universal circuit of \autoref{fact:universal}. Then for any $f\in\cC$ there is some $\vaa\in\F^r$ such that $f(\vx)=U(\vx,\vaa)$.  Thus, 
	\[
		\cC\subseteq U(\vx,\F^r)=\{U(\vx,\vaa): \vaa\in\F^r\}
		\;.
	\]
	Thus, we see that $U$ is indeed a generator for $\cD$ as it contains the hitting set $\cC$ in its image.  Further, $U(\vx,\vaa)\in\totdeg$ for all $\vaa\in\F^r$ by construction.  Finally $U(\vx,\vaa)$ is computable in size $\poly(n,d,s)$ for all $\vaa\in\F^r$ as $U(\vx,\vy)$ has such a circuit and the substitution $\vy\leftarrow\vaa$ does not increase circuit size.  It follows that $U$ is the desired succinct generator.
\end{proof}

As mentioned in the introduction, generators are more robust versions of hitting sets.  We now give another reason for this, by proving that succinct generators imply succinct hitting sets of \emph{small size}, by using the standard interpolation argument.

\begin{lemma}\label{res:universal:gen-to-hit}
	Let $\F$ be a field with $|\F|>\delta\Delta$, where $\Delta,\delta\ge0$. Let $n,s\ge 1$ and $d\ge 0$. Let $\cD\subseteq\F[c_1,\ldots,c_\totdegN]$ be a class of degree-$\Delta$ polynomials in the coefficient vectors of $\totdeg$.  Suppose that $G\in\F[\vx,y_1,\ldots,y_\ell]$ is a succinct generator computable in size-$s$ for $\cD$ where $\deg_\vy G\le \delta$. Then there is a $s$-succinct hitting set of size $(\delta\Delta+1)^\ell$.
\end{lemma}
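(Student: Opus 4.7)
The plan is to exploit the fact that a generator converts the identity-testing question into a low-degree polynomial identity test in the seed variables $\vy$, which then becomes finite by a standard Schwartz--Zippel--DeMillo--Lipton grid argument.

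First I would unpack what it means for $G$ to be a generator. Writing $G(\vx,\vy) = \sum_{\vx^\va \in \cM} \cG_{\vx^\va}(\vy)\cdot \vx^\va$, condition \eqref{defn:succ-gen:gen} of \autoref{defn:succ-gen} says that for every nonzero $D \in \cD$, the composed polynomial
\[
H_D(\vy) \eqdef D(\coeff_\vx G(\vx,\vy)) = D\bigl((\cG_{\vx^\va}(\vy))_{\vx^\va \in \cM}\bigr)
\]
is a nonzero element of $\F[\vy]$. The key quantitative observation is that $H_D$ has degree at most $\delta \Delta$ in $\vy$: each input coordinate $\cG_{\vx^\va}(\vy)$ has $\vy$-degree at most $\delta$ (since $\deg_\vy G \le \delta$ and coefficient extraction does not increase $\vy$-degree), and $D$ has total degree at most $\Delta$ in its $|\cM|$ inputs.

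Next I would apply interpolation on the $\vy$-variables. Choose any set $S \subseteq \F$ of size $\delta\Delta + 1$, which exists by the hypothesis $|\F| > \delta\Delta$. By the Schwartz--Zippel--DeMillo--Lipton lemma~\cite{S80,Z79,DL78}, since $H_D \in \F[y_1,\ldots,y_\ell]$ is nonzero of total degree at most $\delta\Delta$, there is some $\vaa \in S^\ell$ with $H_D(\vaa) \neq 0$, i.e.\ $D(\coeff_\vx G(\vx,\vaa)) \neq 0$. Therefore
\[
\cH \eqdef \{\,G(\vx,\vaa) : \vaa \in S^\ell\,\} \subseteq \totdeg
\]
has the property that $\{\coeff_\vx(f) : f \in \cH\}$ is a hitting set for $\cD$.

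Finally I would verify succinctness and count. Each element of $\cH$ is $G(\vx,\vaa)$ for a fixed $\vaa \in \F^\ell$; substituting constants into the size-$s$ circuit for $G(\vx,\vy)$ produces a circuit of size at most $s$ computing a polynomial in $\totdeg$, so $\cH$ is an $s$-succinct collection. The cardinality is $|S|^\ell = (\delta\Delta+1)^\ell$ as claimed. There is no real obstacle here; the only minor thing to check is that coefficient extraction in $\vx$ does not inflate the $\vy$-degree, which is immediate since $G$ is a single polynomial and extracting the coefficient of a fixed $\vx^\va$ picks out a subset of its monomials with $\vy$-degree unchanged.
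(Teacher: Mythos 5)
Your proposal is correct and follows essentially the same route as the paper's proof: bound $\deg_\vy D(\coeff_\vx G(\vx,\vy))$ by $\deg D\cdot\deg_\vy G\le\delta\Delta$, evaluate over a grid $S^\ell$ with $|S|=\delta\Delta+1$ (the paper phrases this as interpolation, you as Schwartz--Zippel--DeMillo--Lipton, which is the same finite-grid argument), and note that substituting $\vy\leftarrow\vaa$ into the size-$s$ circuit for $G$ does not increase its size, giving an $s$-succinct hitting set of size $(\delta\Delta+1)^\ell$. No gaps to report.
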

\begin{proof}
	For any $D\in\cD$, we see that $D$ is non-zero iff $D(\coeff_\vx G(\vx,\vy))$ is non-zero as a polynomial in $\vy$.  In particular, $\deg_\vy D(\coeff_\vx G(\vx,\vy))\le \deg D\cdot \deg_\vy G\le \delta\Delta$.  Thus, as the field is large enough we can find a set $S\subseteq \F$ with $|S|\ge \delta\Delta+1$, so that by interpolation $D(\coeff_\vx G(\vx,\vy))$ is non-zero iff $D(\coeff_\vx G(\vx,\vaa))$ is non-zero for every $\vaa\in S^\ell$.  Thus, we see that $G(\vx,S^\ell)$ is the desired succinct hitting set as each $G(\vx,\vaa)$ has a size-$s$ circuit (as substitution does not increase circuit size) and $S^\ell$ has the correct size.
\end{proof}

In the usual range of parameters we would have $\Delta=\poly(N)$ and $\delta=\poly(n,s)$. Plugging this into the above connections, we see that \emph{any} (even infinite) succinct hitting set implies quasipolynomial-size hitting sets.

\begin{corollarywp}
	Let $\F$ be a field, and let $n\ge 1$. Consider polynomials in $\F[c_1,\ldots,c_N]$ where $N=\binom{2n}{n}$ so that $\F[c_1,\ldots,c_N]$ can be identified with the coefficients of polynomial sin $\totdeg$ with $d=n$.  If $\poly(N)$-size $\poly(N)$-degree circuits in $\F[c_1,\ldots,c_N]$ have $\poly(n)$-succinct hitting sets from $\F[\vx]^n$, then such circuits have an explicit $\poly(N)^{\polylog{N}}$-size hitting set.
\end{corollarywp}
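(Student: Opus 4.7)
The plan is to chain the two previously established conversion lemmas \autoref{res:universal:hit-to-gen} and \autoref{res:universal:gen-to-hit}, applied with the parameter choice $d = n$ and $s = \poly(n)$. First I would unpack the parameters: since $N = \binom{2n}{n}$, one has $n = \Theta(\log N)$ and hence $\poly(n) = \polylog(N)$. The hypothesis provides a $\poly(n)$-succinct hitting set for the target class $\cD$ of $\poly(N)$-size, $\poly(N)$-degree circuits in $N$ coefficient variables; by \autoref{defn:alg-PRF}, this is the set of coefficient vectors of some family $\cC$ of $\poly(n)$-size circuits in $\totdeg$ with $d = n$.

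Next, I would feed this succinct hitting set into \autoref{res:universal:hit-to-gen}, whose proof uses Raz's universal circuit from \autoref{fact:universal}. The output is an \emph{explicit} succinct generator $G(\vx, \vy) \in \F[x_1, \ldots, x_n, y_1, \ldots, y_r]$ for $\cD$ that is computable by a size $\poly(n, d, s) = \poly(n) = \polylog(N)$ circuit, has $r = \poly(n, d, s) = \polylog(N)$ seed variables, and satisfies $\deg_\vy G \le \poly(d) = \polylog(N)$.

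Finally, I would plug this generator into \autoref{res:universal:gen-to-hit} with $\Delta = \poly(N)$ (the degree of the circuits in $\cD$), $\delta = \polylog(N)$, and $\ell = \polylog(N)$. This yields a hitting set of size
\[
(\delta\Delta + 1)^{\ell} \;=\; \bigl(\poly(N)\cdot\polylog(N) + 1\bigr)^{\polylog(N)} \;=\; \poly(N)^{\polylog(N)},
\]
which is explicit because the universal circuit $U$ is constructible in polynomial time and the interpolation set $S \subseteq \F$ is chosen by fixing any $\delta\Delta + 1$ field elements.

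There is essentially no technical obstacle beyond careful parameter bookkeeping; the entire argument is a parameter-tracking exercise on top of the two lemmas. The one minor caveat is the size-of-field hypothesis $|\F| > \delta\Delta = \poly(N)$ required by \autoref{res:universal:gen-to-hit}; if $\F$ is too small, one simply works over a polynomial-size field extension, which preserves both explicitness and all of the size bounds above.
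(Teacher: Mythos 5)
Your proposal is correct and matches the paper's intended argument: the corollary is exactly the chaining of \autoref{res:universal:hit-to-gen} and \autoref{res:universal:gen-to-hit} with $d=n$, $s=\poly(n)$, and $n=\Theta(\log N)$, which the paper states only informally ("plugging this into the above connections"). Your parameter bookkeeping and the field-size remark are both fine.
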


\section{Succinct Hitting Sets via Rank Condensers}\label{subsec:vandermonde}

In this section, we construct succinct generators for restricted depth-3 formulas ($\SPS$ formulas), in particular, $\SPSk$ formulas (top-fan-in $k$) and depth-$3$ circuits with bounded transcendence degree. The constructions are based on a common tool which we dub \emph{succinct rank condenser}.

Gabizon and Raz~\cite{GR08}, in the context of studying deterministic extractors, studied how to pseudorandomly map $\F^n$ to $\F^r$ preserving vector spaces of dimension $r$ with high probability. In particular, they gave a $\poly(n)$-collection of linear maps $\cE=\{E:\F^n\to\F^r\}$ such that for any vector space $V\subseteq\F^n$ of dimension $r$ there was at least one map $E\in\cE$ such that the dimension of $V$ was preserved, that is, $\dim E(V)=\dim V=r$. Their construction was improved by Forbes-Shpilka~\cite{ForbesShpilka12}, and was called a \emph{rank condenser} in later works (\cite{FSS14,ForbesGuruswami15}) which further explored this concept.

Rank condensers have proven very useful in designing hitting sets as they can reduce $n$-variate polynomials to $r$-variate polynomials, and for us the Gabizon and Raz~\cite{GR08} construction suffices.  In particular, one defines the map $E\in\F[t]^{n\times r}$ with $E_{i,j}=t^{ij}$, with $t$ is a formal variable.  One can then obtain the desired collection $\cE$ by evaluating $E(t)$ at sufficiently many points in $t\in\F$.  However, it suffices for us to obtain generators, so we leave $t$ as a formal variable.

\begin{construction}[Succinct Rank Condenser]\label{con:succinct-rk-cond}	
	Let $n\ge r\ge 1$.  Define the polynomial $\Prc_{n,r}$ where $\Prc_{n,r}\in\F[x_1,\ldots,x_n,y_1,\ldots,y_r,t_0,t_1,\ldots,t_n]$ to be
	\[
		\Prc_{n,r}(\vx,\vy,\vt)=\sum_{j=1}^r y_jt_0^j\prod_{k=1}^n(1+x_kt_k^j)
		\;.
	\]
	Let $\Grc_{n,r}(\vy,\vt)$ be the polynomial map given by $\coeff_\vx(\Prc_{n,r})$ when taking $\Prc_{n,r}$ as a multilinear polynomial in $\vx$.
\end{construction}

We now analyze properties of \autoref{con:succinct-rk-cond}, in particular showing that it embeds the desired rank condenser of Gabizon and Raz~\cite{GR08}.

\begin{proposition}\label{cl:succinct-vdm-is-vdm}
	Assume the setup of \autoref{con:succinct-rk-cond}. Taking $N=2^n$, identify $[N]$ with $2^{[n]}$. Then for every $i\in[N]$,
	\[
	\left(\Grc_{n,r}(\vx,\vy,t,t^{2^0},t^{2^1}\cdots,t^{2^{n-1}})\right)_i = \sum_{j=1}^r y_j t^{ij}
	\]
\end{proposition}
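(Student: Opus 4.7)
The plan is to prove this by directly expanding $\Prc_{n,r}$ in its multilinear $\vx$-basis, reading off the coefficient indexed by $i$, and then performing the $\vt$-substitution. First, I would observe that each factor $(1 + x_k t_k^j)$ is affine in $x_k$, so by distributivity
\[
\prod_{k=1}^n (1 + x_k t_k^j) \;=\; \sum_{S \subseteq [n]} \Bigl(\prod_{k \in S} t_k^j\Bigr) \vx^S,
\]
where $\vx^S \eqdef \prod_{k \in S} x_k$. Substituting this into the definition of $\Prc_{n,r}$ and swapping sums, the coefficient of $\vx^S$ in $\Prc_{n,r}$ (viewed as a multilinear polynomial in $\vx$) is exactly $\sum_{j=1}^r y_j t_0^j \prod_{k \in S} t_k^j$. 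By the definition of $\Grc_{n,r}$, this is the $S$-th coordinate of $\Grc_{n,r}(\vy, \vt)$.

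Next, I would perform the substitution $t_0 \mapsto t$ and $t_k \mapsto t^{2^{k-1}}$ for $k=1,\ldots,n$. Using the identity $\prod_{k \in S} t^{j \cdot 2^{k-1}} = t^{j \sum_{k \in S} 2^{k-1}}$, the $S$-th coordinate becomes
\[
\sum_{j=1}^r y_j \, t^{j \bigl(1 + \sum_{k \in S} 2^{k-1}\bigr)}.
\]
Identifying $i \in [N]$ with the subset $S_i \subseteq [n]$ via the binary expansion $i - 1 = \sum_{k \in S_i} 2^{k-1}$ (a bijection between $\{1,\ldots,N\}$ and $2^{[n]}$, up to the convention implicit in the statement), the exponent collapses to $j \cdot i$, yielding $\sum_{j=1}^r y_j t^{ij}$ as claimed.

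There is no real obstacle in this proof: the argument amounts to combinatorial bookkeeping, where the key observation is that the powers $t_k = t^{2^{k-1}}$ are precisely the place-values of binary representation, so that the product over $k \in S_i$ exponentiates cleanly to $t^{j \cdot i}$ with distinct $i$'s giving distinct exponents. The only choice to make is the exact indexing convention matching $i \in [N]$ with $S_i \subseteq [n]$; uniqueness of binary representation makes any such convention work, and this is what guarantees that \autoref{con:succinct-rk-cond} genuinely embeds the Gabizon--Raz Vandermonde-style rank condenser while remaining succinctly describable as a polynomial in only $n+r+1$ formal variables.
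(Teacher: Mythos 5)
Your proof is correct and follows essentially the same route as the paper: expand the product $\prod_k(1+x_k t_k^j)$ over subsets $S\subseteq[n]$, read off the coefficient of $\vx^S$, and use the binary-representation identification $i-1=\sum_{k\in S}2^{k-1}$ so that the substitution $t_0\mapsto t$, $t_k\mapsto t^{2^{k-1}}$ collapses the exponent to $j\cdot i$. The only cosmetic difference is that the paper substitutes for $\vt$ before expanding while you expand first, which changes nothing of substance.
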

\begin{proof}
	We can index the output coordinates of $\Grc_{n,r}$ with subsets $S\subseteq[n]$, so that an index $i\in[N]$ gets mapped to $S\subseteq[n]$ via its binary representation so that $i-1=\sum_{k\in S} 2^{k-1}$, and for a given $S\subseteq[n]$ denote the corresponding index $i_S$. Then,
	\begin{align*}
		\Prc_{n,r}(\vx,\vy,t,t^{2^0},t^{2^1}\cdots,t^{2^{n-1}})
		&=\sum_{j=1}^r y_jt^j\prod_{k=1}^n(1+x_k(t^{2^{k-1}})^j)\\
		&=\sum_{j=1}^r y_jt^j\sum_{S\subseteq[n]}\prod_{k\in S}x_k\cdot t^{j\cdot 2^{k-1}}\\
		&=\sum_{j=1}^r y_jt^j\sum_{S\subseteq[n]} t^{j\cdot \sum_{k \in S}2^{k-1}}\prod_{k\in S}x_k\\
		&=\sum_{j=1}^r y_jt^j\sum_{S\subseteq[n]} t^{j\cdot (i_S-1)}\prod_{k\in S}x_k
		\;.
	\end{align*}
	Thus, taking coefficients in $\vx$ exactly indexes $\sum_{j=1}^r y_j t^{ij}$ as required.
\end{proof}

We now observe that this generator is efficiently computable, and produces succinct hitting sets.

\begin{propositionwp}\label{fact:vdm-is-succinct}
	Assume the setup of \autoref{con:succinct-rk-cond}. The polynomial $\Prc_{n,r}(\vx,\vy,\vt)$ is computable by $\poly(n,r)$-size $\Sigma\Pi\Sigma\Pi$ circuits of $\poly(n,r)$-degree.  Further, for every fixing  $\vecy=\vecalpha\in\F^r$, $\vt=\vbb\in\F^{n+1}$, $\Prc_{n,r}(\vx,\vaa,\vbb)$ is computed by a $\SPS$ circuit of size $\poly(r, n)$.
\end{propositionwp}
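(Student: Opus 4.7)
The plan is to read off both circuit structures directly from the defining expression
\[
\Prc_{n,r}(\vx,\vy,\vt)=\sum_{j=1}^r y_j t_0^j \prod_{k=1}^n (1+x_k t_k^j),
\]
and to bound the size and degree by a simple gate count.

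For the first claim, I would verify the $\Sigma\Pi\Sigma\Pi$ structure layer by layer from the root down to the leaves. The outermost gate is a $\Sigma$ of fan-in $r$. Each of its children computes the term $y_j t_0^j \prod_{k=1}^n(1+x_k t_k^j)$, which is a $\Pi$-gate of fan-in $n+1$ (one factor $y_j t_0^j$ plus the $n$ factors $(1+x_k t_k^j)$). Each child of this $\Pi$ layer is a $\Sigma$-gate of fan-in at most $2$, where a single-term factor like $y_j t_0^j$ is viewed as a trivial sum. Finally, each bottom $\Pi$-gate computes a monomial of the form $t_k^j$, $x_k t_k^j$, or $y_j t_0^j$, which requires multiplying at most $j+1 \le r+1$ variables. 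Summing up the wires, one gets a bound of the form $r\cdot(n+1)\cdot 2\cdot(r+1) = O(nr^2)$, which is $\poly(n,r)$. The degree bound is immediate: the $\vx$-degree is $n$, the $t_0$-degree is $r$, the $t_k$-degree for $k\ge 1$ is at most $r$, and the total degree is $O(nr)$.

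For the second claim, substitute $\vy\leftarrow\vaa$ and $\vt\leftarrow\vbb$. Then $y_j t_0^j$ becomes the scalar $\alpha_j\beta_0^j\in\F$ and each factor $(1+x_k t_k^j)$ becomes the linear form $(1+\beta_k^j x_k)\in\F[x_k]$ of sparsity at most $2$. Absorbing the scalar $\alpha_j\beta_0^j$ into (say) the first linear factor, we obtain
\[
\Prc_{n,r}(\vx,\vaa,\vbb) = \sum_{j=1}^r \prod_{k=1}^n L_{j,k}(x_k),
\]
a sum of $r$ products of $n$ linear forms, i.e.\ a $\SPS$ formula of top fan-in $r$, product fan-in $n$, and bottom fan-in at most $2$. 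The total size is $O(rn)$.

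The proof is entirely routine; the only mildly subtle point is recognizing that the powers $t_k^j$ with $j\le r$ sit at the bottom $\Pi$-layer and therefore do not inflate the depth beyond four, and that after the substitution the bottom two layers collapse into honest linear forms so the resulting circuit is genuinely $\SPS$ rather than $\Sigma\Pi\Sigma\Pi$.
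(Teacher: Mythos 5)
Your proposal is correct, and it is exactly the routine verification the paper has in mind: the statement is given there without proof as immediate from \autoref{con:succinct-rk-cond}, and your gate-by-gate reading of the defining expression (depth-4 structure with the powers $t_k^j$ at the bottom layer, then collapsing to a genuine $\SPS$ formula of top fan-in $r$ after substituting scalars for $\vy,\vt$) is the intended argument, with the right $\poly(n,r)$ size and degree bounds.
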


\subsection{Depth-3 Formulas with Bounded Top-Fan-In}\label{subsec:spsk}

A \emph{$\SPSk$ formula} is a depth-$3$ formula of the form $\sum_{i=1}^{k} \prod_{j=1}^{d_i} \ell_{i,j}$, where $\ell_{i,j}$ are linear functions in $x_1, \ldots, x_N$. We denote the degree of the circuit by $d = \max_{i} d_i$.

The study of $\SPSk$ formulas was initiated by Dvir and Shpilka \cite{DS07}, who proved that in a \emph{simple} and \emph{minimal}\footnote{We omit the exact definitions here and refer the reader to \cite{sy} for a thorough discussion.}  $\SPSk$ circuit computing the zero polynomial, the rank of the linear functions $\set{\ell_{i,j}}$ is bounded by a number $R(k,d)$ that is independent of the number of variables $N$. The number $R(k,d)$ is called the \emph{rank bound} for this class of circuits. Karnin and Shpilka \cite{KarninS11} showed how to use the rank condenser construction of Gabizon and Raz in order to obtain a black-box identity testing algorithm, and improved rank bounds were later obtained (\cite{kaysar09, SaxenaS11, SaxenaS12, SaxenaS13}).

In this section, we construct a $\poly(n, k)$-$\SPS$ succinct hitting set for $\SPSk$ formulas, and we use the fact that the rank condenser generator, with a judicious choice of $r$, is a generator for $\SPSk$ formulas. The version we cite here is from the survey \cite{sy}.

\begin{fact}[Hitting set for $\SPSk$ Formulas]
\label{fact:spsk-gen}
Let $F(\vecX) \in \F[\vecX]$ be a polynomial computed by a $\SPSk$ degree $d$ formula.
Let $V : \F^r \to \F^N$ the linear transformation given by the $N \times r$ Vandermonde matrix $(V_t)_{ij} = t^{i \cdot j}$ for $1 \le i \le N$, $1 \le j \le r$. Then, for $r = R(k,d)+1$ where $R(k,d)= O(k^2 \log d)$ (over finite fields) or $R(k,d) = k^2$ (over infinite fields), it holds that $F \neq 0$ if and only if the $r$-variate polynomial $F \circ \left( V_t \cdot (y_1, \ldots, y_r)^T \right )$ is non-zero.
\end{fact}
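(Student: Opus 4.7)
The plan is to combine two classical ingredients. The first is the rank bound of Dvir-Shpilka and its subsequent improvements by Karnin-Shpilka: any simple, minimal $\SPSk$ formula of degree $d$ computing the zero polynomial must have its linear forms span a subspace of dimension at most $R(k,d)$. The second is the rank condenser property of the Gabizon-Raz Vandermonde matrix $V_t$: for any subspace $L \subseteq \F^N$ with $\dim L \leq r$, the restriction of $V_t^T: \F^N \to \F^r$ to $L$ is an injective linear map for all but finitely many $t$, a consequence of the non-vanishing of a Vandermonde-like determinant. The reverse implication (if $F(V_t\vy) \neq 0$ then $F \neq 0$) is immediate from the fact that $F(V_t \vy)$ is a specialization of $F$, so the content is the forward direction.

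Assume $F \neq 0$. I would first normalize $F$ to simple and minimal form, which preserves the polynomial computed and keeps the $\SPSk$ structure with the same bounds on $k$ and $d$. Let $L \subseteq \F^N$ be the span of the linear forms $\{\ell_{i,j}\}$ appearing in $F$. Note that $F(V_t \vy)$ is again an $\SPSk$ formula of degree $d$ in the $\vy$ variables, whose linear forms are $\{\ell_{i,j}(V_t \vy)\}$, and whose span is $V_t^T(L)$. Suppose for contradiction that $F(V_t \vy) \equiv 0$ for a generic $t$; simplify and minimize in $\vy$ to obtain a simple/minimal zero formula $\tilde F$, and apply the rank bound to conclude that the linear forms of $\tilde F$ span a subspace of $\F^r$ of dimension at most $R(k,d) = r - 1$.

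The main step is to convert this dimension deficit in $\vy$-space into a contradiction with the structure of $F$ in $\vx$-space, via the rank condenser. If $\dim L \leq r$, then for generic $t$ the restriction $V_t^T|_L$ is injective, so $F \circ V_t$ is obtained from $F$ by a faithful linear change of coordinates on the forms it actually depends on, hence is nonzero. If $\dim L > r$, then I would select a subspace $L' \subseteq L$ of dimension exactly $r$; by the rank condenser, $V_t^T|_{L'}$ is injective for generic $t$, yielding $r$ linearly independent forms in $\vy$ among the $\{\ell_{i,j}(V_t \vy)\}$. This should force $F(V_t \vy)$ to retain rank at least $r > R(k,d)$ in its linear forms even after simplification and minimization, contradicting the bound obtained from the rank bound applied to $\tilde F$.

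The main obstacle will be controlling what simplification and minimization can do to the set of linear forms: these operations can strip off common linear factors from each multiplication gate and can eliminate sub-sums of the $k$ products that are implied by others, and in principle the span of linear forms could shrink through such steps. The delicate point is to argue that these reductions cannot conceal the $r$ independent directions supplied by the rank condenser—concretely, that each surviving linear form of $\tilde F$ lies in the span of the $\ell_{i,j}(V_t \vy)$, so if the latter contain $r$ independent vectors in $\F^r$, the former cannot suddenly have rank $\leq r-1$. Making this compatibility between the reduction operations and the image of $V_t^T$ precise (following the Karnin-Shpilka framework) is the technical heart of the argument.
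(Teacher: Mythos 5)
First, note that the paper does not actually prove this statement: it is imported verbatim as a known fact, citing the survey \cite{sy} (the result is due to Karnin and Shpilka \cite{KarninS11}, building on the rank bounds of \cite{DS07} and their improvements), so there is no internal proof to compare against; what follows evaluates your sketch against the known argument. Your skeleton is the right one: the reverse direction is trivial, the case $\dim L \le r$ is handled correctly by injectivity of $V_t^T$ on the span (a nonzero polynomial in algebraically independent linear forms stays nonzero), and you correctly locate the difficulty in how simplification and minimization interact with the projection. One small slip along the way: passing to the simple part of $F$ (dividing out the gcd of the gates) does \emph{not} preserve the polynomial computed; one must write $F$ as (gcd)$\cdot$(simple part) and hit the two factors separately.

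The genuine gap is in your high-rank case. Your plan is to let the composed circuit $F(V_t\vecy)$ be simplified and minimized into $\tilde F$, and then argue that the $r$ independent image forms supplied by the condenser force $\tilde F$ to retain rank $\ge r$; the justification offered ("each surviving form lies in the span of the $\ell_{i,j}(V_t\vecy)$, so they cannot have rank $\le r-1$") is a non sequitur. All image forms live in $\F^r$ with $r=R(k,d)+1$, and the forms surviving after removing the gcd of the composed circuit --- or after passing to a minimal zero subcircuit, which may involve only a subset of the $k$ gates --- are only a sub(multi)set of the image forms; their span can easily drop to dimension $R(k,d)$ even though the full image set spans $\F^r$. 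In particular, non-proportional forms of a simple $F$ can become proportional under the projection, creating a new gcd that absorbs exactly the rank you are counting on, and nothing in your argument controls this. The known proof avoids simplifying the image altogether: after reducing the \emph{original} circuit to its simple, minimal part, one chooses $t$ generically so that (i) every pair of independent linear forms of $F$ remains independent, which forces the gcd of $F(V_t\vecy)$ to be the image of the gcd of $F$, hence the composed circuit is already simple, and (ii) by induction on the top fan-in $k$, every proper nonempty subcircuit of $F$ (nonzero by minimality, fan-in $<k$) already has nonzero image, hence the composed circuit is already minimal. Only then can the Dvir--Shpilka/Karnin--Shpilka rank bound be applied directly to the composed circuit, and together with rank preservation up to threshold $R(k,d)+1$ it yields the contradiction. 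The induction on $k$ and the pairwise-independence preservation are the missing ingredients; without them the "compatibility" you defer is precisely the unproved core, and the route you propose for it does not go through.
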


Using \autoref{fact:spsk-gen} and the properties of \autoref{con:succinct-rk-cond}, we obtain the following two lemmas.
\begin{lemma}
\label{lem:spsk-succinct}
The polynomial map $\Grc_{n,R(k,d)} (\vecy,\vt) $ is $\poly(R(k,d), n)$-$\SPS$ succinct. In particular, the generator is $\poly(k, \log d, n)$-$\SPS$ succinct.
\end{lemma}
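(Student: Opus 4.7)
The plan is to combine the three ingredients already assembled: the rank-condenser-based hitting set of \autoref{fact:spsk-gen}, the embedding property of \autoref{cl:succinct-vdm-is-vdm}, and the succinct representation guaranteed by \autoref{fact:vdm-is-succinct}. The statement to prove has two parts: (i) $\Grc_{n,R(k,d)}$ is a generator (equivalently, its image, over all fixings of $(\vy,\vt)$, is a hitting set) for $\SPSk$ degree-$d$ formulas in $N=2^n$ variables, and (ii) every vector in that image is the coefficient vector of a multilinear $\SPS$ formula of size $\poly(R(k,d),n)$.

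For part (i), I would set $r = R(k,d)+1$ and consider $\Grc_{n,r}$. By \autoref{cl:succinct-vdm-is-vdm}, specializing $t_0 \leftarrow t$ and $t_k \leftarrow t^{2^{k-1}}$ turns $\Grc_{n,r}$ into the Vandermonde map $V_t \cdot (y_1,\ldots,y_r)^T$ from \autoref{fact:spsk-gen} (after identifying $[N]$ with subsets of $[n]$ via binary expansions). Hence for any non-zero $\SPSk$ formula $F\in \F[c_1,\ldots,c_N]$, the polynomial $F \circ \Grc_{n,r}(\vy,\vt)$ specializes to the non-zero polynomial $F(V_t(\vy))$, so $F\circ \Grc_{n,r} \not\equiv 0$. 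This is precisely the defining condition \eqref{defn:succ-gen:gen} of \autoref{defn:succ-gen}, establishing that $\Grc_{n,r}$ is a generator for $\SPSk$ formulas.

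For part (ii), I would invoke \autoref{fact:vdm-is-succinct}: for every fixing $\vy = \vaa \in \F^r$ and $\vt = \vbb \in \F^{n+1}$, the polynomial $\Prc_{n,r}(\vx,\vaa,\vbb) = \sum_{j=1}^r \alpha_j \beta_0^j \prod_{k=1}^n (1 + x_k \beta_k^j)$ is evidently a multilinear $\SPS$ formula in $\vx$ of size $\poly(r,n)$, whose coefficient vector is exactly the evaluation $\Grc_{n,r}(\vaa,\vbb)$ by construction. Thus every point in the image of the generator is the coefficient vector of a multilinear $\SPS$ formula of size $\poly(r,n)$, which verifies condition \eqref{defn:succ-gen:output} in \autoref{defn:succ-gen} with $\cC$ being the class of such formulas.

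Plugging in $r = R(k,d) + 1 = O(k^2\log d)$ over finite fields (or $O(k^2)$ over infinite fields) from \autoref{fact:spsk-gen} yields the claimed succinctness bound of $\poly(R(k,d),n)$, and hence $\poly(k,\log d, n)$. I do not expect any real obstacles: both of the nontrivial ingredients (the rank-condenser generator and its succinct representation as a $\SPS$ formula) are already packaged as lemmas, so the argument amounts to checking that the requirements of \autoref{defn:succ-gen} match what those lemmas deliver and tracking the parameter $r$.
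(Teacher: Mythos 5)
Your succinctness argument is exactly the paper's proof: it follows immediately from \autoref{fact:vdm-is-succinct} that every fixing of $(\vecy,\vt)$ yields a multilinear $\Sigma\Pi\Sigma$ formula of size $\poly(r,n)$ whose coefficient vector is the corresponding image point, and the $\poly(k,\log d,n)$ bound comes from substituting the rank bound $R(k,d)$ from \autoref{fact:spsk-gen}. Your part (i), the generator/hitting property, is correct but is not part of this lemma --- it is precisely the content of \autoref{lem:spsk-hits} --- so you prove more than was asked, by the same route the paper takes.
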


\begin{proof}
The first statement is immediate from \autoref{fact:vdm-is-succinct}.
The second statement follows using the rank bounds for $\SPSk$ formulas stated in \autoref{fact:spsk-gen}.
\end{proof}

\begin{lemma}
\label{lem:spsk-hits}
Let $F$ be computed by a $\SPSk$ formula. Then $F \circ \Grc_{n,R(k,d)} \not\equiv 0$.
\end{lemma}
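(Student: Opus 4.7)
The plan is to reduce the claim directly to \autoref{fact:spsk-gen} by exhibiting a specialization of the $\vt$ variables that recovers the Gabizon--Raz Vandermonde generator. Concretely, I would set $t_0 = t$ and $t_k = t^{2^{k-1}}$ for $k = 1, \ldots, n$, where $t$ is a single new formal variable, and observe that under this specialization the output of $\Grc_{n,R(k,d)}$ becomes exactly the image of the Vandermonde map applied to $(y_1, \ldots, y_r)$.

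The key step is \autoref{cl:succinct-vdm-is-vdm}, which has already been established: after the specialization $\vt \leftarrow (t, t^{2^0}, \ldots, t^{2^{n-1}})$, the $i$-th coordinate of $\Grc_{n,R(k,d)}$ is precisely $\sum_{j=1}^r y_j t^{ij}$ under the identification $[N] \leftrightarrow 2^{[n]}$ via binary representation. In matrix form this is exactly $V_t \cdot (y_1, \ldots, y_r)^T$ where $V_t$ is the $N \times r$ Vandermonde matrix from \autoref{fact:spsk-gen}, with $r = R(k,d) + 1$.

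Now suppose $F \not\equiv 0$ is computed by a $\SPSk$ formula of degree $d$. By \autoref{fact:spsk-gen}, the $r$-variate polynomial $F \circ (V_t \cdot (y_1, \ldots, y_r)^T)$ is non-zero in $\F[\vy, t]$. But this polynomial is obtained from $F \circ \Grc_{n,R(k,d)}(\vy, \vt)$ by the substitution $\vt \leftarrow (t, t^{2^0}, \ldots, t^{2^{n-1}})$. If $F \circ \Grc_{n,R(k,d)}$ were identically zero as a polynomial in $\F[\vy, \vt]$, then every specialization of $\vt$ would also vanish, contradicting the previous non-vanishing. Hence $F \circ \Grc_{n,R(k,d)} \not\equiv 0$, as desired.

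There is no real obstacle here: the content is entirely in \autoref{cl:succinct-vdm-is-vdm}, which supplies the embedding of the Vandermonde generator into $\Grc_{n,R(k,d)}$, together with the pre-existing correctness of the Vandermonde generator from \autoref{fact:spsk-gen}. The proof is a one-line specialization argument once the indexing convention $[N] \leftrightarrow 2^{[n]}$ is in place.
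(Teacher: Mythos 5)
Your proposal is correct and matches the paper's own argument, which likewise derives the lemma immediately from \autoref{cl:succinct-vdm-is-vdm} (via the substitution $\vt \leftarrow (t, t^{2^0}, \ldots, t^{2^{n-1}})$) combined with \autoref{fact:spsk-gen}; your write-up simply spells out the specialization step that the paper leaves implicit.
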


\begin{proof}
Immediate from \autoref{cl:succinct-vdm-is-vdm} (making the appropriate substitution for $\vt$) and \autoref{fact:spsk-gen}.
\end{proof}

\begin{corollary}
$\Grc_{n,R(k,d)} (\vecy,\vt)$ is a $\poly(k, \log d, n)$-$\SPS$ succinct generator for the class of $\SPSk$ formulas.
\end{corollary}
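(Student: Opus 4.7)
The plan is to combine \autoref{lem:spsk-succinct} and \autoref{lem:spsk-hits} to verify the three conditions of \autoref{defn:succ-gen} for the map $\Grc_{n,R(k,d)}(\vy,\vt)$. The relevant instantiation of the definition is: the monomial set $\cM$ is the set of multilinear monomials in $x_1,\ldots,x_n$ (so $|\cM|=2^n=N$); the class $\cC$ witnessing succinctness is the class of $\poly(k,\log d, n)$-size multilinear $\SPS$ formulas; the class $\cD$ being fooled is the class of $\SPSk$ formulas of degree $d$ on $N$ variables; and the ambient class $\cC'$ may be taken to be $\Sigma\Pi\Sigma\Pi$ of $\poly(k,\log d, n)$-size.

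For condition \eqref{defn:succ-gen:indexing}, I would observe that by \autoref{con:succinct-rk-cond}, the polynomial $\Prc_{n,R(k,d)}(\vx,\vy,\vt)$ is multilinear in $\vx$ and its $\vx$-coefficient vector is precisely $\Grc_{n,R(k,d)}(\vy,\vt)$; thus the polynomial $G$ of \autoref{defn:succ-gen} is exactly $\Prc_{n,R(k,d)}$, and \autoref{fact:vdm-is-succinct} furnishes the required $\Sigma\Pi\Sigma\Pi$ circuit of the appropriate size. Condition \eqref{defn:succ-gen:output} is immediate from \autoref{lem:spsk-succinct}, which states that each specialization $\Prc_{n,R(k,d)}(\vx,\vaa,\vbb)$ is a $\poly(k,\log d, n)$-size $\SPS$ formula; these fibers remain multilinear in $\vx$ because $\Prc_{n,R(k,d)}$ is. Finally, condition \eqref{defn:succ-gen:gen} is exactly the conclusion of \autoref{lem:spsk-hits} (the trivial direction $F\equiv 0 \Rightarrow F\circ\Grc_{n,R(k,d)}\equiv 0$ being automatic).

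There is no real obstacle: this is a bookkeeping step, as all substantive content has been absorbed into the two preceding lemmas (which in turn rest on the identification in \autoref{cl:succinct-vdm-is-vdm} of $\Grc_{n,R(k,d)}$, after substituting $t_k\leftarrow t^{2^{k-1}}$, with the Gabizon--Raz Vandermonde rank condenser, together with the rank-bound generator of \autoref{fact:spsk-gen}). The only thing worth double-checking is that the rank bound $R(k,d) = O(k^2 \log d)$ (or $k^2$ over infinite fields) from \autoref{fact:spsk-gen} is indeed $\poly(k,\log d)$, so that the succinctness parameter stated in the corollary matches; this is immediate in both the finite and infinite field regimes.
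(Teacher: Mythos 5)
Your proposal is correct and matches the paper's approach: the corollary is stated there as the immediate combination of \autoref{lem:spsk-succinct} (succinctness of the fibers) and \autoref{lem:spsk-hits} (the hitting property), with no further argument given. Your explicit bookkeeping against the conditions of \autoref{defn:succ-gen}, including the check that $R(k,d)$ is $\poly(k,\log d)$, is a faithful (and slightly more detailed) rendering of the same proof.
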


\subsection{Depth-$3$ circuits of bounded transcendence degree}\label{subsec:trdeg}

We now generalize the results of \autoref{subsec:spsk} to obtain a succinct hitting set for the larger class of circuits with \emph{bounded transcendence degree}.

A set of polynomials $\set{F_1, \ldots, F_r} \subseteq \F[\vecX]$ is called \emph{algebraically independent} if for any non-zero polynomial $H \in \F[w_1, \ldots, w_r]$, $H(F_1, \ldots, F_r) \not \equiv 0$. Given a set of polynomials $\set{F_1, \ldots, F_\ell}$, the \emph{transcendence degree} of this set, denoted $\trdeg{F_1, \ldots, F_\ell}$, is the size of a maximal algebraically independent subset of $\set{F_1, \ldots, F_\ell}$.

Let $C(Y_1, \ldots, Y_M)$ be a circuit of polynomial degree, and for $i \in [m]$, let $T_i = \prod_{j=1}^{d} L_{i,j}$, where $L_{i,j} \in \F[X_1, \ldots, X_N]$ are linear functions. In \cite{ASSS16}, Agrawal et al.\ present a hitting set for polynomials of the form $F=C(T_1, \ldots, T_M)$, where $\trdeg{T_1, \ldots, T_m}$ is bounded by $k$ (the size of the hitting set is exponential in $k$). In this section we present a succinct version of their generator.

\begin{lemma}[Generator for circuits of transcendence degree $k$, \cite{ASSS16}, and see also the presentation in Chapter 4 of \cite{rp}]
\label{lem:trdeg-k-generator}
Suppose $\F$ is a field such that $\Char(\F) = 0$ or $\Char(\F) \ge d^k$. Then the map $\Psi : \F[\vecX] \to \F[y_1, \ldots, y_k, t, z_1, \ldots, z_k, s]$, given by
\[
X_i \mapsto \sum_{j=1}^{k+1} z_j s^{ij} + \sum_{j=1}^k y_j t^{ij}
\]
for every $i \in [N]$, is a generator for the class of polynomials $F \in \F[\vecX]$ expressible  as $C(T_1, \ldots, T_M)$, where the $T_i$'s are products of linear functions and $\trdeg{T_1, \ldots, T_m} \le k$.
\end{lemma}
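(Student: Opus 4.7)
The proof will follow the \emph{faithful homomorphism} paradigm of Agrawal, Saha, Saptharishi, and Saxena~\cite{ASSS16}. Call a homomorphism $\Phi \colon \F[\vecX] \to R$ \emph{faithful} to $T_1, \ldots, T_m$ if $\trdeg\{\Phi(T_1), \ldots, \Phi(T_m)\} = \trdeg\{T_1, \ldots, T_m\}$. A standard lemma then gives that for any polynomial $C(w_1, \ldots, w_m)$, one has $C(T_1, \ldots, T_m) \ne 0$ if and only if $C(\Phi(T_1), \ldots, \Phi(T_m)) \ne 0$, which is exactly the generator conclusion. So the task reduces to showing that $\Psi$ is faithful to $T_1, \ldots, T_m$.

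To establish faithfulness, the plan is to invoke the Jacobian criterion, which under the characteristic hypothesis $\Char(\F) = 0$ or $\Char(\F) \ge d^k$ equates $\trdeg\{T_1, \ldots, T_m\}$ with the rank over $\F(\vecX)$ of the Jacobian matrix $\mathcal{J}_T = (\partial T_i / \partial X_j)_{i,j}$. Let $r \le k$ be this rank; then some $r \times r$ submatrix of $\mathcal{J}_T$ has nonzero determinant $\Delta \in \F[\vecX]$. Applying the Jacobian criterion again to $\Psi(T_1), \ldots, \Psi(T_m)$ and using the chain rule $\mathcal{J}_{\Psi(T)} = \Psi(\mathcal{J}_T) \cdot \mathcal{J}_\Psi$, faithfulness reduces to showing that this product has rank $r$ over $\F(\vy, t, \vz, s)$. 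The two-part structure of $\Psi$ is tailored for this: the $(\vz, s)$-part contributes to $\mathcal{J}_\Psi$ a Vandermonde $V_s$ of width $k+1$, which by the Gabizon--Raz rank-condenser guarantee preserves rank for matrices of rank at most $k+1$, and the $(\vy, t)$-part supplies an analogous $N \times k$ Vandermonde $V_t$ that will serve as an independent perturbation.

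The main obstacle will be in verifying rank preservation of $\mathcal{J}_T$ itself under $\Psi$, since the entries of $\mathcal{J}_T$ are polynomials in $\vecX$ rather than scalars, so the standard Gabizon--Raz guarantee for constant matrices does not apply out of the box. The key is to use the hypothesis that the $T_i$ are products of linear forms: each derivative $\partial T_i / \partial X_l$ factors as $T_i$ times a rational function depending only on the coefficients of the linear forms $L_{i,j}$, so the Jacobian minor $\Delta$ decomposes as a product of some $T_i$'s times a polynomial that encodes a rank-$r$ linear-algebraic dependence among coefficient vectors of the $L_{i,j}$. The rank condenser $V_s$ preserves the latter factor by its defining property applied to the $r$-dimensional coefficient subspace, while the $(\vy, t)$-perturbation guarantees that the surviving $T_i$ factors, evaluated at the substituted point, do not accidentally vanish together. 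A careful accounting should then conclude that $\Psi(\Delta) \ne 0$ in $\F[\vy, t, \vz, s]$, giving faithfulness and hence the desired generator property.
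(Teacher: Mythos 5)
The paper does not actually prove \autoref{lem:trdeg-k-generator}: it is imported from \cite{ASSS16} (with the key ``recipe'' restated later as \autoref{lem:faithful-recipe}), so your sketch is an attempted reconstruction of the ASSS16 argument. Your high-level frame is the right one: reduce to faithfulness, use the Jacobian criterion under the characteristic hypothesis, and analyze $\Jac(\Psi(\vec{T})) = \Psi(\Jac_{\vecX}(\vec{T}))\cdot \Jac(\Psi(\vecX))$ via the chain rule, with the Vandermonde blocks of $\Jac(\Psi(\vecX))$ doing the rank condensing.

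However, the execution has a genuine gap, together with a structural confusion. In the ASSS16 recipe (exactly as quoted in \autoref{lem:faithful-recipe}), the division of labor is the reverse of what you describe: the $(\vecy,t)$-part, whose Jacobian with respect to $\vecy$ is the $N\times k$ Vandermonde $(t^{ij})$, is the generic rank-condensing part supplied by the recipe, while the $(\vecz,s)$-part plays the role of the homomorphism $\Phi$ which must \emph{by itself} preserve the rank of $\Jac_{\vecX}(\vec{T})$, i.e.\ keep some $r\times r$ minor $\Delta$ nonzero under the entrywise substitution $X_i\mapsto\sum_{j}z_js^{ij}$. That statement is the real content of the lemma, and your sketch does not establish it. The structural claim you lean on is also false as stated: $\partial T_i/\partial X_l = T_i\cdot\sum_j c_{i,j,l}/L_{i,j}$, so the cofactor involves the linear forms themselves in the denominators, not ``only the coefficients''; consequently $\Delta$ is $\prod_{i\in S}T_i$ times a determinant whose expansion is a sum, over choices of one linear form per $T_i$, of constant determinants divided by the chosen forms --- not a polynomial recording an $r$-dimensional coefficient subspace to which the Gabizon--Raz guarantee for constant matrices could be applied (an obstacle you flag yourself, but then resolve only with ``a careful accounting should conclude $\Psi(\Delta)\neq 0$'', which is precisely where the work of \cite{ASSS16} lies). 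Your appeal to the $(\vecy,t)$-part to keep the ``surviving $T_i$ factors'' from vanishing is likewise unsupported and unnecessary: one cannot substitute the two parts of $\Psi$ into different factors separately, and in the recipe $\Phi$ alone must preserve $\Delta\neq 0$ (note a nonzero linear form does survive $X_i\mapsto\sum_j z_js^{ij}$, since $\sum_i a_is^{ij}\not\equiv 0$ for $\veca\neq 0$). Finally, a smaller point: when you ``apply the Jacobian criterion again'' to $\Psi(T_1),\ldots,\Psi(T_M)$, the bound $\Char(\F)\ge d^k$ is calibrated to the original degrees, not the much larger degrees after substitution; in positive characteristic you should instead invoke only the unconditional inequality $\trdeg{\cdot}\ge\rank\Jac$, which is all that direction of the argument needs.
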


It remains to be noted that we can construct the map $\Psi$ succinctly.

\begin{theorem}
\label{thm:succinct-trdeg}
Suppose $\F$ is a field such that $\Char(F) = 0$ or $\Char(F) \ge d^k$. Then there exists a $\poly(k, n)$-$\SPS$ succinct generator for the class of polynomials that can be represented as $C(T_1, \ldots, T_M)$ with $C$ being a $\poly(N)$ degree circuit, each $T_i$ is a product of $d$ linear functions and $\trdeg{T_1, \ldots, T_M} \le k$.
\end{theorem}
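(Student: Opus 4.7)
The plan is to combine the generator of \autoref{lem:trdeg-k-generator} with the succinct rank condenser of \autoref{con:succinct-rk-cond}, exactly as is done in \autoref{subsec:spsk} for $\SPSk$ formulas. The key observation is that the map $\Psi$ from \autoref{lem:trdeg-k-generator} already has the shape of two superimposed Vandermonde-style maps: the contribution $\sum_{j=1}^{k+1} z_j s^{ij}$ in coordinate $i$ is the $i$-th entry of the Gabizon--Raz map applied with ``Vandermonde parameter'' $s$ and coefficients $\vz$, and similarly $\sum_{j=1}^{k} y_j t^{ij}$ is the $i$-th entry with parameter $t$ and coefficients $\vy$. By \autoref{cl:succinct-vdm-is-vdm}, precisely these sums arise as the $i$-th coordinate of $\Grc_{n,k+1}$ (resp.\ $\Grc_{n,k}$) once $\vt$ is specialized to the powers of $2$ of a single variable.

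Concretely, I would introduce fresh formal variable vectors $\vz \in \F^{k+1}$, $\vy \in \F^k$, $\vs \in \F^{n+1}$, $\vec{u} \in \F^{n+1}$, and define
\[
P(\vx,\vy,\vz,\vs,\vec{u}) \;\eqdef\; \Prc_{n,k+1}(\vx,\vz,\vs) \;+\; \Prc_{n,k}(\vx,\vy,\vec{u}),
\]
letting $\cvG(\vy,\vz,\vs,\vec{u}) \eqdef \coeff_{\vx}(P)$ be the claimed succinct generator. Because $\coeff_{\vx}$ is linear, $\cvG = \Grc_{n,k+1}(\vz,\vs) + \Grc_{n,k}(\vy,\vec{u})$. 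By \autoref{fact:vdm-is-succinct}, for every fixing of the seed $(\vy,\vz,\vs,\vec{u})$ each of $\Prc_{n,k+1}(\vx,\cdot,\cdot)$ and $\Prc_{n,k}(\vx,\cdot,\cdot)$ is a $\SPS$ circuit in $\vx$ of size $\poly(k,n)$, so their sum is a $\SPS$ circuit in $\vx$ of size $\poly(k,n)$; this gives $\poly(k,n)$-$\SPS$ succinctness in the sense of \autoref{defn:succ-gen}\eqref{defn:succ-gen:output} and \eqref{defn:succ-gen:indexing}.

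It remains to check the generator property \eqref{defn:succ-gen:gen}. Let $F = C(T_1,\ldots,T_M)$ be any polynomial in the target class. I would apply the substitution
\[
\vs \;\longleftarrow\; (s, s^{2^{0}}, s^{2^{1}}, \ldots, s^{2^{n-1}}), \qquad \vec{u} \;\longleftarrow\; (t, t^{2^{0}}, t^{2^{1}}, \ldots, t^{2^{n-1}})
\]
inside $\cvG$. By \autoref{cl:succinct-vdm-is-vdm} applied to each summand, the $i$-th coordinate of the resulting polynomial map is exactly $\sum_{j=1}^{k+1} z_j s^{ij} + \sum_{j=1}^{k} y_j t^{ij} = \Psi(X_i)$. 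Hence $F \circ \cvG$, viewed after this specialization, equals $F \circ \Psi$, which is nonzero by \autoref{lem:trdeg-k-generator} (using the characteristic hypothesis). Since a specialization of $F \circ \cvG$ is nonzero, $F \circ \cvG$ itself is nonzero, establishing the generator property.

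I do not anticipate a genuine obstacle: the whole proof is essentially a bookkeeping assembly, mirroring the $\SPSk$ argument of \autoref{subsec:spsk} but applied to two parallel rank-condenser instances rather than one. The only point that deserves a line of care is confirming that the characteristic assumption $\Char(\F) = 0$ or $\Char(\F) \ge d^k$ is needed solely to invoke \autoref{lem:trdeg-k-generator} and plays no role in the succinctness argument, so the hypothesis transfers from \autoref{lem:trdeg-k-generator} to \autoref{thm:succinct-trdeg} unchanged.
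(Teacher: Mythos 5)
Your proposal is correct and matches the paper's proof: the paper likewise defines $P(\vecx,\vecy,\vecz,\vs,\vt) = \Prc_{n,k+1}(\vecx,\vecz,\vs) + \Prc_{n,k}(\vecx,\vecy,\vt)$, takes $\coeff_{\vecx}$, and derives succinctness from \autoref{fact:vdm-is-succinct} plus closure of $\SPS$ circuits under addition, with the generator property coming from \autoref{lem:trdeg-k-generator} via the power substitution of \autoref{cl:succinct-vdm-is-vdm}. Your write-up just spells out more explicitly the specialization step that the paper leaves implicit.
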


\begin{proof}
Observe that $\Psi$ from \autoref{lem:trdeg-k-generator} can be represented as $\coeff_{\vecx} (P(\vecy, \vecz, \vs, \vt))$, where
\[
P(\vecx, \vecy, \vecz, \vs, \vt) = \Prc_{n,k+1} (\vecx, \vecz, \vs) + \Prc_{n,k} (\vecx, \vecy, \vt).
\]
The succinctness follows from \autoref{fact:vdm-is-succinct}, and from observing that $\poly(k,n)$-$\SPS$ circuits are closed under addition.
\end{proof}

\section{Succinct Hitting Sets via the Shpilka-Volkovich Generator}\label{sec:sv}

The Shpilka-Volkovich Generator (SV Generator, henceforth, and see \cite{SV15}) is a polynomial map $\cG (y_1, \ldots, y_k, z_1, \ldots, z_k) : \F^{2k} \to \F^{N}$ that satisfies the property that for every $T \subseteq [N]$ such that $|T| \le k$, we can set $z_1, \ldots, z_k$ to values $\alpha_{i_1}, \ldots, \alpha_{i_k}$ such that the $\vecy$ variables are mapped to the locations indexed by $T$, and the other coordinates of the polynomial map are zeroed out.  This property turns out to be immensely useful in constructing hitting sets for various classes. Hence, we begin by constructing a succinct analog of this generator, and then use it to obtain succinct hitting sets in cases where the SV generator is applicable.

\begin{construction}[Succinct SV Generator]
\label{construction:ssv}
Let $n \in \N$ and $N=2^n$. Define
\[
P(z_1, \ldots, z_n, x_1, \ldots, x_n) = \prod_{i=1}^n (z_i \cdot x_i  + (1-z_i)),
\]
and
\[
Q^{\SSV}_{n,k}(y_1, \ldots, y_k ,z_{1,1}, \ldots, z_{1,n}, \ldots, z_{k,1}, \ldots, z_{k,n}, x_1, \ldots, x_n) = \sum_{i \in [k]} y_i \cdot P(\vecz_i, \vecx), 
\]
where $\vecz_i = (z_{i,1}, \ldots, z_{i,n})$. Finally, let
\[
\cG_{n,k}^{\SSV} (y_1, \ldots, y_k, \vecz_1, \ldots, \vecz_k) = \coeff_{\vecx} ( Q_{n,k}^{\SSV}(\vecy,\vecz,\vecx) ). \qedhere
\]
\end{construction}

We begin by stating some immediate facts regarding \autoref{construction:ssv}.

\begin{fact}[Succinctness]
\label{fact:ssv-succinct}
For every setting $\vecy=\vecalpha, \vecz=\vecbeta$, the polynomial $Q_{n,k}^{\SSV}$ is computed by a multilinear $\Sigma\Pi\Sigma$ circuit of size $\poly(n,k)$.
\end{fact}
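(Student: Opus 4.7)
The plan is a direct syntactic inspection of Construction 4.1. First, I would substitute the arbitrary scalar values $\vecy = \vecalpha \in \F^k$ and $\vecz = \vecbeta \in \F^{nk}$ (writing the components of $\vecbeta$ as $\beta_{i,j}$) into the definition of $Q^{\SSV}_{n,k}$ to obtain
\[
Q^{\SSV}_{n,k}(\vecalpha,\vecbeta,\vecx) \;=\; \sum_{i=1}^k \alpha_i \prod_{j=1}^n \bigl(\beta_{i,j}\, x_j + (1-\beta_{i,j})\bigr).
\]

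Next, I would argue that this expression is already in $\Sigma\Pi\Sigma$ form. The top sum has fan-in $k$ (with the scalar $\alpha_i$ absorbed into one linear factor of the $i$th product, or treated as a separate constant factor), each $\Pi$ gate has fan-in $n$, and each bottom linear form $\beta_{i,j}\, x_j + (1-\beta_{i,j})$ is a univariate linear polynomial of fan-in at most $2$. Counting wires, the total size is $O(nk) = \poly(n,k)$, as desired.

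For multilinearity, the key observation is that inside the $i$th product the $j$th factor depends only on the variable $x_j$, so the $n$ factors across the product act on pairwise disjoint variable sets; their product is therefore multilinear. The whole circuit is then a sum of multilinear polynomials, hence multilinear, and each intermediate gate computes a multilinear polynomial as required by the multilinearity of the circuit.

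The statement follows by inspection of the construction and there is no real obstacle; indeed, Construction~\ref{construction:ssv} was designed precisely so that every substitution of $(\vecy,\vecz)$ yields a circuit of this form, a feature that will be needed for the succinct-hitting-set applications in the subsequent subsections.
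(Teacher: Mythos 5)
Your proposal is correct and matches the paper's intent exactly: the paper states this as one of the ``immediate facts'' following Construction~\ref{construction:ssv} without further argument, and your direct substitution showing $\sum_{i=1}^k \alpha_i \prod_{j=1}^n (\beta_{i,j} x_j + (1-\beta_{i,j}))$ is a size-$O(nk)$ $\Sigma\Pi\Sigma$ formula, multilinear because each product ranges over disjoint variables, is precisely the intended justification. Nothing is missing.
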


\begin{fact}[Additivity]
\label{fact:ssv-additive}
The succinct SV-generator is additive in $\vecy, \vecz$, in the sense that as polynomials, we have the equality
\[
Q^{\SSV}_{n, k_1} (\vecy_1, \vecz_1, \vecx) + Q^{\SSV}_{n, k_2} (\vecy_2, \vecz_2, \vecx) = Q^{\SSV}_{n, k_1+k_2} (\vecy', \vecz', \vecx),
\]
where $\vecy' = (\vecy_1, \vecy_2)$ and $\vecz' = (\vecz_1, \vecz_2)$. In particular, since the mapping from a polynomial to the coefficients vector is linear, as polynomial maps we get the equality
\[
\cG_{n,k_1}^{\SSV} (\vecy_1, \vecz_1) + \cG_{n,k_2}^{\SSV} (\vecy_2, \vecz_2) = \cG_{n, k_1 + k_2}^{\SSV} (\vecy', \vecz').
\]
\end{fact}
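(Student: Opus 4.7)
The plan is simply to unfold the definition of $Q^{\SSV}_{n,k}$ from \autoref{construction:ssv}. By definition,
\[
Q^{\SSV}_{n,k}(\vecy,\vecz,\vecx) \;=\; \sum_{i=1}^{k} y_i \cdot P(\vecz_i,\vecx),
\]
so $Q^{\SSV}_{n,k_1}(\vecy_1,\vecz_1,\vecx)$ is a sum of $k_1$ summands of the form $(y_1)_i \cdot P((\vecz_1)_i,\vecx)$ and $Q^{\SSV}_{n,k_2}(\vecy_2,\vecz_2,\vecx)$ is a sum of $k_2$ summands of the same shape, but in disjoint variables. Adding these two polynomials yields a sum of $k_1+k_2$ summands, each still of the form $y\cdot P(\vecz,\vecx)$ in freshly named $y, \vecz$ variables. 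Relabeling so that the first $k_1$ summands use the variables $\vecy_1,\vecz_1$ and the next $k_2$ summands use $\vecy_2,\vecz_2$, i.e.\ setting $\vecy' = (\vecy_1,\vecy_2)$ and $\vecz' = (\vecz_1,\vecz_2)$, turns this sum into the definition of $Q^{\SSV}_{n,k_1+k_2}(\vecy',\vecz',\vecx)$. This gives the first identity.

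For the second identity, I would appeal to linearity of the coefficient extraction map: viewing $Q^{\SSV}_{n,k}(\vecy,\vecz,\vecx)$ as an element of the polynomial ring $(\F[\vecy,\vecz])[\vecx]$, the operator $\coeff_\vecx(\cdot)$ is $\F[\vecy,\vecz]$-linear and in particular additive. Hence applying $\coeff_\vecx$ to both sides of the just-established polynomial identity, and using the definition $\cG^{\SSV}_{n,k} = \coeff_\vecx(Q^{\SSV}_{n,k})$, yields the claimed identity of polynomial maps. Note that the $\vecy$ and $\vecz$ variables appearing on the left-hand side are disjoint between the two summands, so there is no ambiguity in the concatenation $\vecy',\vecz'$.

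There is no real obstacle here; the statement is essentially just unpacking the fact that $Q^{\SSV}_{n,k}$ was defined as an $\F$-linear combination indexed by $[k]$, together with linearity of $\coeff_\vecx$. No degree bookkeeping, no manipulation of the product $P(\vecz_i,\vecx)$, and no use of the properties of \autoref{fact:ssv-succinct} are needed. The only thing to be slightly careful about is the relabeling of $\vecy,\vecz$ variables, but this is a formality since both sides are viewed as polynomials in formal indeterminates.
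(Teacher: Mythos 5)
Your proof is correct and matches the paper's reasoning: the paper states this as an immediate fact (with no separate proof), relying exactly on unfolding the definition of $Q^{\SSV}_{n,k}$ as a sum of $k$ terms $y_i\cdot P(\vecz_i,\vecx)$ in disjoint variables, plus the linearity of $\coeff_\vecx$ already noted in the statement itself. Nothing is missing.
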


The usefulness of the generator comes from the following property, which is, in some sense, the algebraic analog of $k$-wise independence.

\begin{lemma}
\label{lem:SSV-indep-prop}
For every $T \subseteq [N]$ such that $|T| \le k$, there is a fixing of the $\vecz$ variables, and possibly of some of the $\vecy$ variables, such that in the mapping $\cG_{n,k}^{\SSV}$, $|T|$ distinct $\vecy$ variables are planted in the coordinates corresponding to $T$, while the rest of the entries are zeroed out.
\end{lemma}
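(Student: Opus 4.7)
The plan is to unpack the definition of the generator and observe that the inner polynomial $P(\vecz,\vecx)$ behaves as a delta function when $\vecz$ is set to $\{0,1\}$-values. Concretely, I will first expand
\[
P(\vecz,\vecx) = \prod_{i=1}^n \bigl(z_i x_i + (1-z_i)\bigr) = \sum_{S \subseteq [n]} \Bigl(\prod_{i \in S} z_i \prod_{i \notin S}(1-z_i)\Bigr) \prod_{i \in S} x_i,
\]
so that the coefficient of $\vx^S := \prod_{i \in S} x_i$ in $P(\vecz,\vecx)$ is precisely the product $\prod_{i \in S} z_i \prod_{i \notin S}(1-z_i)$.

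Next I will use the identification of $[N]$ with $2^{[n]}$ via binary representation. For any subset $S \subseteq [n]$, substituting $\vecz = \chi_S$ (the $\{0,1\}$-indicator of $S$) into the formula above causes every summand except the $S$-summand to contain a factor of either $z_i = 0$ (for $i \in S \setminus S'$) or $1-z_i = 0$ (for $i \in S' \setminus S$). Hence $P(\chi_S, \vecx) = \vx^S$, and its $\vecx$-coefficient vector is the standard basis vector $e_S \in \F^N$.

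Now given $T \subseteq [N]$ with $|T| = \ell \le k$, enumerate $T = \{S_1, \ldots, S_\ell\}$ as subsets of $[n]$. I will set $\vecz_j := \chi_{S_j}$ for $j \le \ell$ and $y_j := 0$ for $j > \ell$; the remaining $\vecz_j$ for $j > \ell$ can be set arbitrarily since those terms are already killed. By the previous step, each $P(\chi_{S_j}, \vecx) = \vx^{S_j}$, so
\[
Q^{\SSV}_{n,k}(\vecy, \vecz, \vecx)\Big|_{\text{substitution}} = \sum_{j=1}^{\ell} y_j \cdot \vx^{S_j}.
\]
Taking $\coeff_\vecx$ yields a vector in $\F[\vy]^N$ that has the distinct variable $y_j$ in coordinate $S_j \in T$ for each $j \le \ell$, and $0$ in every coordinate outside $T$, as required.

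There is no real obstacle here: the lemma is essentially a design property built into \autoref{construction:ssv}, and the argument is a direct verification once the coefficient formula for $P(\vecz,\vecx)$ is written down. The only thing to be careful about is the bookkeeping between subsets $S \subseteq [n]$ and indices in $[N]$, which is handled once and for all by fixing the binary-representation bijection.
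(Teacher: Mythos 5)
Your proposal is correct and follows essentially the same argument as the paper: substitute $\{0,1\}$-indicator vectors for the $\vecz_j$ so that $P(\chi_{S_j},\vecx)=\vx^{S_j}$, making distinct $\vecy$ variables the coefficients of exactly the monomials indexed by $T$. The only cosmetic difference is the handling of $|T|<k$ (you zero out the extra $y_j$ directly rather than first extending $T$ to size $k$ as the paper does), which changes nothing of substance.
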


\begin{proof}
As before, it is convenient to think of a subset of the $N$ coordinates as family of subsets of $[n]$.

Since $\cG_{n,k}^{\SSV}$ is given by the coefficients map of the polynomial $Q_{n,k}^{\SSV}(\vecy, \vecz, \vecx)$, an equivalent form of interpreting the statement of the lemma is that we want to fix the $\vecz$ variables such that distinct $\vecy$ variables become the coefficients of the monomials $\vecx_S$, for $S \in T$, and the coefficients of all monomials not in $T$ are zero.

Suppose first $|T|=k$ and denote $T = \set{S_1, \ldots, S_k}$. For every $j \in [k]$ set $\vecz_j = \vecone_{S_j}$, the characteristic vector of the set $S_j \subseteq [n]$. That is, $z_{j,i}=1$ if $i \in S_j$, and $0$ otherwise.

Observe that, in the notation of \autoref{construction:ssv}, we have that
\[
P(\vecone_{S_j}, x_1, \ldots, x_n) = \prod_{i=1}^n ((\vecone_{S_j})_i \cdot x_i + (1-(\vecone_{S_j})_i)) = \prod_{i : (\vecone_{S_j})_i = 1} x_i = \prod_{i \in S_j} x_i = \vecx_{S_j}.
\]
Therefore, we get that
\[
Q_{n,k}^{\SSV} (y_1, \ldots, y_k, \vecone_{S_1}, \ldots, \vecone_{S_k}, \vecx) = \sum_{i \in [k]} y_i \vecx_{S_i},
\]
as we wanted.

If $|T|=k'<k$, we can arbitrarily extend $T$ so a set $T'$ of size exactly $k$, and then set some $\vecy$ variables to zero, in order to zero out the relevant $k-k'$ entries in the polynomial map. 
\end{proof}

Suppose we aim to hit a polynomial $F \in \F[\vecX]$ of degree $d$, and we are given the information that $F$ contains a non-zero monomial with at most $k$ variables. Assuming $k$ is small, a natural algorithm in that case is to ``guess'' the $m \le k$ variables in the small support monomial, zero out all the remaining variables, and then do use the trivial derandomization, using the Schwartz-Zippel-DeMillo-Lipton Lemma, with respect to the remaining $k$-variate polynomial, for a cost of $(d+1)^k$ many evaluations. This is exactly what the SV generator enables us to do, since we can set the $\vecz$ variables in a way that the $k$ $\vecy$ variables will contain those that appear in the small support monomial, and thus, since after fixing the $\vecz$ variables the polynomial remains non-zero, it follows that it is non-zero even without fixing the $\vecz$ variables.
In this subsection we use this simple idea to construct succinct hitting sets for several classes of circuits. A small caveat is that usually we are \emph{not} guaranteed our target polynomial has a small support monomial, but we can prove that this is the case after a proper shift of the $N$ variables (one of course also has to represent the shift succinctly in $n$).

A similar notion was used by Agrawal, Saha and Saxena \cite{ASS13} to show that certain classes
of polynomials simplify under shifts in a way which is helpful for designing PIT algorithms. In their case, the shift is by a vector of polynomials in a set of formal variables $\vec{t}$, whereas in our case the shifts are much simpler: for our applications we only need to shift by the constant vector $\vecone$. 

\begin{construction}[Succinct Hitting Set for classes with small support monomials after shifts by $\vecone$]
\label{con:hitting-set-small-support}
Let $k, n \in \N$ and $N=2^n$. Define the \emph{shifted succinct SV polynomial} to be
\[
	Q_{n,k}^{\SSSV} (y_1, \ldots, y_k, z_{1,1}, \ldots, z_{1,n}, \ldots, z_{k,1}, \ldots, z_{k,n}, x_1, \ldots, x_n) = Q_{n,k}^{\SSV} (\vecy, \vecz, \vecx) + \prod_{i=1}^n (x_i + 1),
\]
and the  \emph{shifted succinct SV generator} as
\[
\cG_{n,k}^{\SSSV} (\vecy, \vecz) = \coeff_{\vecx} (Q_{n,k}^{\SSSV} ). \qedhere
\]
\end{construction}

We record the following simple fact, which follows from \autoref{fact:ssv-succinct}, and from the fact that $\coeff(\prod_{i=1}^n (x_i + 1)) = \vecone$.

\begin{fact}
\label{fact:sssv-succinct}
The generator $\cG_{n,k}^{\SSSV}$ is $\poly(k,n)$-$\SPS$ succinct, and as polynomial maps, we have the equality
\[
\cG_{n,k}^{\SSSV} (\vecy, \vecz) = \cG_{n,k}^{\SSV} + \vecone.
\]
\end{fact}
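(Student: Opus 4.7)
The plan is that both claims follow essentially by inspection of \autoref{con:hitting-set-small-support}, once we identify the coefficient vector of the shift term $\prod_{i=1}^n(x_i+1)$ and invoke linearity of $\coeff_\vx$ together with \autoref{fact:ssv-succinct}. Nothing substantive is hidden, so the main task is just to lay out the two pieces cleanly.

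First I would compute $\coeff_\vx\!\left(\prod_{i=1}^n(x_i+1)\right)$. Expanding the product yields $\prod_{i=1}^n(x_i+1)=\sum_{S\subseteq[n]}\prod_{i\in S}x_i$, so when viewed as a multilinear polynomial in $\vx$ every one of the $N=2^n$ multilinear monomials $\vx_S$ appears with coefficient exactly $1$. Under the identification of coordinates with subsets of $[n]$ used in \autoref{sec:sv}, this says $\coeff_\vx\!\left(\prod_{i=1}^n(x_i+1)\right)=\vecone$.

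Next I would deduce the polynomial-map equality. The map $\coeff_\vx$ from multilinear polynomials in $(\F[\vy,\vz])[\vx]$ to $(\F[\vy,\vz])^N$ is $\F[\vy,\vz]$-linear (in fact an isomorphism). Applying it to the definition $Q_{n,k}^{\SSSV}(\vy,\vz,\vx)=Q_{n,k}^{\SSV}(\vy,\vz,\vx)+\prod_{i=1}^n(x_i+1)$ from \autoref{con:hitting-set-small-support} and using the previous paragraph gives
\[
\cG_{n,k}^{\SSSV}(\vy,\vz)=\coeff_\vx\!\bigl(Q_{n,k}^{\SSV}\bigr)+\coeff_\vx\!\left(\prod_{i=1}^n(x_i+1)\right)=\cG_{n,k}^{\SSV}(\vy,\vz)+\vecone,
\]
which is the stated equality of polynomial maps.

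Finally, for succinctness, I would fix an arbitrary $\vy=\vaa$, $\vz=\vbb$. By \autoref{fact:ssv-succinct}, $Q_{n,k}^{\SSV}(\vaa,\vbb,\vx)$ is computed by a multilinear $\SPS$ circuit of size $\poly(n,k)$, while $\prod_{i=1}^n(x_i+1)$ is a single product of $n$ linear forms, hence a (trivially multilinear) $\SPS$ circuit of size $O(n)$ with top fan-in one. Summing the two $\SPS$ circuits produces a multilinear $\SPS$ circuit of size $\poly(n,k)+O(n)=\poly(n,k)$ computing $Q_{n,k}^{\SSSV}(\vaa,\vbb,\vx)$, which establishes $\poly(n,k)$-$\SPS$ succinctness. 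Since each step is a direct calculation or a direct appeal to \autoref{fact:ssv-succinct}, I do not anticipate any genuine obstacle.
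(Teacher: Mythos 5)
Your proposal is correct and follows essentially the same route as the paper, which justifies this fact in one line by citing \autoref{fact:ssv-succinct} together with the observation that $\coeff_\vx\bigl(\prod_{i=1}^n (x_i+1)\bigr)=\vecone$; you have simply written out the linearity of the coefficient map and the closure of $\SPS$ circuits under adding one more product term, which is exactly what the paper leaves implicit.
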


The following lemma shows how the shifted SV generator is useful for hitting classes of polynomials that have small support monomials after shifting by $\vecone$.

\begin{lemma}
\label{lem:small-support-hit}
Let $\cC$ be a class such that for all $f \in \cC$, $F(\vecX + 1)$ contains a monomial of support at most $k$. Then if $F \not\equiv 0$, $F \circ \cG_{n,k}^{\SSSV} (\vecy, \vecz) \not\equiv 0$.
\end{lemma}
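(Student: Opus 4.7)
The plan is to reduce to Lemma~\ref{lem:SSV-indep-prop} via the additive decomposition of the shifted generator. By \autoref{fact:sssv-succinct}, we have the polynomial-map identity $\cG_{n,k}^{\SSSV}(\vecy,\vecz) = \cG_{n,k}^{\SSV}(\vecy,\vecz) + \vecone$. Hence if we define $G(\vecX) \eqdef F(\vecX + \vecone)$, then
\[
F \circ \cG_{n,k}^{\SSSV}(\vecy,\vecz) \;=\; F\bigl(\cG_{n,k}^{\SSV}(\vecy,\vecz) + \vecone\bigr) \;=\; G \circ \cG_{n,k}^{\SSV}(\vecy,\vecz),
\]
so it suffices to show that the non-zero polynomial $G$ (which is non-zero because $F$ is and shifting by a constant is invertible) does not vanish under composition with the (unshifted) succinct SV generator.

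By hypothesis, $G$ contains a monomial of support at most $k$; let $T \subseteq [N]$ denote its support and let $c \ne 0$ denote its coefficient, so $|T| \le k$ and $G$ has the monomial $c \cdot \prod_{i \in T} X_i^{a_i}$ for some exponents $a_i \ge 1$. Apply \autoref{lem:SSV-indep-prop} to this set $T$: there is a fixing of the $\vecz$ variables (and possibly of some of the $\vecy$ variables) such that, after this fixing, the polynomial map $\cG_{n,k}^{\SSV}$ places distinct (remaining) $\vecy$ variables at the coordinates indexed by $T$ and zero at every other coordinate. Let $\pi : T \to [k]$ be the resulting injective indexing.

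Under this fixing, $G \circ \cG_{n,k}^{\SSV}$ specializes to the polynomial $H(\vecy) = G|_{X_i = y_{\pi(i)} \text{ for } i \in T,\ X_i = 0 \text{ for } i \notin T}$. To see that $H \not\equiv 0$, note that zeroing out the variables $X_i$ for $i \notin T$ leaves only those monomials of $G$ whose support is contained in $T$; in particular, our chosen monomial $c \cdot \prod_{i \in T} X_i^{a_i}$ survives with its non-zero coefficient $c$, and since monomials are a basis, the resulting polynomial in $\{X_i\}_{i \in T}$ is non-zero. Renaming $X_i \mapsto y_{\pi(i)}$ via an injection preserves non-zeroness, so $H(\vecy)$ is a non-zero polynomial in the remaining $\vecy$ variables. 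Since some specialization of $F \circ \cG_{n,k}^{\SSSV}$ is non-zero, the original polynomial $F \circ \cG_{n,k}^{\SSSV}(\vecy,\vecz)$ is also non-zero, as claimed.

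There is no significant obstacle: the key step is the algebraic translation via \autoref{fact:sssv-succinct} which turns the shift-by-$\vecone$ hypothesis into a statement about $G = F(\vecX + \vecone)$ having a small-support monomial, and then \autoref{lem:SSV-indep-prop} is tailor-made to exploit exactly this structural feature. The only mild care needed is bookkeeping: verifying that zeroing out coordinates outside $T$ preserves the chosen monomial, and that the injectivity guaranteed by \autoref{lem:SSV-indep-prop} prevents cancellations under the renaming of variables.
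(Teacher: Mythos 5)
Your proof is correct and follows essentially the same route as the paper's: reduce to the unshifted generator via $F \circ \cG_{n,k}^{\SSSV} = G \circ \cG_{n,k}^{\SSV}$ with $G(\vecX) = F(\vecX + \vecone)$, then apply \autoref{lem:SSV-indep-prop} to plant distinct $\vecy$ variables on the support of the small-support monomial and observe it survives without cancellation. Your write-up just makes the no-cancellation bookkeeping slightly more explicit than the paper does.
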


\begin{proof}
Let $F(\vecX)$ be a non-zero polynomial from $\cC$, and let $G(\vecX)=f(\vecX + \vecone)$. By the assumption, $G$ is a non-zero polynomial that contains a monomial $M$ of support at most $k$. Let $S=\set{X_{i_1}, \ldots, X_{i_{k'}}}$ (where possibly $k' < k$) denote the subset containing exactly the variables in $M$, and consider $G \circ \cG_{n,k}^{\SSV} (\vecy, \vecz)$. By \autoref{lem:SSV-indep-prop}, we can set the $\vecz$ variables to $\vecalpha$ and possibly some of the $\vecy$ variables to $\vecbeta$ such that $y_1, \ldots, y_{k'}$ are mapped to $X_{i_1}, \ldots, X_{i_{k'}}$, and all the other variables are mapped to 0. Under this setting $g \circ \cG_{n,k}^{\SSV} (y_1, \ldots, y_{k'}, \vecalpha, \vecbeta) \not\equiv 0$, since the monomial $M$ is mapped to a monomial in $y_1, \ldots, y_{k'}$ which cannot be canceled out. Hence, $G \circ \cG_{n,k}^{\SSV} (\vecy, \vecz) \not \equiv 0$.

Finally, observe that $F \circ \cG_{n,k}^{\SSSV} (\vecy, \vecz) = G \circ \cG_{n,k}^{\SSV}(\vecy, \vecz)$.
\end{proof}

\subsection{Sparse Polynomials}\label{subsec:sparse}
In this section we give a $\poly(n)$-$\SPS$ succinct hitting set for the class of $\poly(N)$-sparse polynomials, i.e., polynomial size $\Sigma\Pi$ circuits. We note that an $s$-sparse polynomial $f$ can also be computed by a commutative roABP of width $s$, so in a sense, the results in this section are subsumed by those in \autoref{subsec:comm-roabp}. However, the argument made here is simpler and slightly more general since it applies to any class that has (possibly after shifting) small support monomials (see \autoref{subsec:smesp}).

We begin by recording the following fact.

\begin{lemma}[\cite{Forbes15, GKST16}]
\label{lem:small-support-sparse}
Let $F \in \F[X_1, \ldots, X_N]$ by a polynomial with at most $s$ monomials, and $\vecalpha \in \F^N$ be a full support vector, that is, for all $i \in [N]$, $\alpha_i \neq 0$. Then the polynomial $F(\vecX + \vecalpha)$ has a monomial of support at most $\log s$.
\end{lemma}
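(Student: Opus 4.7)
The plan is to reduce the conclusion to a sparsity lower bound for polynomials divisible by $\prod_i(Z_i - \alpha_i)$. Set $G(\vecX) := F(\vecX + \vecalpha)$, which is nonzero because shifting is invertible, and let $\vecX^\vb$ be a monomial of $G$ whose support $T := \{i : b_i > 0\}$ is of \emph{minimum} size $k := |T|$; the goal is to show $k \le \log s$.

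I would first unpack $G_T(\vecX_T) := G|_{X_j = 0,\; j \notin T}$ in two different ways. On one hand, $G_T$ retains exactly the monomials of $G$ whose support lies inside $T$, and by the minimality of $k$ each such monomial has support equal to $T$, so $G_T = \big(\prod_{i \in T} X_i\big)\,H(\vecX_T)$ for some nonzero $H$. On the other hand, substituting $X_j = 0$ for $j \notin T$ in $F(\vecX + \vecalpha)$ is the same as first substituting $X_j = \alpha_j$ into $F$ and then shifting the remaining variables by $\vecalpha_T$; that is, $G_T(\vecX_T) = F^{(T)}(\vecX_T + \vecalpha_T)$, where $F^{(T)}(\vecY) := F(\vecY, \vecalpha_{[N]\setminus T})$ has at most $s$ monomials since substitution cannot increase sparsity. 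Setting $\vecZ := \vecX_T + \vecalpha_T$ and equating the two expressions for $G_T$ yields
\[
F^{(T)}(\vecZ) \;=\; \prod_{i \in T}(Z_i - \alpha_i)\cdot H(\vecZ - \vecalpha_T),
\]
so $F^{(T)}$ is a nonzero $k$-variate polynomial of sparsity at most $s$ that is divisible by $\prod_{i \in T}(Z_i - \alpha_i)$. The claim $k \le \log s$ then reduces to the following sub-lemma: any nonzero $P \in \F[Z_1, \ldots, Z_k]$ divisible by $\prod_{i=1}^k(Z_i - \alpha_i)$ with all $\alpha_i \ne 0$ has at least $2^k$ monomials.

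I would prove this sub-lemma by induction on $k$. The base $k = 1$ is immediate: $(Z - \alpha) \mid cZ^d$ forces $c\alpha^d = 0$, ruling out any monomial $P$. For the inductive step, write $P/(Z_k - \alpha_k) = \sum_{j \ge 0} Z_k^j R_j(Z_1, \ldots, Z_{k-1})$, where each $R_j$ is itself divisible by $\prod_{i < k}(Z_i - \alpha_i)$; multiplying back by $(Z_k - \alpha_k)$ gives $P = \sum_j Z_k^j A_j$ with $A_j := R_{j-1} - \alpha_k R_j$, and each nonzero $A_j$ inherits divisibility by $\prod_{i < k}(Z_i - \alpha_i)$, so by the inductive hypothesis it has at least $2^{k-1}$ monomials. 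Since different powers of $Z_k$ contribute disjoint monomials, the sparsity of $P$ equals the sum of the sparsities of the $A_j$, and the main obstacle is to exhibit at least \emph{two} distinct nonzero $A_j$ --- a priori one worries about massive cancellations. I would use the two extreme indices: the top-$Z_k$-degree coefficient gives $A_{d+1} = R_d \ne 0$ (where $d := \deg_{Z_k}(P/(Z_k-\alpha_k))$), and the smallest index $j^*$ with $R_{j^*} \ne 0$ gives $A_{j^*} = -\alpha_k R_{j^*} \ne 0$, crucially using $\alpha_k \ne 0$; since $j^* \le d$, these indices are distinct. This yields sparsity at least $2 \cdot 2^{k-1} = 2^k$, completing the induction and the proof.
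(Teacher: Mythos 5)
Your argument is correct, and its first half is exactly the paper's reduction: pick a monomial of $G(\vecX)=F(\vecX+\vecalpha)$ of minimum support $T$, restrict the remaining variables to zero, and observe that the restricted polynomial is simultaneously a substitution of $F$ (hence of sparsity at most $s$) and of the form $\prod_{i\in T}(Z_i-\alpha_i)\cdot H$ with $H\neq 0$. Where you diverge is in how the sparsity lower bound $2^{|T|}$ is established. The paper invokes the general lemma of Oliveira (\autoref{lem:multilinear-prod-sparsity}, quoted from \cite{FSTW16} without proof): a multilinear polynomial retains its sparsity as a lower bound under multiplication by any non-zero polynomial, applied to the multilinear factor $\prod_{i\in T}(Z_i-\alpha_i)$, which has full sparsity $2^{|T|}$ precisely because every $\alpha_i\neq 0$. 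You instead prove from scratch, by induction on the number of variables, the special case that any non-zero polynomial divisible by $\prod_{i}(Z_i-\alpha_i)$ with all $\alpha_i\neq 0$ has at least $2^k$ monomials; your use of the top $Z_k$-slice and the lowest non-vanishing slice (where $\alpha_k\neq 0$ prevents the slice from dying) correctly rules out cancellation and makes the induction go through. The trade-off is the expected one: your route is self-contained and more elementary, weaving the hypothesis $\alpha_i\neq 0$ directly into the induction, while the paper's route is shorter modulo the cited lemma and rests on a more general and reusable statement (note that the paper's statement of \autoref{lem:multilinear-prod-sparsity} says ``at most'' where it is used, and evidently intended, as ``at least''). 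Both proofs are valid and yield the same bound.
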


This lemma appears in \cite{Forbes15} and \cite{GKST16} with two very different proofs. For completeness, we provide yet a third proof, which we find to be more elementary. The proof relies upon the following easy lemma, due to Oliveira, which can be proved by induction on $N$ (see, e.g., \cite{FSTW16}).

\begin{lemma}[see Proposition 6.14 in \cite{FSTW16}]
\label{lem:multilinear-prod-sparsity}
Let $F \in \F[X_1, \ldots, X_N]$ be a multilinear polynomial with at most $s$ monomials, and $G \in \F[X_1, \ldots, X_N]$ be any non-zero polynomial. Then $F \cdot G$ has at most $s$ monomials. 
\end{lemma}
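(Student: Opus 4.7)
The statement as written reads ``at most $s$ monomials,'' which appears to be a typo: the correct claim (and the one needed in the surrounding shifted-sparsity argument) is that $F \cdot G$ has \emph{at least} $s$ monomials. The plan is to prove this by induction on the ambient dimension $N$. The base case $N = 0$ is immediate: $F$ is a scalar, either zero (so $s = 0$ and there is nothing to show) or a nonzero constant with $s = 1$, in which case $F \cdot G$ has exactly the same number of monomials as $G \neq 0$, namely at least $1$.

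For the inductive step, the plan is to exploit the multilinearity of $F$ to split off the last variable: write $F = F_0 + X_N F_1$ with $F_0, F_1 \in \F[X_1, \ldots, X_{N-1}]$ multilinear, and set $s_i = \text{spars}(F_i)$. Since $F_0$ is independent of $X_N$ while $X_N F_1$ is $X_N$-homogeneous of degree exactly $1$, we have $s = s_0 + s_1$. Separately, decompose $G$ by its $X_N$-degree as $G = \sum_{j=0}^D X_N^j G_j$ with $G_j \in \F[X_1, \ldots, X_{N-1}]$, and let $d = \min\{j : G_j \neq 0\}$ and $D = \max\{j : G_j \neq 0\}$.

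The main case is when both $F_0$ and $F_1$ are nonzero. The key observation is that in $F \cdot G$, the coefficient of $X_N^d$ equals $F_0 G_d$ (the potential $F_1 G_{d-1}$ contribution vanishes by minimality of $d$), while the coefficient of $X_N^{D+1}$ equals $F_1 G_D$ (since $G_{D+1} = 0$ by maximality of $D$). Because $d \leq D < D+1$, the monomials produced by these two coefficients occupy disjoint $X_N$-degrees in $F \cdot G$ and are therefore disjoint in $F \cdot G$ itself. Applying the induction hypothesis (in $N-1$ variables) to the products $F_0 G_d$ and $F_1 G_D$ yields $\text{spars}(F_0 G_d) \geq s_0$ and $\text{spars}(F_1 G_D) \geq s_1$, hence $\text{spars}(F \cdot G) \geq s_0 + s_1 = s$. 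The degenerate cases $F_0 = 0$ or $F_1 = 0$ are handled by essentially the same argument on a single $X_N$-coefficient: e.g. if $F_0 = 0$ then $F \cdot G = X_N \sum_j X_N^j (F_1 G_j)$, and applying induction to $F_1 G_d$ gives $\text{spars}(F \cdot G) \geq \text{spars}(F_1 G_d) \geq s_1 = s$.

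The only real subtlety, and what I expect to be the main thing to verify, is the disjointness claim in the main case: that the monomials contributed by $F_0 G_d$ (sitting at $X_N$-degree $d$) and by $F_1 G_D$ (sitting at $X_N$-degree $D+1$) are genuinely distinct inside $F \cdot G$. This is precisely where multilinearity of $F$ is indispensable: it forces $F$ to be affine (not merely polynomial) in $X_N$, so that $F \cdot G = F_0 G + X_N F_1 G$ is a clean degree-split and the extreme $X_N$-degrees $d$ and $D+1$ each admit a one-term description. Without multilinearity one gets cross-terms that can cancel (already visible for $F = 1 + X_N + X_N^2$, $G = 1 - X_N$), and the conclusion fails.
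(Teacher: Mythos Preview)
Your proposal is correct, including the identification of the typo (the lemma should read ``at least $s$ monomials,'' as its application immediately afterward confirms). The paper does not spell out a proof but only remarks that it ``can be proved by induction on $N$,'' which is exactly the approach you carry out; your splitting $F = F_0 + X_N F_1$ and isolating the extreme $X_N$-degrees $d$ and $D+1$ in $FG$ is the natural way to execute that induction.
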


We now give our proof for \autoref{lem:small-support-sparse}.

\begin{proof}[Proof of \autoref{lem:small-support-sparse}]
Suppose, towards contradiction, that the minimal monomial in $G(\vecX) := F(\vecX + \vecalpha)$ has $\ell \ge \log s + 1$ variables. Further suppose, without loss of generality, these are $X_1, \ldots, X_\ell$. Consider now $G(X_1, X_2, \ldots, X_\ell, 0, \ldots, 0)$. By assumption, this is a non-zero polynomial which is divisible by the monomial $X_1 X_2 \cdots X_\ell$. It follows that 
\[
F(X_1, \ldots, X_\ell, \alpha_{\ell+1}, \ldots, \alpha_n)  = G(X_1 -\alpha_1 , \ldots, X_\ell - \alpha_\ell, 0, \ldots, 0) =  
\left( \prod_{i=1}^\ell (X_i - \alpha_i) \right) \cdot H(X_1, \ldots, X_\ell),
\]
for some non-zero $H$.

Since $\prod_{i=1}^\ell (X_i - \alpha_i)$ is multilinear of sparsity $2^\ell > s$, it follows from \autoref{lem:multilinear-prod-sparsity} that the sparsity of $F(X_1, \ldots, X_\ell, \alpha_{\ell+1}, \ldots, \alpha_n)$ is also greater than $s$, which contradicts the assumption on $F$, as the sparsity can only decrease when fixing variables.
\end{proof}

\autoref{lem:small-support-sparse}, along with \autoref{lem:small-support-hit} and \autoref{fact:sssv-succinct} immediately imply that the shifted succinct SV generator hits sparse polynomials.

\begin{corollarywp}
\label{cor:hitting-set-sparse}
The generator $\cG_{n, \log s}^{\SSSV}$ from \autoref{con:hitting-set-small-support} is a $\poly(\log s, n)$-$\SPS$ succinct generator for the class of $s$-sparse polynomials $F \in \F[X_1, \ldots, X_N]$.
\end{corollarywp}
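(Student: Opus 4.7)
The plan is to simply chain together the three ingredients cited just before the corollary statement, namely \autoref{lem:small-support-sparse}, \autoref{lem:small-support-hit}, and \autoref{fact:sssv-succinct}. There is essentially no additional content to prove; the only task is to verify that the hypotheses line up.

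First, I would instantiate \autoref{lem:small-support-sparse} with the full-support shift vector $\vecalpha = \vecone \in \F^N$ (which is valid since every coordinate is non-zero). This says that for every $s$-sparse $F \in \F[X_1,\ldots,X_N]$, the shifted polynomial $F(\vecX + \vecone)$ contains a monomial of support at most $\log s$. Hence the class $\cC$ of $s$-sparse polynomials fits exactly into the hypothesis of \autoref{lem:small-support-hit} with the parameter $k = \log s$.

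Next, I would apply \autoref{lem:small-support-hit} with this choice of $k$ to conclude that for every non-zero $s$-sparse $F$, the composition $F \circ \cG_{n, \log s}^{\SSSV}(\vecy, \vecz)$ is a non-zero polynomial in the seed variables. This establishes the generator property in \autoref{defn:succ-gen}, condition \eqref{defn:succ-gen:gen}.

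Finally, to verify the succinctness condition \eqref{defn:succ-gen:output}, I would invoke \autoref{fact:sssv-succinct}, which states that every specialization of the seed variables in $Q_{n,\log s}^{\SSSV}$ yields a multilinear $\SPS$ circuit of size $\poly(\log s, n)$. Consequently $\cG_{n,\log s}^{\SSSV}$ is a $\poly(\log s, n)$-$\SPS$ succinct generator, as required. Since every step is a direct application of an already-proved result, there is no real obstacle here; the corollary is genuinely immediate once the ingredients are in place, and the only subtlety I would double-check is that the full-support hypothesis of \autoref{lem:small-support-sparse} is satisfied by $\vecone$, which it is.
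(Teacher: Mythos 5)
Your proposal is correct and follows exactly the paper's intended argument: the corollary is stated without proof precisely because it is the immediate combination of \autoref{lem:small-support-sparse} (instantiated with the full-support shift $\vecone$), \autoref{lem:small-support-hit} with $k=\log s$, and \autoref{fact:sssv-succinct} for succinctness. No gaps; the one subtlety you flagged (that $\vecone$ has full support) is indeed the only hypothesis to check, and it holds.
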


\subsection{Sums of Powers of Low Degree Polynomials}\label{subsec:smesp}

We now mention another class that, after a suitable shifting, has small support monomials.

\begin{definition}[$\SMESP{t}$ formulas]
A polynomial $F(\vecX) \in \F[\vecX]$ is computed by a $\SMESP{t}$ formula if
\[
F(\vecX) = \sum_{i=1}^{s} \vecX^{\veca_i} F_i (\vecX)^{d_i},
\]
where $\deg F_i \le t$ for all $i \in [s]$, and $\vecX^{\veca_i} = \prod_{j=1}^{N} X_i^{a_{i,j}}$ is a monomial.
\end{definition}

The following was proved in \cite{Forbes15}.

\begin{lemma}
\label{lem:SMESP-shift}
Suppose $F[\vecX]$ is computed by a $\SMESP{O(1)}$ formula of top fan-in $s$, and let $\vecalpha$ be a full-support vector. Then it holds that $F(\vecX + \vecalpha)$ has a monomial of support at most $O(\log s)$. 
\end{lemma}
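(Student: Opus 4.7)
My plan is to mirror the contrapositive strategy behind the proof of \autoref{lem:small-support-sparse}. Assume, for contradiction, that $G \eqdef F(\vecX+\vecalpha)$ has no non-zero monomial of support at most $\ell$, where $\ell = \Omega(\log s)$ is to be chosen. Following exactly the manipulation in the proof of \autoref{lem:small-support-sparse}---restricting to the variables of a hypothetical minimum-support monomial and then substituting $Y_i = X_i + \alpha_i$---this assumption forces the polynomial
\[
F(\vecY_S, \vecalpha_{\bar S}) \;=\; \sum_{i=1}^s \kappa_i\, \vecY_S^{\veca_{i,S}}\, \widehat F_i(\vecY_S)^{d_i}
\]
to be divisible by the multilinear polynomial $\prod_{i \in S}(Y_i - \alpha_i)$ of sparsity $2^\ell$, where $S \subseteq [N]$ has size $\ell$, $\kappa_i = \vecalpha_{\bar S}^{\veca_{i,\bar S}} \ne 0$ (since $\vecalpha$ is full-support), and each $\widehat F_i(\vecY_S) = F_i(\vecY_S, \vecalpha_{\bar S})$ has degree at most $t = O(1)$. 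Note that $F(\vecY_S, \vecalpha_{\bar S})$ is itself a $\SMESP{t}$ polynomial of top fan-in at most $s$ in $\ell$ variables.

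To finish, the key remaining claim is that a non-zero $\SMESP{t}$ polynomial of top fan-in $s$ in $\ell$ variables cannot be divisible by a multilinear polynomial of sparsity exceeding $\poly(s)$. Combined with \autoref{lem:multilinear-prod-sparsity}, which forces the sparsity of any polynomial divisible by our multilinear divisor to be at least $2^\ell$, this would yield the bound $2^\ell \le \poly(s)$, and hence $\ell = O(\log s)$, as desired. To establish this sparsity bound on $\SMESP{t}$ polynomials, I would attempt to linearize each $\widehat F_i$ by introducing auxiliary variables $W_\vecb$ indexed by monomials $\vecY_S^{\vecb}$ of degree at most $t$, reducing $\widehat F_i(\vecY_S)^{d_i}$ to a power of a linear form $L_i(\vecW)^{d_i}$, and then applying a variant of the multilinear-sparsity argument in the $\vecW$-variables before descending back to $\vecY_S$.

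\paragraph{Main obstacle.} The principal difficulty is that a na\"ive upper bound on the sparsity of $\widehat F_i(\vecY_S)^{d_i}$ can be as large as $\binom{\ell^{O(t)} + d_i - 1}{d_i}$, which is superpolynomial in $s$ whenever $d_i$ is large (e.g., $d_i = \poly(N)$). Overcoming this requires exploiting the high-power structure of each $\widehat F_i^{d_i}$: its monomial support is confined to a $d_i$-fold Minkowski sum of a low-dimensional Newton polytope, so its ``algebraic sparsity'' (as measured by the dimension of the span of its partial derivatives, or the rank of an appropriate Wronskian-style matrix) should grow only polynomially in $s$. Making this refined sparsity notion rigorously compatible with the Oliveira-style multilinear divisibility bound---so that the $2^\ell$ lower bound forced by the multilinear divisor still propagates through---is the delicate step I expect to need the most care, and is presumably where the full proof from \cite{Forbes15} does the real work.
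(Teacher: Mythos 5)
There is a genuine gap here, and you have in fact pointed at it yourself. The entire content of \autoref{lem:SMESP-shift} lies in the claim you defer: that a non-zero $\SMESP{t}$ polynomial of top fan-in $s$ cannot be divisible by the full-sparsity multilinear product $\prod_{i\in S}(Y_i-\alpha_i)$ once $|S|\gg\log s$. The reduction you carry over from \autoref{lem:small-support-sparse} (restrict to a support-minimal monomial, undo the shift, invoke \autoref{lem:multilinear-prod-sparsity} to force sparsity $2^\ell$) is fine as far as it goes, but it terminates in a statement about \emph{monomial sparsity}, and for this model that quantity is useless: a single summand $F_i^{d_i}$ with $\deg F_i\le t$ and $d_i=\poly(N)$ already has sparsity far exceeding any $\poly(s)$, so ``restriction only decreases sparsity'' yields no contradiction. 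Your proposed repair --- linearizing each $F_i$ with auxiliary variables $W_{\vec{b}}$ and replacing sparsity by the dimension of the span of partial derivatives --- is only a sketch and has unresolved problems: divisibility by $\prod_{i\in S}(Y_i-\alpha_i)$ in the $Y$-variables does not translate into any divisibility statement in the $W$-variables, and the derivative-span measure is not obviously compatible with the Oliveira-style step that converts a multilinear divisor into a $2^\ell$ lower bound. So the proof as written establishes nothing beyond the (easy) sparse case, and the hard step remains open in your write-up.

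For comparison: the paper does not prove this lemma at all --- it is quoted from \cite{Forbes15} --- and the argument there does not go through divisibility and sparsity. Instead it proves a \emph{low-support rank concentration} statement via shifted-partial-derivative-type dimension bounds for powers of degree-$t$ polynomials: after shifting by a full-support point, the coefficients of monomials of support $O(\log s)$ span the coefficient space of the shifted polynomial, so a non-zero polynomial in this class must have a non-zero coefficient on some such monomial. The divisibility route the paper uses for \autoref{lem:small-support-sparse} was offered there precisely as an elementary alternative that is special to sparse polynomials; if you want to push it to $\SMESP{t}$ you would need a new measure that is simultaneously (i) small for sums of shifted powers of low-degree polynomials and (ii) provably large for multiples of $\prod_{i\in S}(Y_i-\alpha_i)$ with all $\alpha_i\ne 0$, and your proposal does not yet supply either half.
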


It follows that a similar construction to the one which we used to succinctly hit sparse polynomials also works in this case.

\begin{theorem}
There exists a $\poly(\log s, n)$-$\SPS$ succinct generator for the class of $\SMESP{O(1)}$ formulas of top fan-in $s$.
\end{theorem}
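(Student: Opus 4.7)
The plan is to mimic the sparse polynomial argument from the preceding subsection essentially verbatim, with \autoref{lem:small-support-sparse} replaced by \autoref{lem:SMESP-shift}. That is, the entire infrastructure built around the shifted succinct SV generator $\cG^{\SSSV}_{n,k}$ is precisely designed to hit classes whose members admit a small-support monomial after shifting by $\vecone$, and $\SMESP{O(1)}$ is now known (via \autoref{lem:SMESP-shift}) to be such a class.

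Concretely, I would proceed in three short steps. First, observe that $\vecone \in \F^N$ is a full-support vector, so \autoref{lem:SMESP-shift} applied to any non-zero $\SMESP{O(1)}$ formula $F$ of top fan-in $s$ guarantees that $F(\vecX + \vecone)$ contains a monomial of support at most $k \eqdef O(\log s)$. Second, invoke \autoref{lem:small-support-hit} with this value of $k$: since $F$ lies in a class whose members have small-support monomials after the shift by $\vecone$, the composition $F \circ \cG^{\SSSV}_{n,k}(\vecy,\vecz)$ is non-zero whenever $F$ is non-zero. Thus $\cG^{\SSSV}_{n,O(\log s)}$ is a generator for $\SMESP{O(1)}$ formulas of top fan-in $s$. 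Third, \autoref{fact:sssv-succinct} tells us that $\cG^{\SSSV}_{n,k}$ is $\poly(k,n)$-$\SPS$ succinct, which for $k = O(\log s)$ is exactly $\poly(\log s, n)$-$\SPS$ succinct, as required.

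There is no real obstacle here: the shifting lemma for $\SMESP{O(1)}$ formulas has the same form as the shifting lemma for sparse polynomials (both yield $O(\log s)$-support monomials after shifting by any full-support vector, including $\vecone$), so the proof is structurally identical to that of \autoref{cor:hitting-set-sparse}. The only content-level check is to confirm that \autoref{lem:small-support-hit} really is applicable once we know the post-shift small-support property; this holds directly because the lemma was stated for arbitrary classes closed under the shift-by-$\vecone$ small-support condition, and the shifted generator $\cG^{\SSSV}_{n,k}$ was designed so that $\cG^{\SSSV}_{n,k} = \cG^{\SSV}_{n,k} + \vecone$ (\autoref{fact:sssv-succinct}), reducing the analysis of $F$ to that of $G(\vecX) = F(\vecX + \vecone)$ under the ordinary SV generator, whose independence property (\autoref{lem:SSV-indep-prop}) extracts the small-support witnessing monomial in $G$.
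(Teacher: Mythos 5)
Your proposal is correct and follows the paper's own argument essentially verbatim: the paper likewise instantiates the shifted succinct SV generator $\cG^{\SSSV}_{n,C\log s}$ and combines \autoref{lem:SMESP-shift}, \autoref{lem:small-support-hit}, and \autoref{fact:sssv-succinct} exactly as you do. No gaps.
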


\begin{proof}
Let $C \cdot \log s$ the sparsity bound in \autoref{lem:SMESP-shift} and consider the generator $\cG_{n, C\log s}^{\SSSV}$ from \autoref{con:hitting-set-small-support}. By \autoref{fact:sssv-succinct}, this generator is $\poly(\log s, n)$-$\Sigma\Pi\Sigma$ succinct. By \autoref{lem:SMESP-shift} and \autoref{lem:small-support-hit}, it follows that $\cG_{n, C\log s}^{\SSSV}$ hits this class.
\end{proof}

\subsection{Commutative Read-Once Oblivious Algebraic Branching Programs}\label{subsec:comm-roabp}

In this section, we construct a $\poly(\log w, n)$-$\Sigma\Pi\Sigma$ succinct hitting sets for the class of $N$-variate polynomials computed by a width $w$ \emph{commutative} read-once oblivious algebraic branching programs.

A read-once oblivious algebraic branching program (roABP) is a directed, acyclic graph with the following properties:

\begin{itemize}
\item The vertices are partitioned into $N+1$ layers $V_0, \ldots, V_N$, such that $V_0 = \set{s}$ and $V_N = \set{t}$. $s$ is called the \emph{source node}, and $t$ the \emph{sink node}.
\item Each edge goes from $V_{i-1}$ to $V_{i}$ for some $i \in [N]$.
\item There exists a permutation $\sigma : [N] \to [N]$ such that all edges in layer $i$ are labeled by a univariate polynomial in $X_{\sigma(i)}$ of degree at most $d$.
\end{itemize}

We say that each $s \to t$ path in the ABP computes the product of its edge labels, and the roABP computes the sum over the polynomials computed by all $s \to t$ paths. The width of the roABP is defined to be $\max_i |V_i|$.

Equivalently, $F$ is computed by a roABP in variable order $\sigma$ if there exist $N$ matrices $M_1, \ldots, M_N$ of size $r \times r$ such that each entry in $M_i$ is a univariate, degree $d$ polynomial in $X_{\sigma (i)}$, and $F = \left(  \prod_{i=1}^N M_i (X_{\sigma(i)}) \right) _{1,1}$. 

In general, it is possible for a polynomial to be computed by a small width roABP in a certain variable order, but to require a much larger width if the roABP is in a different variable order. A polynomial $f \in \F[\vecX]$ is computed by a  width $w$ commutative roABP, if it is computable by a width $w$ ABP in \emph{every} variable order.

Forbes, Saptharishi and Shpilka (\cite{FSS14}, Corollary 4.3) showed that in order to hit width-$w$ commutative roABPs, it is enough to take the SV generator with $k=O(\log w)$. \footnote{An improved construction for this model, with respect to the size of the hitting set, was given by Gurjar, Korwar and Saxena~\cite{GKS17}. Their construction, however, uses ingredients which we do not know how to make succinct; see \autoref{sec:discussion} for further discussion)}

We follow the proof strategy of \cite{FSS14} in order to show that the succinct SV generator hits commutative roABPs as well. The following definitions and the theorem following them are borrowed from \cite{FSS14}.

\begin{definition}
\label{def:indep-mon-map}
Let $g : \F^m \times \F^{m'} \to \F^N$ be a polynomial map. $g$ is said to be an \emph{individual degree $d$, $\ell$-wise independent monomial map} if for every $S \subseteq [N]$ of size at most $\ell$, there is $\vecalpha \in \F^{m'}$ such that the polynomials $\setdef{g(\vect,\vecalpha)^{\veca}}{\supp(\veca) \subseteq S, \max_i a_i \le d}$ are non-zero and distinct monomials in $\vect$, where we define
\[
g(\vect,\vecalpha)^{\veca} = \prod_{i=1}^N (g(\vect, \vecalpha)_i^{a_i}). \qedhere
\]
\end{definition}

\begin{definition}[see also \cite{ASS13}]
Let $\vec{F}[\vecX] \in \F[\vecX]^{r}$ be a vector of polynomials. We say that $\vec{F}$ has support-$k$ rank concentration at $\vecv$, if the derivatives of $\vec{F}$ with respect to all monomials of support at most $k$ at $\vecv$ span all the derivatives of $\vec{F}$ at $\vecv$. That is, if
\[
\Span\left\{ \partial_{\vecX^{\veca}} (\vec{F}) (\vecv) \right\}_{\left\{ \veca : \left|\supp(\veca)\right| \le k \right\} }
=
\Span\left\{ \partial_{\vecX^{\veca}} (\vec{F}) (\vecv) \right\}_{\veca } \qedhere
\]
\end{definition}

\begin{lemma}[\cite{FSS14}, Theorem 4.1]
\label{lem:mon-indep-concentrates}
Let $\vec{F}[\vecX] \in \F[\vecX]^{w \times w}$ be of individual degree $d$ and computed by a commutative roABP of width $w$. Let $g(\vect, \vecs)$ be an individual degree $d$, $(\log (w^2) + 1)$-wise independent monomial map. Then $\vec{F}(\vecX)$ has support-$\log(w^2)$ rank concentration at $g(\vect, \vecs)$ over $\F(\vect, \vecs)$.
\end{lemma}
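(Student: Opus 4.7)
The plan is to follow the strategy of Forbes, Saptharishi and Shpilka~\cite{FSS14}: commutativity forces $\vec{F}$ into a low-dimensional structured form, and the independence property of $g$ then ensures that low-support derivatives already achieve that dimension.

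First I would simultaneously triangularize the matrices $M_i(X_i)$. Since the $M_i$ pairwise commute as matrix-valued polynomials, there is a single basis change $U$ (over the algebraic closure of $\F$) under which $U^{-1}M_i(X_i)U$ is upper-triangular for every $i$. Consequently $U^{-1}\vec{F}(\vecX)U = \prod_i U^{-1}M_i(X_i)U$ is itself upper-triangular, and its entries are $\F$-combinations of products of univariates in distinct $X_i$'s. Undoing the basis change yields a decomposition
\[
\vec{F}(\vecX) \;=\; \sum_{j=1}^{r} A_j \prod_{i=1}^{n} p_{i,j}(X_i),
\]
with $r \le w^2$, each $p_{i,j}$ a univariate of degree $\le d$, and $A_j \in \F^{w\times w}$. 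This form is stable under translation $\vecX \mapsto \vecX + \vecv$, so all Taylor coefficients of $\vec{F}(\vecX+\vecv)$ lie in $\Span_{\F(\vecv)}\{A_1,\ldots,A_r\}$, a space of dimension at most $w^2$.

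Next I would translate rank concentration into a statement about evaluations of products of univariate derivatives. From the decomposition,
\[
\partial_{\vecX^{\veca}}\vec{F}(\vecv) \;=\; \sum_{j=1}^{r} A_j \prod_{i=1}^{n} q_{i,j,a_i}(v_i), \qquad q_{i,j,a}(v) := \partial_X^{a} p_{i,j}(X)\big|_{X=v},
\]
so specializing to $\vecv = g(\vect,\vecs)$, the $\F(\vect,\vecs)$-span of $\{\partial_{\vecX^\veca}\vec{F}(g(\vect,\vecs))\}_{\veca}$ sits inside $\Span\{A_1,\ldots,A_r\}$, which has dimension at most $w^2$.

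The heart of the argument is showing that the support-$\log(w^2)$ derivatives already attain this span. Suppose not, and pick $\veca^*$ of minimal support among multi-indices whose derivative is not in the span of support-$\log(w^2)$ derivatives, so $\ell := |\supp(\veca^*)| \le \log(w^2)+1$. Let $S = \supp(\veca^*)$ and invoke the $\ell$-wise independent monomial property of $g$: fix $\vecs = \vecalpha$ so that $\{g(\vect,\vecalpha)^{\vecb} : \supp(\vecb)\subseteq S,\ \max_i b_i\le d\}$ is a set of $(d+1)^\ell$ pairwise distinct $\vect$-monomials. Restricting to $S$-supported $\veca$, the factors $q_{i,j,a_i}(g_i(\vect,\vecalpha))$ for $i\notin S$ collapse into a common scalar per $j$, and the monomial-distinctness forces the map $\veca \mapsto \partial_{\vecX^\veca}\vec{F}(g(\vect,\vecalpha))$, restricted to $S$-supported $\veca$, to have image equal to the full $\F(\vect,\vecalpha)$-span of $\{A_j\}$; since $|S|\le\log(w^2)+1$ contradicts the choice of $\veca^*$, the lemma follows. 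The main obstacle will be this last step: one has to argue that the $(d+1)^\ell \ge w^2+1$ distinct $\vect$-monomials arising from $S$-supported derivatives suffice to extract an invertible $w^2\times w^2$ subsystem of scalar coefficients in front of the $A_j$'s, thereby precluding any unexpected cancellations and certifying that $S$-supported derivatives span the full $A_j$-space.
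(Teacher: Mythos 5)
You should note first that the paper itself gives no proof of this lemma: it is imported wholesale from \cite{FSS14} (Theorem 4.1), so your argument has to stand on its own, and its very first structural step is false. You claim that a width-$w$ commutative roABP can be written as $\vec{F}(\vecX)=\sum_{j=1}^{r}A_j\prod_{i}p_{i,j}(X_i)$ with $r\le w^2$. Take $M_i(X_i)=I+X_iN$ with $N$ the $2\times 2$ nilpotent matrix; these commute (in every sense), the roABP has width $2$, and $\vec{F}=I+(X_1+\cdots+X_N)N$. Reading off the $(1,2)$ entry, your decomposition would express $X_1+\cdots+X_N$ as an $\F$-linear combination of at most $w^2=4$ products of univariates; truncating each univariate to its degree-$\le 1$ part, this is a rank-$\le 4$ decomposition of the W-like tensor $\sum_i e_1\otimes\cdots\otimes e_2\otimes\cdots\otimes e_1$, whose rank is $N$, a contradiction for $N>4$. (Separately, simultaneous triangularization by a \emph{constant} $U$ over $\overline{\F}$ needs all coefficient matrices of all layers to pairwise commute --- the algebra notion of commutativity used in \cite{FSS14} --- not merely the ``computable in every variable order'' definition, and even then triangular form does not yield your rank-$\le w^2$ sum of products.) What is true, and is what \cite{FSS14} actually exploits, is only that every coefficient $\prod_i C_{i,a_i}$ of $\vec{F}$ lies in a commutative matrix algebra $\mathcal{A}$ of dimension at most $w^2$, i.e.\ $\vec{F}=\sum_j A_j f_j(\vecX)$ with \emph{arbitrary} multiplier polynomials $f_j$. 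Since your derivative formula $\partial_{\vecX^{\veca}}\vec{F}(\vecv)=\sum_j A_j\prod_i q_{i,j,a_i}(v_i)$ and the ``collapse of the factors outside $S$ into a common scalar per $j$'' both rely on the product-of-univariates form, the rest of the argument does not survive the loss of that decomposition.

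Beyond this, the step you yourself flag as the obstacle is where the real content of the theorem lies, and counting is not enough to close it. The actual route in \cite{FSS14} works inside the algebra $\mathcal{A}$: for every variable set $S$ of size at most $\log(w^2)+1$, the $(\log(w^2)+1)$-wise independent monomial map makes the shifted coefficients supported in $S$ relate to the original ones through a Vandermonde-type system with distinct monomial weights; a dimension/exchange argument inside the $(\le w^2)$-dimensional algebra shows that the full-support-in-$S$ coefficient lies in the span of lower-support ones, and commutativity is then used to multiply such local relations by the remaining factors and induct on the support to propagate this to global support-$\log(w^2)$ concentration. Merely observing that $(d+1)^{\ell}\ge w^2+1$ distinct $\vect$-monomials arise does not preclude cancellations, and your claim that the $S$-supported derivatives span the \emph{entire} span of $\{A_j\}$ is both unjustified and stronger than what is needed (the derivatives of $\vec{F}$ need only span their own, possibly proper, subspace of $\mathcal{A}$). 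As written, the proposal neither recovers the structure the proof needs nor supplies the local-to-global mechanism, so it does not constitute a proof of the lemma.
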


The succinct SV generator, like the SV generator, is a $k$-wise independent monomial map for and degree $d$.

\begin{lemma}
\label{lem:SSV-mon-indep}
The polynomial map $\cG_{n,k}^{\SSV} (\vecy, \vecz) $ of \autoref{construction:ssv} is an individual degree $d$, $k$-wise independent monomial map for every $d$.
\end{lemma}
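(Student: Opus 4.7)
The plan is to reduce this to \autoref{lem:SSV-indep-prop}, which already does essentially all the work. Given any subset $S \subseteq [N]$ with $|S| \le k$, that lemma (applied with $T = S$) supplies a value $\vecalpha \in \F^{kn}$ for the $\vecz$-variables (and possibly for some of the $\vecy$-variables, if $|S| < k$) such that the resulting polynomial map $\cG_{n,k}^{\SSV}(\vecy, \vecalpha)$ has the following shape: for each $i \in S$, the $i$-th coordinate equals a distinct $\vecy$-variable $y_{j_i}$, and for each $i \notin S$, the $i$-th coordinate is zero.

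With that setup in hand, consider any exponent vector $\veca$ with $\supp(\veca) \subseteq S$ and $\max_i a_i \le d$. By the definition of the map above,
\[
\cG_{n,k}^{\SSV}(\vecy, \vecalpha)^{\veca} = \prod_{i \in [N]} \cG_{n,k}^{\SSV}(\vecy, \vecalpha)_i^{a_i} = \prod_{i \in S} y_{j_i}^{a_i},
\]
since the coordinates outside of $S$ are zero but have exponent $a_i = 0$ (so they contribute a trivial factor of $1$ rather than $0$). This is a genuine monomial in $\vecy$ (not the zero polynomial) because it is a product of distinct variables, each raised to some non-negative integer power.

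Finally, these monomials are distinct as $\veca$ varies: since the variables $\{y_{j_i}\}_{i \in S}$ are distinct, the exponent $a_i$ on $y_{j_i}$ can be recovered from the monomial, so the map $\veca \mapsto \cG_{n,k}^{\SSV}(\vecy,\vecalpha)^{\veca}$ is injective on exponent vectors supported in $S$. This verifies all requirements of \autoref{def:indep-mon-map}, and notably the argument is uniform in the individual degree bound $d$, so the same value $\vecalpha$ works for every $d$.

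The only step requiring any care at all is the bookkeeping of the ``possibly of some of the $\vecy$ variables'' clause in \autoref{lem:SSV-indep-prop} when $|S| < k$: one must observe that zeroing out the extra $\vecy$-variables is harmless for the monomial-map property, because the coordinates they were multiplying were also zeroed out and therefore contribute only trivial factors in $\cG_{n,k}^{\SSV}(\vecy,\vecalpha)^{\veca}$ for $\supp(\veca) \subseteq S$. There is no real obstacle here; the lemma is essentially a direct corollary of \autoref{lem:SSV-indep-prop}.
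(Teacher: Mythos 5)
Your proof is correct and follows essentially the same route as the paper: invoke \autoref{lem:SSV-indep-prop} to fix $\vecz$ so that the coordinates indexed by $S$ carry distinct $\vecy$-variables (everything else contributing trivially), and then note that the resulting monomials $\prod_{i\in S} y_{j_i}^{a_i}$ are non-zero and distinct for all exponent vectors supported in $S$, uniformly in $d$. Your extra remark about the $|S|<k$ case (the spare $\vecy$-variables sit in coordinates raised to the zeroth power, so they need not be fixed at all) is a harmless refinement the paper leaves implicit.
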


\begin{proof}
By \autoref{lem:SSV-indep-prop}, for any set $S$ of coordinates of size at most $k$ we can set the $\vecz$ variables such that each coordinate in $S$ contains a distinct $y$ variable. Then, it is also clear that all individual degree up to $d$ monomials of this map are distinct, for any choice of $d$
\end{proof}

The final ingredient (also from \cite{FSS14}) is the following lemma, which shows how to obtain hitting sets from rank concentration.

\begin{lemma}[\cite{FSS14}, Corollary 3.5]
\label{lem:concentration-to-hitting}
Let $\vec{F} \in \F[\vecX]^{r \times r}$ be a matrix of polynomials that is support-$k$ rank concentrated at $\vecalpha \in \F^N$, and let $G(\vecX) = \vec{F}_{1,1}$. Then $G(\vecX) \not\equiv 0$ if and only if $G \circ (\cG_{n,k}^{\SSV} + \vecalpha) \not\equiv 0$.
\end{lemma}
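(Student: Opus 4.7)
The plan is to establish the non-trivial direction: if $G \not\equiv 0$ then $G \circ (\cG_{n,k}^{\SSV} + \vecalpha) \not\equiv 0$ (the converse is immediate, since substitution into the zero polynomial yields zero). The first step is to transfer the rank concentration hypothesis from the matrix $\vec{F}$ to its scalar entry $G = \vec{F}_{1,1}$: because $M \mapsto M_{1,1}$ is a linear functional on matrices, the span equality of the matrix-valued derivatives of $\vec{F}$ at $\vecalpha$ descends to the analogous span equality for the scalar (Hasse) derivatives of $G$ at $\vecalpha$. In particular, if every small-support derivative $\partial_{\vecX^\vb}(G)(\vecalpha)$ with $|\supp(\vb)| \le k$ were zero, then every derivative of $G$ at $\vecalpha$ would be zero.

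The second step is a Taylor-style expansion around $\vecalpha$: write $G(\vecX + \vecalpha) = \sum_\veca d_\veca \vecX^\veca$, where $d_\veca$ is the Hasse derivative of $G$ at $\vecalpha$ indexed by $\veca$. Since $G \not\equiv 0$, the shift is also nonzero, so some $d_\veca \ne 0$; combined with the descent just established, this forces some $\vb_0$ with $|\supp(\vb_0)| \le k$ to satisfy $d_{\vb_0} \ne 0$. Thus $G(\vecX + \vecalpha)$ contains a nonzero monomial supported on a set $S := \supp(\vb_0)$ of size at most $k$.

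The third step applies \autoref{lem:SSV-indep-prop} to this $S$: it furnishes a fixing of the $\vecz$ variables (and, if $|S|<k$, of the extra $\vecy$ variables to zero) so that $\cG_{n,k}^{\SSV}$ becomes a vector $\vecy'$ with distinct $\vecy$-variables at the coordinates in $S$ and zeros elsewhere. Under this specialization, $G(\cG_{n,k}^{\SSV} + \vecalpha)$ equals $G(\vecy' + \vecalpha) = \sum_\veca d_\veca (\vecy')^\veca$, which annihilates every term with $\supp(\veca) \not\subseteq S$ and sends the remaining terms injectively to distinct monomials in $\vecy$. The nonzero coefficient $d_{\vb_0}$ therefore survives as a nonzero $\vecy$-monomial, certifying that the specialized polynomial, and hence the full polynomial in $(\vecy,\vecz)$, is nonzero.

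The only point requiring care is the transfer step: one should interpret $\partial_{\vecX^\veca}$ as a Hasse derivative so that the identity $G(\vecX + \vecalpha) = \sum_\veca d_\veca \vecX^\veca$ holds independently of $\Char(\F)$, and observe that the linearity of the $(1,1)$-projection is exactly what is needed to push the span equality from matrices of derivatives to scalars. Once this is in place, the remainder is a clean invocation of the algebraic $k$-wise independence property of the SSV generator already used in \autoref{subsec:sparse} and \autoref{subsec:smesp}.
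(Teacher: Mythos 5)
Your proof is correct and follows essentially the argument that the paper delegates to \cite{FSS14} (Corollary 3.5): the linear projection $M \mapsto M_{1,1}$ pushes the rank-concentration span identity down to the Hasse derivatives of $G$, the Taylor expansion around $\vecalpha$ then yields a nonzero support-$\le k$ monomial in $G(\vecX+\vecalpha)$, and the $k$-wise independence property of $\cG_{n,k}^{\SSV}$ from \autoref{lem:SSV-indep-prop} extracts it, exactly as in \autoref{lem:small-support-hit}. This is precisely what the paper's accompanying remark asserts, namely that the \cite{FSS14} proof only uses the independent-monomial-map property and not the specific implementation of the generator, so your writeup is a faithful self-contained version of the intended proof.
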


We remark that although \cite{FSS14} phrase this lemma for their construction of the SV generator, the proof goes through verbatim using the properties of $\cG_{n,k}^{\SSV}$ as explained in the proof of \autoref{lem:SSV-mon-indep}, and does not depend on the specific implementation.

We now prove that $\cG_{n, 4\log w + 1}$ hits $N$-variate polynomials that are computed by width $w$ commutative roABPs.

\begin{theorem}
\label{thm:SSV-hits-comm-roABPs}
Let $|\F| > nd$, and let $F(\vecX) \in \F[\vecX]$ be an $N$-variate polynomial of individual degree at most $d$, and computed by a width $w$ commutative roABP. Then, $F \not\equiv 0$ if and only if $F \circ \cG_{n, 1+ 4 \log w}^{\SSV} \not\equiv 0$.
\end{theorem}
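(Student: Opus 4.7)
The plan is to combine the rank concentration technology from \cite{FSS14} (\autoref{lem:mon-indep-concentrates} and \autoref{lem:concentration-to-hitting}) with the monomial-independence of our succinct SV map (\autoref{lem:SSV-mon-indep}) and its additivity (\autoref{fact:ssv-additive}). Writing $F = \vec{F}_{1,1}$ where $\vec{F} \in \F[\vecX]^{w \times w}$ is the matrix polynomial $\prod_i M_i(X_{\sigma(i)})$ arising from the width-$w$ commutative roABP, I would split the $4\log w + 1$ seed coordinates of $\cG_{n, 4\log w + 1}^{\SSV}$ into a ``concentration'' block of $2\log w + 1$ coordinates, used to produce a generic point at which $\vec{F}$ becomes support-$(2\log w)$ rank concentrated, together with a ``hitting'' block of $2\log w$ coordinates, used to then probe the rank-concentrated matrix.

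Concretely, I first apply \autoref{lem:SSV-mon-indep} with $k = 2\log w + 1$ to conclude that $\cG_{n, 2\log w + 1}^{\SSV}(\vecy_1, \vecz_1)$ is an individual-degree-$d$, $(\log(w^2) + 1)$-wise independent monomial map. By \autoref{lem:mon-indep-concentrates}, the matrix $\vec{F}(\vecX)$ is then support-$2\log w$ rank concentrated at $\vaa := \cG_{n, 2\log w + 1}^{\SSV}(\vecy_1, \vecz_1)$, working over the extended field $\F(\vecy_1, \vecz_1)$; the hypothesis $|\F| > nd$ supplies enough distinct field elements for the underlying univariate-degree-$d$ behaviour exploited in those lemmas.

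Next I would apply \autoref{lem:concentration-to-hitting} (read over the base field $\F(\vecy_1, \vecz_1)$) with $r = w$, $k = 2\log w$, and shift $\vaa$, using a fresh independent copy $\cG_{n, 2\log w}^{\SSV}(\vecy_2, \vecz_2)$ of the succinct SV generator. This yields that $F \not\equiv 0$ over $\F(\vecy_1, \vecz_1)$, which is equivalent to $F \not\equiv 0$ over $\F$ since $F \in \F[\vecX]$, if and only if
\[
F \circ \bigl( \cG_{n, 2\log w}^{\SSV}(\vecy_2, \vecz_2) + \cG_{n, 2\log w + 1}^{\SSV}(\vecy_1, \vecz_1) \bigr) \not\equiv 0.
\]
A final invocation of \autoref{fact:ssv-additive} collapses the sum of these two maps on disjoint seed variables into a single $\cG_{n, 4\log w + 1}^{\SSV}(\vecy', \vecz')$ on the concatenated seed, which is exactly the generator in the theorem statement.

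The main obstacle I foresee is not conceptual but a technicality of base change: \autoref{lem:concentration-to-hitting} is stated for a \emph{constant} shift $\vecalpha \in \F^N$, whereas I need to apply it with $\vaa$ a tuple of polynomials in $\vecy_1, \vecz_1$. I expect this to go through verbatim by rerunning that lemma's proof over $\F(\vecy_1, \vecz_1)$, relying on the fact that rank concentration is a linear-algebraic condition that is preserved under field extension, and that $\cG_{n, 2\log w}^{\SSV}(\vecy_2, \vecz_2)$ uses fresh variables disjoint from $\vecy_1, \vecz_1$. It will be worth checking that neither the rank-concentration lemma nor the concentration-to-hitting-set lemma secretly relies on $\F$-rationality of the base point, and that $|\F| > nd$ remains sufficient after base change; both should follow directly from the original arguments of \cite{FSS14}.
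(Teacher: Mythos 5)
Your proposal matches the paper's proof essentially step for step: the same split of the seed into a $(2\log w+1)$-block for rank concentration (via \autoref{lem:SSV-mon-indep} and \autoref{lem:mon-indep-concentrates}) and a $2\log w$-block for hitting (via \autoref{lem:concentration-to-hitting}), followed by the same collapse using additivity (\autoref{fact:ssv-additive}). The base-change technicality you flag is handled implicitly in the paper (the concentration statement is already over the extension field $\F(\vect,\vecs)$, and the concentration-to-hitting argument is noted to go through verbatim), so your argument is correct and essentially identical to the paper's.
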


\begin{proof}
By definition, $F$ is the $(1,1)$ entry of a matrix polynomial $\vec{F}(\vecX) \in \F[\vecX]^{w \times w}$, with $\vec{F}$ being computed by a width-$w$ commutative roABP. By \autoref{lem:SSV-mon-indep} and \autoref{lem:mon-indep-concentrates}, we get that the polynomial $\vec{F} \circ \cG_{n,2\log w+1}^{\SSV}$ is support-$\log(w^2)$ rank concentrated. By \autoref{lem:concentration-to-hitting}, we deduce that $F \not \equiv 0$ if and only if $F \circ ( \cG^{\SSV}_{n,2\log w+1} (\vecy_1, \vecz_1) + \cG^{\SSV}_{n,2\log w}(\vecy_2, \vecz_2) ) \not \equiv 0$, where $\vecy_1, \vecy_2, \vecz_1, \vecz_2$ are disjoint sets of variables. By the additivity property (\autoref{fact:ssv-additive}), it holds that  $F \not\equiv 0$ if and only if $F \circ \cG_{n, 1+ 4 \log w}^{\SSV} (\vecy, \vecz) \not\equiv 0$ for $\vecy=(\vecy_1, \vecy_2)$ and $\vecz = (\vecz_1, \vecz_2)$.
\end{proof}

\begin{corollary}
There exists a $\poly(n, \log(w))$-$\SPS$ succinct generator for the class of polynomials computed by width-$w$ commutative roABPs.
\end{corollary}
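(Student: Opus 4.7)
The plan is to read off the corollary directly from the two main ingredients already established in this subsection: the hitting property from \autoref{thm:SSV-hits-comm-roABPs}, and the succinctness property from \autoref{fact:ssv-succinct}. No new ideas are needed; the only task is to package them to match the definition of a $\poly(n,\log w)$-$\SPS$ succinct generator from \autoref{defn:succ-gen}.

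First I would take the candidate generator to be $\cG^{\SSV}_{n,\,1+4\log w}(\vecy,\vecz)$, exactly the map appearing in \autoref{thm:SSV-hits-comm-roABPs}. By that theorem, for any $F\in\F[\vecX]$ of individual degree at most $d$ computed by a width-$w$ commutative roABP (over a large enough field), $F\not\equiv 0$ iff $F\circ\cG^{\SSV}_{n,\,1+4\log w}\not\equiv 0$, which is precisely condition~\eqref{defn:succ-gen:gen} of \autoref{defn:succ-gen} for the class $\cD$ of such roABPs.

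Next I would verify the succinctness conditions \eqref{defn:succ-gen:indexing} and \eqref{defn:succ-gen:output}. By \autoref{construction:ssv}, $\cG^{\SSV}_{n,k}(\vecy,\vecz)=\coeff_\vecx(Q^{\SSV}_{n,k}(\vecy,\vecz,\vecx))$, so the associated multilinear polynomial in $\vecx$ witnessing condition \eqref{defn:succ-gen:indexing} is exactly $Q^{\SSV}_{n,\,1+4\log w}$, which is computable by a $\poly(n,\log w)$-size multilinear $\SPS$ circuit. For condition \eqref{defn:succ-gen:output}, \autoref{fact:ssv-succinct} states that for every fixing $\vecy=\vecalpha$, $\vecz=\vecbeta$, the polynomial $Q^{\SSV}_{n,\,1+4\log w}(\vecalpha,\vecbeta,\vecx)$ is computed by a multilinear $\SPS$ circuit of size $\poly(n,1+4\log w)=\poly(n,\log w)$; in other words, every polynomial in the image of $\cG^{\SSV}_{n,\,1+4\log w}$ (when viewed as a multilinear polynomial in $\vecx$) is a $\poly(n,\log w)$-size multilinear $\SPS$ formula, as required.

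Combining these, $\cG^{\SSV}_{n,\,1+4\log w}$ is a $\poly(n,\log w)$-$\SPS$ succinct generator (computable by $\poly(n,\log w)$-$\SPS$ circuits) for the class of width-$w$ commutative roABPs, as desired. There is no real obstacle here; the only mild subtlety is noting that the class for which succinctness is measured (multilinear $\SPS$) coincides with the class $\cC'$ in which the associated polynomial $Q^{\SSV}_{n,\,1+4\log w}(\vecy,\vecz,\vecx)$ itself lives, so closure under substitution gives succinctness of every element of the image without any further work.
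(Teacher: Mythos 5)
Your proposal is correct and matches the paper's (implicit) argument exactly: the corollary is read off by combining the hitting property of \autoref{thm:SSV-hits-comm-roABPs} with the succinctness guarantee of \autoref{fact:ssv-succinct}, with $Q^{\SSV}_{n,\,1+4\log w}$ serving as the witnessing polynomial for \autoref{defn:succ-gen}. No gaps; your packaging of conditions \eqref{defn:succ-gen:indexing}--\eqref{defn:succ-gen:gen} is precisely what the paper intends.
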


\subsection{Depth-$D$ Occur-$k$ Formulas}\label{sec:occurk}

The following model was considered in \cite{ASSS16}.

\begin{definition}
\label{def:depth-D-occur-k}
An \emph{occur-$k$ formula} is a directed tree, with internal nodes labeled either by $+$ or $\powerproduct$ (a power-product gate). The edges entering a $\powerproduct$ gate are labeled by integers $e_1, \ldots, e_m$, and on inputs $g_1, \ldots, g_m$, the gate computes $g_1^{e_1} \cdots g_m^{e_m}$. The leaves of tree are depth-$2$ formulas which compute sparse polynomials, such that every variable $X_i$ occur in at most $k$ of them. 

The \emph{size} of an occur-$k$ formula is the sum over the sizes of its gates, where
\begin{enumerate}
\item The size of a `$+$' gate is $1$,
\item The size of a `$\powerproduct$' gate is the sum $e_1 + \cdots + e_m$ of the labels of its incoming edges, and
\item The size of a leaf node is the size of the depth-$2$ formula it is computing.
\end{enumerate}

The \emph{depth} of an occur-$k$ formula is the number of layers of $+$ and $\powerproduct$ gates, plus 2, to account for the sparse formulas at the leaves.
\end{definition}

Agrawal et al.\ (\cite{ASSS16}) constructed a hitting set for this class, which combines both the rank condenser construction (\autoref{con:succinct-rk-cond}) and a generator for sparse polynomials. While the original construction uses the Klivans-Spielman generator (\cite{ks01}), it is possible to make the hitting set succinct while using our version of the shifted succinct Shpilka-Volkovich generator.

We now present the succinct generator of depth-$D$ occur-$k$ formulas.

\begin{construction}
\label{con:depth-D-occur-k}
Let $D, k, n, s \in \mathbb{N}$. Denote $R=(2k)^{2D \cdot 2^D}$. For every $\ell \in [D-2]$, let $\vecy_\ell = (y_{\ell,1}, \ldots, y_{\ell, R}$) denote a tuple of $R$ variables. We define the polynomial
\[
P^{\ASSS} (\vecx, \vecy_1, \ldots, \vecy_{D-2}, \vt_1, \ldots, \vt_\ell, \vecu, \vecv) = 
\sum_{\ell=1}^{D-2} \Grc_{n,R} (x_1, \ldots, x_n, \vecy_{\ell}, \vt_\ell)
 + Q_{n,R \log s + R \log R}^{\SSSV} (\vecx, \vecu, \vecv),
\]
and the generator
\[
\cG^{\ASSS}(\vecy_1, \ldots, \vecy_{D-2}, \vt_1, \ldots, \vt_\ell, \vecu, \vecv) = \coeff_{\vecx} (P^{\ASSS}). \qedhere
\]

\end{construction}

In our setting, we think of $k,D=O(1)$, which immediately implies:

\begin{fact}
\label{fact:ASSS-succinct}
For $k,D=O(1)$, the generator of \autoref{con:depth-D-occur-k} is $\poly(\log s, n)$-$\SPS$ succinct.
\end{fact}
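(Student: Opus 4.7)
The plan is to unpack \autoref{defn:alg-PRF}: establishing $\poly(\log s, n)$-$\SPS$ succinctness of $\cG^{\ASSS}$ reduces to showing that for every fixing $\vaa$ of all variables other than $\vecx$ in $P^{\ASSS}$, the resulting polynomial $P^{\ASSS}(\vecx, \vaa) \in \F[\vecx]$ is computable by a $\SPS$ circuit of size $\poly(\log s, n)$. The first thing I would observe is that under the hypothesis $k, D = O(1)$, the parameter $R = (2k)^{2D \cdot 2^D}$ is a constant, and there are only $D - 2 = O(1)$ many $\Grc$-summands in the sum defining $P^{\ASSS}$.

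Then I would bound each summand separately. After fixing $\vecy_\ell$ and $\vt_\ell$ to scalars, \autoref{fact:vdm-is-succinct} gives that each summand $\Prc_{n,R}(\vecx, \vecy_\ell, \vt_\ell)$ (which is the polynomial whose $\vecx$-coefficient vector defines the map $\Grc_{n,R}$) is computable by a $\SPS$ circuit of size $\poly(R, n) = \poly(n)$. For the remaining term, \autoref{fact:sssv-succinct} shows that after fixing $\vecu, \vecv$, the polynomial $Q^{\SSSV}_{n, R\log s + R\log R}(\vecx, \vecu, \vecv)$ is a multilinear $\SPS$ circuit of size $\poly(R \log s + R \log R, n) = \poly(\log s, n)$, again using $R = O(1)$.

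Finally, I would combine these pieces by noting that the class $\SPS$ is closed under addition: merging the top-level $+$-gates of a constant number of $\SPS$ subcircuits produces a single $\SPS$ circuit whose size is (up to constants) the sum of the sizes, yielding a bound of $O(D) \cdot \poly(n) + \poly(\log s, n) = \poly(\log s, n)$, as required. No step is truly an obstacle; the proof is essentially bookkeeping over the two previously established succinctness facts. The one mild subtlety is reading the notation in \autoref{con:depth-D-occur-k}: the expression $\Grc_{n,R}(x_1, \ldots, x_n, \vecy_\ell, \vt_\ell)$ should be interpreted, via the coefficient-vector identification underlying \autoref{defn:succ-gen}, as referring to the $\SPS$-computable polynomial $\Prc_{n,R}$, so that \autoref{fact:vdm-is-succinct} indeed applies to each summand after the relevant fixing.
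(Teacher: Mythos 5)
Your proof is correct and follows essentially the same route as the paper, which treats the fact as immediate from \autoref{fact:vdm-is-succinct} and \autoref{fact:sssv-succinct} together with closure of $\SPS$ circuits under addition (the same argument the paper spells out for \autoref{thm:succinct-trdeg}), with the key point being that $R=(2k)^{2D\cdot 2^D}=O(1)$ when $k,D=O(1)$. Your reading of $\Grc_{n,R}(x_1,\ldots,x_n,\vecy_\ell,\vt_\ell)$ as denoting the $\SPS$-computable polynomial $\Prc_{n,R}$ is indeed the intended interpretation.
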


We now quote (a variant of) a theorem proved by Agrawal et al., which shows that \autoref{con:depth-D-occur-k} is a generator for depth-$D$ occur-$k$ formulas.

\begin{theorem}[\cite{ASSS16}, and see also the presentation in Chapter 4 of \cite{rp}]
\label{thm:asss-depth-D-occur-k}
Suppose $\Phi(\vecw) : \F^{m} \to \F^N$ is a map such that for any polynomial $F(\vecX) \in \F[\vecX]$ of sparsity at most $R! \cdot s^R$, $F \circ \Phi \not\equiv 0$. Then there exist integers $r_1, \ldots, r_{D-2} \in [R]$, for $R=(2k)^{2D \cdot 2^D}$ such that the map
\begin{equation}\label{eq:ASSS-depth-D-occur-k}
\Psi : X_i \mapsto \sum_{\ell = 1}^{D-2} \left( \sum_{j=1}^{r_\ell} y_{j,\ell} t_{\ell}^{i j} \right) + \Phi(\vecw)
\end{equation}
is a generator for polynomials computed by depth-$D$ occur-$k$ formulas of size $s$ assuming $\Char(\F) = 0$ or $\Char(\F) > s^R$.
\end{theorem}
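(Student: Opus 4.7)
The plan is to proceed by induction on the depth $D$, combining the Jacobian criterion with the rank-preservation property of Vandermonde matrices that underlies \autoref{con:succinct-rk-cond}. The base case $D=2$ is immediate: a depth-$2$ occur-$k$ formula computes a sparse polynomial of size at most $s$, and by hypothesis $\Phi$ is a generator for such polynomials (the sparsity bound $R! \cdot s^R$ is certainly $\ge s$).

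For the inductive step, write $F = C(g_1, \ldots, g_m)$ where $C$ is the top gate ($+$ or $\powerproduct$) and each $g_i$ is a maximal depth-$(D-1)$ occur-$k$ subformula. The first step is to bound the transcendence degree of the tuple $(g_1, \ldots, g_m)$ in terms of $k$ and $D$. This follows from the occur-$k$ property applied inductively through the formula: each variable $X_i$ reaches at most $k$ leaves, and a careful accounting of how transcendence degree aggregates under $+$ and $\powerproduct$ gates across $D$ levels gives the doubly exponential bound $\trdeg\{g_1,\ldots,g_m\} \le R = (2k)^{2D \cdot 2^D}$. Given this bound, I would choose $r_{D-2} \in [R]$ such that the Vandermonde-style substitution $X_i \mapsto X_i + \sum_{j=1}^{r_{D-2}} y_{j,D-2}\, t_{D-2}^{ij}$ preserves the rank of the Jacobian matrix of $(g_1,\ldots,g_m)$; such $r_{D-2}$ exists by the Gabizon--Raz rank-preservation property, since Vandermonde matrices of appropriate dimension preserve any subspace of matching dimension.

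By the Jacobian criterion (which requires the characteristic hypothesis $\Char(\F)=0$ or $\Char(\F) > s^R$ to guarantee that algebraic independence is faithfully detected by the Jacobian), rank preservation translates into the statement that the substitution is \emph{faithful} for the tuple $(g_1,\ldots,g_m)$: any non-trivial polynomial combination $C(g_1,\ldots,g_m)$ remains non-zero after substitution. After the substitution, the resulting object can be reorganized as a depth-$(D-1)$ occur-$k$ formula (in fewer effective variables, treating the new $\vecy_{D-2}, \vt_{D-2}$ as parameters), to which we apply the inductive hypothesis with the remaining Vandermonde substitutions and with $\Phi$ at the leaves. The sparsity bound $R! \cdot s^R$ in the hypothesis on $\Phi$ accounts for the worst-case blow-up in the number of monomials when a polynomial combination of $R$ algebraically independent sparse polynomials (each of sparsity $s$) is expanded out.

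The main obstacle I anticipate is making the "faithful map" step precise and uniform across levels: one must verify that the rank-preserving choice of $r_\ell$ at every level $\ell \in [D-2]$ exists simultaneously with a single set of $\vt_\ell$ variables, and that the intermediate polynomials, at each stage of the induction, still satisfy a transcendence degree bound that matches the chosen $r_\ell$. This is delicate because the sparsity blows up by a factor of $R!\cdot s^R$ per level, and the characteristic hypothesis must remain valid throughout; it is precisely this compounding that forces the doubly exponential $R = (2k)^{2D \cdot 2^D}$ rather than a singly exponential bound.
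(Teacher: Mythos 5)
Note first that the paper does not prove this statement at all: it is imported, up to notation, from \cite{ASSS16}, with \cite{rp} cited for a self-contained presentation, so the comparison is with that source, whose overall architecture (faithful homomorphisms certified by the Jacobian criterion as in \autoref{fact:jacobian-rank} and \autoref{lem:faithful-recipe}, one Vandermonde block per level, a sparse-hitting map $\Phi$ at the bottom, and the characteristic hypothesis) your sketch does echo. However, the load-bearing step of your induction is false as stated. You claim the occur-$k$ property forces $\trdeg{g_1,\ldots,g_m}\le R$ for the children $g_i$ of the top gate. It does not: take $F=\sum_{i=1}^{N/2}X_{2i-1}X_{2i}$, a depth-$3$ occur-$1$ formula (a single $+$ gate whose leaves are the monomials $X_{2i-1}X_{2i}$, each variable occurring in exactly one leaf). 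Its top-gate children are $N/2$ algebraically independent polynomials, so their transcendence degree grows with $N$, while $R=(2k)^{2D\cdot 2^D}$ is an absolute constant here. In this example the formula is itself $s$-sparse and is already hit by $\Phi$ alone; the genuine argument must interleave exactly this kind of sparsity bookkeeping with transcendence-degree control of a more refined family of subformulas, and it is this interplay that produces both the $R!\cdot s^R$ sparsity requirement on $\Phi$ and the doubly exponential $R$. Your proposal simply assumes the conclusion of that analysis.

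The second gap is the re-entry into the induction. After substituting $X_i\mapsto X_i+\sum_{j=1}^{r_{D-2}}y_{j,D-2}t_{D-2}^{ij}$ you assert that the result ``can be reorganized as a depth-$(D-1)$ occur-$k$ formula'' to which the inductive hypothesis applies. No justification is given, and none is apparent: the fresh $\vecy_{D-2},\vt_{D-2}$ variables now occur in every leaf, shifting by them can blow up the sparsity of the leaves, and there is no reason the substituted polynomial admits a shallower occur-$k$ representation in the original sense. The argument in \cite{ASSS16} (and in \cite{rp}) is instead carried out on the original formula, building the composed homomorphism level by level and certifying faithfulness via chain-rule/Jacobian-rank arguments for carefully chosen sets of subformulas, while simultaneously tracking their sparsity and transcendence degree; it does not pass through a ``substituted formula is again occur-$k$ of smaller depth'' step. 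You flag this uniformity issue yourself as the ``main obstacle,'' but flagging it does not close it: as written, the proposal establishes the base case $D=2$ and nothing beyond it.
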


As a corollary, we obtain that \autoref{con:depth-D-occur-k} is a succinct generator for this class.

\begin{corollary}
\label{cor:succinct-depth-D-occur-k}
For $D, k = O(1)$, \autoref{con:depth-D-occur-k} is a $\poly(\log s, n)$-$\SPS$ succinct generator for the class of polynomials computed by size-$s$ depth-$D$ occur-$k$ formulas.
\end{corollary}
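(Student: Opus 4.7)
The plan is to reduce to the ASSS generator theorem (\autoref{thm:asss-depth-D-occur-k}) by feeding our shifted succinct SV generator into it as $\Phi$, and then verifying that the resulting ASSS map $\Psi$ appears as a simple specialization of $\cG^{\ASSS}$. For the setup, note that with $R = (2k)^{2D \cdot 2^D}$ we have $\log(R! \cdot s^R) \le R\log R + R\log s$, so \autoref{cor:hitting-set-sparse} tells us that $\cG_{n,\, R\log s + R\log R}^{\SSSV}$ hits every polynomial of sparsity at most $R! \cdot s^R$. Instantiating \autoref{thm:asss-depth-D-occur-k} with this choice of $\Phi$ yields integers $r_1,\ldots,r_{D-2} \in [R]$ and a map $\Psi$ of the form \eqref{eq:ASSS-depth-D-occur-k} that is a generator against depth-$D$ occur-$k$ size-$s$ formulas.

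The next step is to exhibit $\Psi$ as a specialization of $\cG^{\ASSS}$. For each $\ell \in [D-2]$, I would substitute $\vt_\ell \leftarrow (t_\ell,\, t_\ell^{2^0},\, \ldots,\, t_\ell^{2^{n-1}})$ and set the excess variables $y_{r_\ell+1,\ell},\ldots,y_{R,\ell}$ to $0$. By \autoref{cl:succinct-vdm-is-vdm}, each summand $\Grc_{n,R}(\vx,\vecy_\ell,\vt_\ell)$ then collapses in coordinate $i$ to the Vandermonde rank condenser $\sum_{j=1}^{r_\ell} y_{j,\ell} t_\ell^{ij}$, exactly matching the $\ell$-th summand of $\Psi$, while the shifted SV block of $\cG^{\ASSS}$ is already $\Phi$ by construction. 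Since specializing variables cannot cause a non-zero polynomial to vanish identically, $F \circ \Psi \not\equiv 0$ forces $F \circ \cG^{\ASSS} \not\equiv 0$, so $\cG^{\ASSS}$ generates the target class.

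Succinctness is then packaged by \autoref{fact:ASSS-succinct}: when $D,k=O(1)$ we have $R=O(1)$, so for any fixing of the auxiliary variables each of the $O(1)$ summands of $P^{\ASSS}$ is, as a polynomial in $\vx$, a $\poly(\log s, n)$-size multilinear $\SPS$ circuit by \autoref{fact:vdm-is-succinct} and \autoref{fact:ssv-succinct}, and flattening the top $+$ gate keeps the result in $\SPS$ of the same order of size. The point I expect to need the most care is the distinction emphasized in \autoref{defn:succ-gen} between the complexity of $\cG^{\ASSS}$ itself and that of the \emph{output} polynomials whose $\vx$-coefficient vectors form the hitting set; succinctness is the latter and must be verified at \emph{every} value of the auxiliary parameters, not just generically. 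Fortunately this goes through cleanly, since substitution into a $\SPS$ circuit never increases its size.
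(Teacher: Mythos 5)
Your proposal is correct and follows essentially the same route as the paper: instantiate \autoref{thm:asss-depth-D-occur-k} with $\Phi = \cG_{n,\,R\log s + R\log R}^{\SSSV}$ (justified via \autoref{cor:hitting-set-sparse}), realize the resulting map $\Psi$ as a projection of \autoref{con:depth-D-occur-k} through the substitution of \autoref{cl:succinct-vdm-is-vdm} and zeroing the excess $\vecy_\ell$ variables, and obtain succinctness from \autoref{fact:ASSS-succinct}. One minor wording slip: the principle you cite (``specializing variables cannot cause a non-zero polynomial to vanish identically'') is false as literally stated, but the implication you actually use --- that a non-zero specialization $F\circ\Psi$ forces $F\circ\cG^{\ASSS}\not\equiv 0$ --- is the trivially valid direction, so the argument stands as in the paper.
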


\begin{proof}
The succinctness claim follows from \autoref{fact:ASSS-succinct}.

For the hitting property, observe that by \autoref{cor:hitting-set-sparse}, the polynomial map $\cG_{n,m}^{\SSSV} (\vecu, \vecv)$ satisfies the properties required from $\Phi$ in \autoref{thm:asss-depth-D-occur-k} for $m=R \log s + R \log R$, and by \autoref{cl:succinct-vdm-is-vdm}, the generator
\[
\sum_{\ell=1}^{D-2} \Grc_{n,R} (\vecy_{\ell}, \vt_\ell)
\]
maps every $X_i$ to the polynomial $\sum_{\ell = 1}^{D-2} \left( \sum_{j=1}^{R} y_{j,l} t_{\ell}^{i j} \right)$ after using the substitutions in $\vt_\ell$ to a new variable $t_\ell$ as given in \autoref{cl:succinct-vdm-is-vdm}. Since $r_\ell \le R$, the polynomial in \eqref{eq:ASSS-depth-D-occur-k} is a projection of 
\autoref{con:depth-D-occur-k}, by possibly restricting excess $\vecy_\ell$ variables to 0.

The claim now follows from \autoref{thm:asss-depth-D-occur-k}. 
\end{proof}

\section{Succinct Hitting Sets for Circuits of Sparsely Small Transcendence Degree}\label{subsec:gen-sparse-low-trans}

Another model, which was considered in \cite{BMS13}, is that of circuits of the form $C(F_1, \ldots, F_m)$ where the $f_i$'s are polynomials of maximal sparsity $s$, $\trdeg{F_1, \ldots, F_m} = r$ and $C$ is an arbitrary circuit. It is possible to simplify the construction using ideas from \cite{ASSS16}, and thus we cite some of the definitions and the lemmas in the latter paper. Since we do not provide full proofs and do not discuss the full background, our terminology is slightly different at certain points.

We begin with the definition of the Jacobian matrix.

\begin{definition}
\label{def:jacobian}
Let $\vec{F} = \set{F_1(\vecX), \ldots, F_m(\vecX)} \subseteq \F[\vecX]$ be a set of $N$-variate polynomials. The Jacobian matrix of $\vec{F}$, denoted $\Jac_{\vecX} (\vec{F})$, is an $m \times N$ matrix such that $\Jac_{\vecX} (\vec{F}) _ {i,j} = \partial F_i / \partial X_j$.
\end{definition}

The rank of the Jacobian matrix captures the transcendence degree of $\vec{F}$, assuming the characteristic is $0$ or large enough.

\begin{fact}[\cite{BMS13}]
\label{fact:jacobian-rank}
Let $\vec{F} = \set{F_1(\vecX), \ldots, F_m(\vecX)} \subseteq \F[\vecX]$ be a set of $N$-variate polynomials over $\F$ of degree at most $d$, such that $\trdeg{F_1, \ldots, F_m} = r$. If $\Char(\F) = 0$ or $\Char(\F) \ge d^r$, then $\rank_{\F(\vecX)} \Jac_{\vecX} (\vec{F}) = r$.
\end{fact}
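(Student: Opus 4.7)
The plan is to prove this equality by reducing to the classical \emph{Jacobian criterion for algebraic independence}: polynomials $G_1,\ldots,G_\ell \in \F[\vecX]$ of degree at most $d$ are algebraically independent over $\F$ if and only if $\Jac_\vecX(G_1,\ldots,G_\ell)$ has rank $\ell$ over $\F(\vecX)$, provided $\Char(\F) = 0$ or $\Char(\F) \geq d^\ell$. Granting the criterion, the fact follows quickly. Let $F_{i_1},\ldots,F_{i_r}$ be a transcendence basis of $\{F_1,\ldots,F_m\}$. Applying the criterion to this basis shows that its Jacobian is a submatrix of $\Jac_\vecX(\vec F)$ of rank exactly $r$, so $\rank \Jac_\vecX(\vec F) \geq r$. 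Conversely, any $r+1$ of the $F_i$'s are algebraically dependent, so by the criterion no $r+1$ rows of $\Jac_\vecX(\vec F)$ are linearly independent over $\F(\vecX)$, giving $\rank \Jac_\vecX(\vec F) \leq r$.

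For the easy direction of the criterion (algebraic dependence implies row dependence), I would take a nonzero annihilator $A(y_1,\ldots,y_\ell)$ of minimal total degree with $A(\vec G) = 0$. Applying $\partial/\partial X_j$ via the chain rule yields, for every $j \in [N]$,
\[
\sum_{k=1}^{\ell} \left(\frac{\partial A}{\partial y_k}\right)\!(\vec G) \cdot \frac{\partial G_k}{\partial X_j} \;=\; 0,
\]
an explicit linear relation among the rows of $\Jac_\vecX(\vec G)$. To verify nontriviality: if every $\partial A/\partial y_k$ vanished identically, then in characteristic zero $A$ would be constant, and in characteristic $p$ it would be a $p$-th power of some polynomial $B$; but $B$ would then annihilate $\vec G$ with strictly smaller degree, contradicting minimality. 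Hence some $\partial A/\partial y_k$ is a nonzero polynomial of degree strictly less than $\deg A$, and minimality of $A$ forces $(\partial A/\partial y_k)(\vec G) \neq 0$. A standard Bezout-type bound gives $\deg A \leq d^\ell$, which is exactly where $\Char(\F) \geq d^\ell$ is used: the hypothesis ensures that coefficients of $A$ and its derivatives are not killed by the characteristic.

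For the harder direction (algebraic independence implies row independence), a putative linear relation $\sum_k c_k(\vecX) \cdot \partial G_k/\partial X_j \equiv 0$ for all $j$ is equivalent to the relation $\sum_k c_k \, dG_k = 0$ inside the module of K\"ahler differentials $\Omega^1_{\F(\vecX)/\F}$, which (after base extension to the perfect closure if needed) is freely generated by $dX_1,\ldots,dX_N$ over $\F(\vecX)$. Using the transitivity sequence for the tower $\F \subseteq \F(\vec G) \subseteq \F(\vecX)$ together with the fact that $\Omega^1_{\F(\vec G)/\F}$ is freely generated by $dG_1,\ldots,dG_\ell$ whenever the $G_k$ are algebraically independent, one shows that the $dG_k$ remain linearly independent in $\Omega^1_{\F(\vecX)/\F}$ provided $\F(\vecX)/\F(\vec G)$ is separably generated; a potential inseparable obstruction would itself yield a polynomial relation of degree at most $d^\ell$, which the hypothesis on $\Char(\F)$ rules out. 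Extracting the algebraic dependence from the relation then contradicts independence of $\vec G$.

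The main obstacle is precisely this second direction: the chain-rule argument is elementary, but its converse is intertwined with the separability of $\F(\vecX)/\F(F_{i_1},\ldots,F_{i_r})$, which can fail in small positive characteristic; the hypothesis $\Char(\F) = 0$ or $\Char(\F) \geq d^r$ is exactly what is needed to rule out such inseparable pathologies and to ensure that minimal annihilating polynomials have genuinely nonzero partial derivatives on $\vec F$.
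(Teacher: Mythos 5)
The paper does not prove \autoref{fact:jacobian-rank} at all: it is imported as a black box from \cite{BMS13}, so the only meaningful comparison is with the standard argument in that reference, and your outline follows essentially the same route --- the chain rule applied to a minimal-degree annihilator for $\rank \Jac_{\vecX}(\vec F)\le r$, and a separability argument (separating transcendence bases, equivalently left-exactness of the cotangent sequence for $\F\subseteq\F(\vec F)\subseteq\F(\vecX)$) combined with a Perron-type degree bound for $\rank \Jac_{\vecX}(\vec F)\ge r$. As a plan this is sound and is the right proof to have in mind.

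Two points need tightening before it is a proof. First, a quantitative mismatch: you state the auxiliary criterion for $\ell$ polynomials under $\Char(\F)\ge d^{\ell}$ and then invoke it for $(r+1)$-element subsets, which would require $\Char(\F)\ge d^{r+1}$, more than the hypothesis $\Char(\F)\ge d^{r}$. The repair is to use Perron's bound in terms of transcendence degree: any algebraically dependent set of degree-$\le d$ polynomials of transcendence degree $\le r$ admits an annihilator of degree $\le d^{r}$; this is exactly why $d^{r}$ appears in the statement, and it also removes the need for your $p$-th-root detour in the easy direction (which, as written, silently assumes $\F$ perfect). Second, the crux of the hard direction --- that $\F(\vecX)$ is separably generated over $\F(F_{i_1},\ldots,F_{i_r})$ --- is asserted rather than derived; ``an inseparable obstruction would yield a relation of degree at most $d^{\ell}$'' is the right intuition but not an argument. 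To complete it, extend $F_{i_1},\ldots,F_{i_r}$ by a subset of the variables to a transcendence basis of $\F(\vecX)$ over $\F$; for each remaining variable $X_j$, Perron applied to the $N+1$ polynomials of degrees $d,\ldots,d,1,\ldots,1$ gives a nontrivial relation of degree $\le d^{r}$ over the intermediate purely transcendental field, so the minimal polynomial of $X_j$ there has degree $\le d^{r}$; when this is strictly below the characteristic it has nonzero derivative, the extension is separable algebraic over a purely transcendental one, and then $dF_{i_1},\ldots,dF_{i_r}$ stay independent in $\Omega^{1}_{\F(\vecX)/\F}$, i.e.\ $\rank\Jac_{\vecX}(\vec F)\ge r$. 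Note that strictness genuinely matters at the boundary: in characteristic $p$ with $d=p$ and $r=1$, the polynomial $X_1^{p}$ has transcendence degree $1$ but identically zero Jacobian, so the criterion needs $\Char(\F)>d^{r}$ (or $\Char(\F)=0$); this is an issue with the inequality as transcribed in the statement rather than with your approach, but your write-up should use the strict form when invoking separability.
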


This fact shows that a map that preserves the rank of the Jacobian also preserves the transcendence degree of the $f_i$'s, a fact which is useful for constructing generators (this is slightly non-trivial, and see \cite{ASSS16} for details and discussion). For this purpose, we use the following ``recipe'' from \cite{ASSS16} that gives a construction of such a map.

\begin{lemma}[\cite{ASSS16}]
\label{lem:faithful-recipe}
Let $\vec{F} = \set{F_1(\vecX), \ldots, F_m(\vecX)} \subseteq \F[\vecX]$ be a set of $N$-variate polynomials over $\F$ of degree at most $d$, such that $\trdeg{F_1, \ldots, F_m} \le r$, and suppose $\Char(\F) = 0$ or $\Char(\F) \ge d^r$. Let $C(y_1, \ldots, y_m) \in \F[\vecy]$ be any polynomial, and let $\Phi : \F[\vecX] \to \F[\vecz]$ be a homomorphism such that $\rank_{\F(\vecX)} (\Jac_{\vecX} (\vec{F})) = \rank_{\F(\vecz)}\Phi(\Jac_{\vecX}(\vec{F}))$. Consider the mapping $\Psi$ given by
\[
X_i \mapsto \left( \sum_{i=1}^r y_j t^{ij} \right) + \Phi(X_i).
\]
Then, it holds that $C(F_1, \ldots, F_m) \not\equiv 0$ if and only if $C(\Psi(F_1), \ldots, \Psi(F_m)) \not\equiv 0$.
 \end{lemma}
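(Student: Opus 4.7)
The plan is to prove the non-trivial direction: $C(F_1,\ldots,F_m)\not\equiv 0$ implies $C(\Psi(F_1),\ldots,\Psi(F_m))\not\equiv 0$ (the converse is immediate since $\Psi$ is a ring homomorphism). The strategy is to apply the Jacobian criterion twice: once to extract the algebraic-independence structure of the original tuple $\vec F$, and once to verify that the image tuple $\Psi(F_1),\ldots,\Psi(F_r)$ remains algebraically independent as elements of $\F[\vecz,\vecy,t]$. The added Vandermonde summand of $\Psi$ is exactly what supplies the extra transcendence that $\Phi$ alone might have destroyed.

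First I would use \autoref{fact:jacobian-rank}: $\rank_{\F(\vecX)}\Jac_\vecX(\vec F)=\trdeg\{F_1,\ldots,F_m\}\le r$, and by assumption this rank is preserved after applying $\Phi$ entry-wise. Reindexing, I may assume there exist column indices $j_1,\ldots,j_r \in [N]$ such that $F_1,\ldots,F_r$ are algebraically independent and the $r\times r$ minor $M=(\partial F_a/\partial X_{j_b})_{a,b=1}^r$ satisfies $\Phi(\det M) \not\equiv 0$. Every $F_i$ for $i>r$ is algebraic over $\F(F_1,\ldots,F_r)$, so by clearing the minimal polynomials of the $F_i$'s from $C$ I reduce the task to showing $\tilde C(\Psi(F_1),\ldots,\Psi(F_r))\not\equiv 0$ for a suitable non-zero $\tilde C\in\F[Y_1,\ldots,Y_r]$.

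The heart of the proof, in my second step, is to show $\Psi(F_1),\ldots,\Psi(F_r)$ are algebraically independent over $\F$. By the Jacobian criterion applied over $\F(\vecz,\vecy,t)$ (valid by the characteristic hypothesis), it suffices to exhibit $r$ columns of $\Jac_{(\vecy,t,\vecz)}(\Psi(\vec F))$ forming a rank-$r$ submatrix. I would use the columns corresponding to $\partial_{y_1},\ldots,\partial_{y_r}$: by the chain rule,
\[
\partial_{y_k}\Psi(F_a) \;=\; \sum_{j=1}^{N}\Psi(\partial_{X_j}F_a)\cdot \partial_{y_k}\Psi(X_j) \;=\; \sum_{j=1}^{N} t^{jk}\,\Psi(\partial_{X_j}F_a),
\]
so in matrix form the relevant block is $\Psi(\Jac_\vecX(\vec F))\cdot T$, where $T_{j,k}=t^{jk}$ is an $N\times r$ Vandermonde-like matrix. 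By the Cauchy--Binet formula, its $r\times r$ determinant equals $\sum_{S\subseteq[N],\,|S|=r}\Psi\!\bigl(\det(\Jac_\vecX(\vec F))_{[r],S}\bigr)\cdot \det(T_{S,[r]})$. Specializing $\vecy\to 0$ sends $\Psi\to\Phi$ entry-wise, and the term for $S=\{j_1,\ldots,j_r\}$ then contributes $\Phi(\det M)\ne 0$ multiplied by a non-zero Vandermonde-like determinant in $t$; so the sum is non-zero as a polynomial in $(t,\vecy,\vecz)$. This verifies algebraic independence and yields $\tilde C(\Psi(F_1),\ldots,\Psi(F_r))\not\equiv 0$ as required.

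The main obstacle is the bookkeeping in the Cauchy--Binet step: the different subsets $S\subseteq[N]$ contribute terms whose $t$-weights $\det(T_{S,[r]})$ have overlapping supports, and one must ensure they do not conspire to cancel. The cleanest route is to observe that the leading $t$-monomial of $\det(T_{S,[r]})$, ordered by (say) $\sum_{j\in S}j$, distinguishes different $S$, so the ``witness'' term $S=\{j_1,\ldots,j_r\}$ cannot be cancelled by any other summand. A secondary concern is that the clearing of algebraic relations in Step~1 should not produce a zero quotient $\tilde C$, but this is guaranteed by the algebraic independence of $F_1,\ldots,F_r$ together with $C(F_1,\ldots,F_m)\ne 0$.
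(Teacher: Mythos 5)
Your overall architecture is the right one, and it matches the proof in \cite{ASSS16} that the paper itself only cites (the paper gives no proof of \autoref{lem:faithful-recipe}): the reverse direction is immediate since $\Psi$ is a ring homomorphism; the forward direction reduces, via elimination over the transcendence basis, to showing that $\Psi(F_1),\ldots,\Psi(F_r)$ stay algebraically independent; and the chain-rule identity $\partial_{y_k}\Psi(F_a)=\sum_{j}t^{jk}\,\Psi(\partial_{X_j}F_a)$ together with the specialization $\vecy\to 0$ (which turns $\Psi$ into $\Phi$ and decouples the $t$-dependence) is exactly the right computation. The genuine gap is your resolution of the cancellation issue in the Cauchy--Binet expansion. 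The leading $t$-degree of $\det\bigl((t^{s_a k})_{a,k\in[r]}\bigr)$ for $S=\{s_1<\cdots<s_r\}$ is $\sum_a a\,s_a$ (not $\sum_{j\in S}j$), and neither weight separates subsets: for $r=2$ the sets $\{1,5\}$ and $\{3,4\}$ give the same leading degree $11$ (and $\{1,4\},\{2,3\}$ collide for your weight). Worse, no ``distinct leading monomial'' argument can work for arbitrary coefficients: there are $\binom{N}{r}$ polynomials $\det(T_{S,[r]})$, all of degree at most $rN$ in $t$, so they are very far from linearly independent, and a generic coefficient vector \emph{can} make the sum vanish. The non-cancellation must exploit that your coefficients $\Phi(\det(\Jac_{\vecX}(\vec{F}))_{[r],S})\in\F[\vecz]$ are the maximal minors (Pl\"ucker coordinates) of a single rank-$r$ matrix whose entries do not involve $t$. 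That statement is precisely the Gabizon--Raz rank condenser lemma~\cite{GR08} (the same object the paper embeds in \autoref{con:succinct-rk-cond}): a rank-$r$ matrix over $\F(\vecz)$ keeps rank $r$ after multiplication by the formal Vandermonde $(t^{jk})$. Quoting it (or reproducing its proof, e.g.\ via the pivot column set of the row-reduced matrix rather than an arbitrary witness minor) closes the hole; your witness set $\{j_1,\ldots,j_r\}$ chosen only by ``some nonzero minor'' is not enough.

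Two smaller points you should tighten. First, the ``clearing of minimal polynomials'' reduction is best done for the single element $u:=C(F_1,\ldots,F_m)$, which is algebraic over $\F(F_1,\ldots,F_r)$: writing its minimal relation with nonzero constant term $a_0\in\F[Y_1,\ldots,Y_r]$ gives an identity $a_0(F_1,\ldots,F_r)=C(F_1,\ldots,F_m)\cdot h$ in $\F[\vecX]$, which transfers under $\Psi$; then algebraic independence of $\Psi(F_1),\ldots,\Psi(F_r)$ forces $a_0(\Psi(F_1),\ldots,\Psi(F_r))\ne 0$ and hence $C(\Psi(F_1),\ldots,\Psi(F_m))\ne 0$. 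As written, your $\tilde C$ and the claim that it is nonzero and suffices are not pinned down. Second, when you invoke the Jacobian criterion for the \emph{images} (``rank $r$ Jacobian of $\Psi(F_1),\ldots,\Psi(F_r)$ implies independence''), the contrapositive needs the characteristic to exceed a degree bound for annihilators of the $\Psi(F_i)$, whose degrees involve $t^{ij}$ with $i$ up to $N$ and are much larger than $d$; the hypothesis $\Char(\F)\ge d^r$ refers to the original degrees, so this bookkeeping is not automatic and is handled with some care in \cite{ASSS16} (note also the independence of $F_1,\ldots,F_r$ and the step from a nonzero minor likewise use \autoref{fact:jacobian-rank}, hence the same hypothesis).
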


We now show how to construct a succinct generator for circuits of the form $C(F_1, \ldots, F_m)$ where $F_i$'s are polynomials of maximal sparsity $s$, and $\trdeg{F_1, \ldots, F_m} = k$.

\begin{lemma}
\label{lem:bms-small-transdeg}
Let $s, r, N \in \N$ and $m= r \log s + r \log r$.
Consider the polynomial map
\[
\cG^{\BMS}_{r,s} (y_1, \ldots, y_r, t_0,t_1,\ldots,t_n, w_1, \ldots, w_m, z_1, \ldots, z_m) := \Grc_{n,r} (\vecy, \vt) + \cG_{n,m}^{\SSSV} (\vecw, \vecz).
\]
Let $\vec{F} = \set{F_1(\vecX), \ldots, F_m(\vecX)} \subseteq \F[\vecX]$ be a set of $N$-variate polynomials over $\F$ of sparsity at most $s$, such that $\trdeg{F_1, \ldots, F_m} \le r$. Then for any polynomial of the form $G(\vecx) = C(F_1, \ldots, F_m)$, we have that $G \not\equiv 0$ if and only if $G \circ \cG^{\BMS}_{r,s} \not\equiv 0$.

Furthermore, the generator $\cG^{\BMS}_{r,s}$ is $\poly(r, \log s, n)$-$\SPS$ succinct.
\end{lemma}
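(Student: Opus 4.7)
The plan is to invoke the ``faithful homomorphism'' recipe of \autoref{lem:faithful-recipe} with the auxiliary homomorphism taken to be $\Phi\colon X_i \mapsto (\cG^{\SSSV}_{n,m}(\vw,\vz))_i$, the shifted succinct Shpilka--Volkovich generator of \autoref{con:hitting-set-small-support} with $m = r\log s + r\log r$. Under this choice, the map $\Psi$ produced by the recipe is precisely $X_i \mapsto \sum_{j=1}^r y_j t^{ij} + \Phi(X_i)$, which by \autoref{cl:succinct-vdm-is-vdm} is the specialization of $\cG^{\BMS}_{r,s}$ obtained by setting $t_0 \leftarrow t$ and $t_k \leftarrow t^{2^{k-1}}$ for $k \in [n]$. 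Consequently, once we verify that $\Phi$ preserves the rank of $\Jac_{\vx}(\vec F)$, \autoref{lem:faithful-recipe} gives $G \not\equiv 0 \iff G \circ \Psi \not\equiv 0$, and then $G \circ \cG^{\BMS}_{r,s} \not\equiv 0$ follows because a polynomial is nonvanishing if some specialization of it is nonvanishing.

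To verify the rank-preservation hypothesis, by \autoref{fact:jacobian-rank} the matrix $\Jac_{\vx}(\vec F)$ has rank exactly $r$ over $\F(\vx)$, so there is some $r \times r$ submatrix whose determinant $M(\vx)$ is a nonzero polynomial. Each entry $\partial F_i/\partial x_j$ is the partial derivative of an $s$-sparse polynomial and hence is itself $s$-sparse, so $M(\vx)$, being a sum of $r!$ products of $r$ such entries, has sparsity at most $r! \cdot s^r \le 2^{r\log r + r \log s} = 2^m$. By \autoref{cor:hitting-set-sparse}, $\cG^{\SSSV}_{n,m}$ is a hitting set for polynomials of sparsity up to $2^m$, so $\Phi(M) \not\equiv 0$. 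Since applying a homomorphism to the entries of a matrix can only decrease rank, this forces $\rank \Phi(\Jac_{\vx}(\vec F)) = r$, as required by \autoref{lem:faithful-recipe}.

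For succinctness, fix any values $(\vaa, \vbb, \vaa', \vbb')$ of the parameters $(\vy, \vt, \vw, \vz)$. The corresponding polynomial in $\vx$ is
\[
\Prc_{n,r}(\vx, \vaa, \vbb) + Q^{\SSSV}_{n,m}(\vx, \vaa', \vbb').
\]
By \autoref{fact:vdm-is-succinct}, the first summand is a $\SPS$ formula of size $\poly(r,n)$, and by \autoref{fact:sssv-succinct}, the second is a $\SPS$ formula of size $\poly(m, n) = \poly(r, \log s, n)$. Because $\SPS$ formulas are trivially closed under addition (concatenate the top sums), the total is a $\SPS$ formula of size $\poly(r, \log s, n)$, establishing the claimed succinctness.

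The main (and essentially only) piece of real work is the sparsity bound on the Jacobian minor combined with picking the SV parameter $m$ large enough; everything else is assembling tools already built (\autoref{con:succinct-rk-cond}, \autoref{cor:hitting-set-sparse}, and the recipe of \autoref{lem:faithful-recipe}). A minor subtlety to watch is that the map $\Psi$ in the recipe is written with a single $t$, whereas $\Grc_{n,r}$ carries an $(n{+}1)$-tuple $\vt$; as noted, this is harmless because a specialization being nonzero implies the unspecialized object is nonzero.
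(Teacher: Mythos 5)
Your proposal is correct and follows essentially the same route as the paper: reduce to rank-preservation of the Jacobian via \autoref{lem:faithful-recipe} together with \autoref{cl:succinct-vdm-is-vdm}, bound the sparsity of an $r\times r$ minor by $r!\cdot s^r\le 2^m$ so that \autoref{cor:hitting-set-sparse} guarantees the shifted succinct SV map keeps it nonzero, and get succinctness from \autoref{fact:vdm-is-succinct}, \autoref{fact:sssv-succinct}, and closure of $\SPS$ formulas under addition. The paper's proof is just a terser version of the same argument, and your extra care about the specialization $t_0\leftarrow t$, $t_k\leftarrow t^{2^{k-1}}$ is a correct filling-in of a step the paper leaves implicit.
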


\begin{proof}
By \autoref{lem:faithful-recipe} and \autoref{cl:succinct-vdm-is-vdm}, it is enough to show that the map $\cG_{n,m}^{\SSSV} (\vecw, \vecz)$ preserves the rank of the Jacobian matrix. This follows from the fact that each $r \times r$ minor of this matrix is a polynomial of sparsity at most $r! \cdot s^r$ (since taking derivatives can only decrease the sparsity), and from \autoref{cor:hitting-set-sparse}.

The succinctness claim follows from \autoref{cl:succinct-vdm-is-vdm} and \autoref{fact:sssv-succinct}.
\end{proof}

\begin{corollary}
There exists a $\poly(\log s, r, n)$-$\SPS$ succinct generator for the class of polynomials of the form $C(F_1, \ldots, F_m)$ such that each $F_i$ has sparsity at most $s$ and $\trdeg{F_1, \ldots, F_m} \le r$.
\end{corollary}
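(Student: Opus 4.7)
The plan is to observe that this corollary is essentially a direct repackaging of the preceding Lemma~\ref{lem:bms-small-transdeg}, which already produces the polynomial map $\cG^{\BMS}_{r,s} = \Grc_{n,r}(\vecy,\vt) + \cG_{n,m}^{\SSSV}(\vecw,\vecz)$ with $m = r\log s + r\log r$, and establishes both (a) that $\cG^{\BMS}_{r,s}$ hits every polynomial of the form $C(F_1, \ldots, F_m)$ with each $F_i$ of sparsity at most $s$ and $\trdeg\{F_1,\ldots,F_m\} \le r$, and (b) that this map is $\poly(r, \log s, n)$-$\SPS$ succinct. So I would simply invoke that lemma and interpret $\cG^{\BMS}_{r,s}$ as the desired succinct generator in the sense of \autoref{defn:succ-gen}.

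To spell out the ingredients: the hitting property is obtained via the ``faithful recipe'' of Agrawal et al.\ (Lemma~\ref{lem:faithful-recipe}), which reduces the problem to (i) a rank condenser of rank $r$ handling the transcendence-degree-bounded part and (ii) a map $\Phi$ that preserves the rank of the Jacobian matrix $\Jac_{\vecX}(\vec{F})$. Role (i) is played by $\Grc_{n,r}$ via \autoref{cl:succinct-vdm-is-vdm}, and role (ii) is played by $\cG_{n,m}^{\SSSV}$. The correctness of role (ii) hinges on the key sparsity bound: each $r \times r$ minor of $\Jac_{\vecX}(\vec{F})$ is a polynomial of sparsity at most $r! \cdot s^r$, since the entries are derivatives of sparse polynomials (and so remain $s$-sparse), and a determinantal expansion sums $r!$ products of $r$ such entries. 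By \autoref{cor:hitting-set-sparse}, $\cG_{n,m}^{\SSSV}$ with $m = r\log s + r\log r = \log(r! \cdot s^r) + O(1)$ hits all non-zero polynomials of this sparsity, so in particular it preserves the rank of the Jacobian, which by \autoref{fact:jacobian-rank} equals $\trdeg\{F_1,\ldots,F_m\}$ in the relevant characteristic regime.

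For succinctness, I would combine \autoref{fact:vdm-is-succinct}, which says $\Grc_{n,r}$ is $\poly(n,r)$-$\SPS$ succinct, with \autoref{fact:sssv-succinct}, which says $\cG_{n,m}^{\SSSV}$ is $\poly(n,m) = \poly(n, r, \log s)$-$\SPS$ succinct, and then use that the sum of two $\SPS$-succinct generators remains $\SPS$-succinct: concretely, the underlying polynomial in $\vecx$ indexed by the generator's seeds is the sum of the two underlying $\SPS$ polynomials, which is itself an $\SPS$ polynomial of size at most the sum of the two. This gives the claimed $\poly(\log s, r, n)$-$\SPS$ succinctness.

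Since every non-trivial input has already been done in Lemma~\ref{lem:bms-small-transdeg} and its antecedents, there is no real obstacle; the only step warranting any care is the Jacobian-minor sparsity bound $r! \cdot s^r$ and the verification that $m = r\log s + r\log r$ is then exactly the sparsity parameter needed to invoke \autoref{cor:hitting-set-sparse}, but this is a direct calculation.
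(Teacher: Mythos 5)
Your proposal is correct and matches the paper exactly: the corollary has no separate proof in the paper because it is an immediate restatement of \autoref{lem:bms-small-transdeg}, whose proof uses precisely the ingredients you describe (the faithful-recipe lemma, the $r!\cdot s^r$ sparsity bound on Jacobian minors hit via \autoref{cor:hitting-set-sparse}, and succinctness from summing the two $\SPS$-succinct maps). The only nitpick is that $r\log s + r\log r$ is not $\log(r!\cdot s^r)+O(1)$ but only an upper bound on $\log(r!\cdot s^r)$, which is all that is needed.
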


\section{Succinct Hitting Sets for Read-Once Oblivious Algebraic Branching Programs}\label{subsec:roabp}

In this section we construct a succinct hitting set for the class of read-once oblivious algebraic programs. Recall that in \autoref{subsec:comm-roabp} we have constructed a $\poly(\log w, \log n)$-$\SPS$ succinct generator for width-$w$ \emph{commutative} roABPs. For general ABPs, we are only able at this point to construct hitting sets that are width-$w^2$ roABP succinct: i.e., in the hitting set for width $w$ $N$-variate roABPs, each element is computed by a width $w^2$ $n$-variate roABP. Ideally, one would want to replace $w^2$ with $\polylog(w)$.

The definition of roABPs were given in \autoref{subsec:comm-roabp}. Throughout this section we assume that the ABP reads the variables in the order $X_1, X_2, \ldots , X_N$. In \autoref{subsec:roabp-order} we give some short remarks regarding different variable orderings.

Our construction is based on the following generator by Forbes and Shpilka \cite{FS13}.

\begin{lemma}[Forbes-Shpilka Generator for roABPs, Construction 3.13 in \cite{FS13}]
\label{lem:FS-generator}
Let $n \in \N$ and $N=2^n$. The following polynomial map $\cG : \F^{n+1} \to \F^N$ is a generator for width $w$, individual degree $d$, $N$-variate roABPs, in variable order $X_1, X_2, \ldots ,X_N$.

Let $\omega \in \F$ be of multiplicative order at least $(Ndw^2)^2$, and $\beta_1, \ldots, \beta_{w^2}$ be distinct elements of $\F$. Let $\setdef{p_\ell}{\ell \in [w^2]}$ be the Lagrange interpolation polynomials with respect to the $\beta_i$'s, i.e., $p_i (\beta_j) = 1$ if $i=j$ and $0$ otherwise.

Let $\cG : \F^{n+1} \to \F^N$ be the following polynomial map, whose output coordinates are indexed by vectors $\vecb \in \{0,1\}^n$.

\begin{equation}
\label{eq:FS-generator}
\cG^{\FS}_{\vecb} (\vecy) = \sum_{\ell_1, \ldots, \ell_{n} \in [w^2]} \prod_{i \in [n]}
\left(
(1-b_i) \cdot p_{\ell_{i-1}} (\omega^{\ell_i} y_i) + b_i \cdot p_{\ell_{i-1}} ( (\omega^{\ell_i} y_i)^{2^{i-1} dw^2} )
\right)
\cdot p_{\ell_{n}} (y_{n+1}),
\end{equation}
where we abuse notation by defining $p_{\ell_{0}} (t) = t$.
\end{lemma}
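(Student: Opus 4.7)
The proof plan is to represent a non-zero width-$w$ roABP $f$ in variable order $X_1,\ldots,X_N$ as a matrix product
\[
f(\vx) = u^\top \Bigl(\prod_{i=1}^N M_i(X_i)\Bigr) v,
\]
where each $M_i(X_i)$ is a $w\times w$ matrix of univariate polynomials of degree at most $d$, and to show that the substitution $X_i\mapsto \cG^{\FS}_{\vecb(i)}(\vy)$, with $\vecb(i)\in\{0,1\}^n$ the binary expansion of $i-1$, preserves non-vanishing. Structurally, $\cG^{\FS}$ is an algebraic analog of Nisan's pseudorandom generator for randomized logspace: the index $\ell_i\in[w^2]$ at recursion level $i$ should be thought of as encoding a pair (input-state, output-state) at the midpoint of a sub-product of matrices, and the $w^2$ Lagrange interpolants $\{p_\ell\}$ form a basis adapted to this encoding via $p_\ell(\beta_{\ell'})=\delta_{\ell\ell'}$.

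The proof would proceed by induction on $n=\log N$ with a recursive doubling structure. The base case $n=0$ is trivial: under the convention $p_{\ell_0}(t)=t$, the generator reduces to the identity map on a single variable. For the inductive step, one splits $\prod_i M_i(X_i)$ into halves of $N/2$ variables each; the top bit $b_1$ of the index $\vecb$ distinguishes the two halves, so the outermost factor $(1-b_1)\,p_{\ell_0}(\omega^{\ell_1}y_1) + b_1\,p_{\ell_0}((\omega^{\ell_1}y_1)^{dw^2})$ handles this top-level split. The key algebraic fact is that raising to the power $dw^2$ pushes the $\omega^{\ell_1}$-frequencies of a degree-$d$ univariate polynomial into a range disjoint from the unpowered range, allowing the two halves of the product to be simultaneously tested without interference. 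The sum $\sum_{\ell_1}$ realizes a trace over the rank-$w^2$ ``communication matrix'' at the split, and the terminating factor $p_{\ell_n}(y_{n+1})$ together with a suitable evaluation of $y_1$ recovers this matrix, reducing the question to two independent $N/2$-variable roABPs which are handled by the inductive hypothesis applied with $n-1$ levels of $\cG^{\FS}$.

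The main technical obstacle I expect is the bookkeeping of degrees and $\omega$-frequency separation across all $n$ levels of the recursion. At level $i$, the sub-product being handled involves $N/2^{i-1}$ matrices and can produce univariate polynomials of degree up to $d\cdot N/2^{i-1}$ in the relevant intermediate variable; the exponent $2^{i-1}dw^2$ attached to $\omega^{\ell_i}y_i$ is calibrated precisely so that the ``second-half'' contributions at level $i$ occupy a window of $\omega$-frequencies disjoint from all ``first-half'' contributions at that level. Verifying that these windows remain disjoint simultaneously across every level is exactly what forces the multiplicative order of $\omega$ to exceed $(Ndw^2)^2$, and translating this frequency-separation picture into a clean rank-preservation statement for the iterated matrix product is the heart of the argument---essentially the algebraic analogue of Nisan's PRG analysis, with Lagrange interpolation at the $\beta_\ell$'s playing the role of the universal hash family used by Nisan.
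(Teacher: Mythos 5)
You are attempting to prove a statement that this paper itself never proves: \autoref{lem:FS-generator} is imported verbatim from Forbes--Shpilka \cite{FS13} (Construction 3.13 there), so the only possible comparison is against that external analysis. Your high-level picture --- a Nisan-style recursive halving, Kronecker-type frequency separation via the powers $2^{i-1}dw^2$, Lagrange interpolation over $w^2$ points to handle the width-$w$ matrix ``communication'' across a split, and a degree count explaining the order-$(Ndw^2)^2$ requirement on $\omega$ --- is indeed the right intuition for \cite{FS13}.

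However, your inductive decomposition is oriented backwards relative to the exponent schedule in \eqref{eq:FS-generator}, and with your orientation the frequency-separation step you rely on fails. You take $b_1$ to be the ``top bit'' and let the $i=1$ factor, with shift $dw^2$, handle the split of all $N$ matrices into two halves of size $N/2$. But each first-half coordinate then has $y_1$-degree $1$, so the composed first half (up to $N/2$ variables of individual degree $d$) has $y_1$-degree as large as $dN/2$, which overruns the shift $dw^2$ as soon as $N/2 > w^2$; your own calibration ``level $i$ handles $N/2^{i-1}$ matrices with shift $2^{i-1}dw^2$'' assigns the smallest shift to the largest block. In the reading that makes the lemma true, $b_1$ is the \emph{least} significant bit: the $i=1$ factor merges adjacent variables $X_{2j-1},X_{2j}$; the level-$i$ shift $2^{i-1}dw^2$ separates sub-blocks of $2^{i-1}$ variables, whose composed $y_i$-degree is at most $2^{i-1}d(w^2-1)$ because the Lagrange interpolants have degree $w^2-1$ (this is exactly why the factor $w^2$ sits in the exponent); and the top-level halving is governed by $b_n$ and the $i=n$ factor, whose index $\ell_n$ is the one the terminal factor $p_{\ell_n}(y_{n+1})$ interpolates over. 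Under your orientation the ``top'' sum over $\ell_1$ has no fresh interpolation variable attached to it, so your step of recovering the communication matrix via $p_{\ell_n}(y_{n+1})$ does not parse. A second gap: the two halves are not ``independent $N/2$-variable roABPs handled by the inductive hypothesis'' --- they share the seed variables $y_1,\ldots,y_n$, with only the powering of $y_n$ distinguishing them --- so an induction on nonzeroness of scalar roABPs is insufficient; one must carry a matrix-valued statement (preservation of the span of the $w\times w$ coefficient matrices of the partial products), which is precisely the rank-preservation statement you flag as the heart of the argument but leave entirely open.
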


In \cite{FS13} (Lemma 3.18), it is shown that this map, for every fixed output coordinate $\vecb$, is computed by a width $w^2$ roABP in the variables $\vecy$. We, however, want to show that for every fixing $\vecy=\vecalpha$, there is a small roABP computing the polynomial whose coefficient vector is given by $(\cG_\vecb (\vecalpha))_{\vecb \in \{0,1\}^n}$. That is, for every choice of $\vecalpha$, and associating $\vecb$ with a subset of $[n]$, we want a polynomial in $x_1, \ldots, x_n$ such that the coefficient of $\vecx_b$ is $\cG_{\vecb}(\vecalpha)$.

\begin{definition}[Succinct Forbes-Shpilka Generator]
\label{def:succ-roabp}
Let $n, w \in \N$, and $\omega$, $p_i$'s as in \autoref{lem:FS-generator}. Define
\begin{align*}
P^{\FS} (x_1, \ldots, x_n, y_1, \ldots, y_{n+1}) &= 
\sum_{\ell_1, \ldots, \ell_{n} \in [w^2]} \prod_{i \in [n]}
\left(
p_{\ell_{i-1}} (\omega^{\ell_i} y_i) \right. \\ &+ \left. x_i \cdot p_{\ell_{i-1}} ( (\omega^{\ell_i} y_i)^{2^{i-1} dw^2} )
\right)
\cdot p_{\ell_{n}} (y_{n+1}). \qedhere
\end{align*}
\end{definition}

We first claim the the Forbes-Shpilka generator \eqref{eq:FS-generator} is given by the coefficient vector of this polynomial. 

\begin{claim}
\label{cl:succinct-FS-is-FS}
Assume the setup and notations of  \autoref{def:succ-roabp}. Then
$\coeff_{\vecx} (P^{\FS}) = \cG^{\FS}$.
\end{claim}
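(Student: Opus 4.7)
The plan is to prove the claim by direct expansion, comparing $P^{\FS}$ term-by-term with \eqref{eq:FS-generator}. The key observation is that in $P^{\FS}$, each variable $x_i$ appears linearly in exactly one factor of each product, and the $x_i$'s are distinct, so the coefficient of a multilinear monomial $\vecx^{\vecb}$ simply selects the ``$x_i$-branch'' in each factor indexed by $i \in \supp(\vecb)$.

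To set up the computation, I would abbreviate
\[
A_{i,\ell}(\vecy) \eqdef p_{\ell_{i-1}}(\omega^{\ell_i} y_i), \qquad B_{i,\ell}(\vecy) \eqdef p_{\ell_{i-1}}((\omega^{\ell_i} y_i)^{2^{i-1} dw^2}),
\]
so that the $i$-th factor in the product defining $P^{\FS}$ is $A_{i,\ell} + x_i B_{i,\ell}$, and the $i$-th factor in $\cG^{\FS}_{\vecb}$ from \eqref{eq:FS-generator} is $(1-b_i) A_{i,\ell} + b_i B_{i,\ell}$. The substantive step is to expand $\prod_{i \in [n]}(A_{i,\ell} + x_i B_{i,\ell})$ by distributivity, obtaining
\[
\prod_{i \in [n]}(A_{i,\ell} + x_i B_{i,\ell}) = \sum_{\vecb \in \{0,1\}^n} \vecx^{\vecb} \prod_{i \in [n]} A_{i,\ell}^{1-b_i} B_{i,\ell}^{b_i},
\]
where I use that each $x_i$ has degree at most $1$ in the factor it appears in and degree $0$ in the others. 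Summing over $\vecl = (\ell_1, \ldots, \ell_n) \in [w^2]^n$ and including the trailing factor $p_{\ell_n}(y_{n+1})$, the coefficient of $\vecx^{\vecb}$ in $P^{\FS}$ (viewed in $(\F[\vecy])[\vecx]$) is therefore
\[
\sum_{\ell_1,\ldots,\ell_n \in [w^2]} \prod_{i \in [n]} A_{i,\ell}^{1-b_i} B_{i,\ell}^{b_i} \cdot p_{\ell_n}(y_{n+1}).
\]

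To finish, I would observe that since $b_i \in \{0,1\}$, we have $A_{i,\ell}^{1-b_i} B_{i,\ell}^{b_i} = (1-b_i) A_{i,\ell} + b_i B_{i,\ell}$, which matches the $i$-th factor in \eqref{eq:FS-generator} exactly. Hence the $\vecb$-th coordinate of $\coeff_{\vecx}(P^{\FS})$ equals $\cG^{\FS}_{\vecb}(\vecy)$, as desired. There is no real obstacle here: the only care needed is the bookkeeping identifying the output coordinates of $\cG^{\FS}$ with subsets of $[n]$ (equivalently, multilinear monomials $\vecx^{\vecb}$), which is immediate from the definition of $\coeff_{\vecx}$.
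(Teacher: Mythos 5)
Your proposal is correct and follows essentially the same route as the paper's proof: fix the multi-index $\ell_1,\ldots,\ell_n$, expand the product over distinct variables so that the coefficient of $\vecx^{\vecb}$ selects the $A$- or $B$-factor according to $b_i$, identify this with $(1-b_i)A_{i,\ell}+b_iB_{i,\ell}$, and conclude by linearity of the coefficient map over the sum on $\ell$. No gaps; the notation $A_{i,\ell},B_{i,\ell}$ is just a cleaner bookkeeping of the same argument.
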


\begin{proof}
As explained earlier, we wish to show that the coefficient of $\vecx_b$ in the polynomial $P^{\FS}$ equals the $\vecb$-th coordinate of \eqref{eq:FS-generator}.

Fix a choice of $\ell_1, \ldots, \ell_n \in [w^2]$, and $\vecb \in \set{0,1}^n$. Consider the product
\[
\prod_{i \in [n]}
\left(
p_{\ell_{i-1}} (\omega^{\ell_i} y_i) + x_i \cdot p_{\ell_{i-1}} ( (\omega^{\ell_i} y_i)^{2^{i-1} dw^2} )
\right).
\]
Since the product is over distinct variables, there is exactly one way to obtain the monomial $\vecx_b = \prod_{i : b_i = 1} x_i$ in this product, and its coefficient will be
\begin{equation}
\label{eq:coeff-xb}
\prod_{i : b_i=1} p_{\ell_{i-1}} ( (\omega^{\ell_i} y_i)^{2^{i-1} dw^2} ) \; \cdot \; \prod_{i : b_i = 0}  p_{\ell_{i-1}} (\omega^{\ell_i} y_i) 
\end{equation}
Finally, observe that \eqref{eq:coeff-xb} exactly equals
\[
\prod_{i \in [n]}
\left(
(1-b_i) \cdot p_{\ell_{i-1}} (\omega^{\ell_i} y_i) + b_i \cdot p_{\ell_{i-1}} ( (\omega^{\ell_i} y_i)^{2^{i-1} dw^2} )
\right).
\]
This is true for every fixed choice of $\ell_1, \ldots, \ell_n$, and the claim now follows from the linearity of the coefficients map.
\end{proof}

We now show that for every fixing $\vecy=\vecalpha$, the polynomial $P^{\FS}(\vecx, \vecalpha)$ is computed by a small roABP.

\begin{claim}
\label{cl:succinct-FS-succinct}
For every setting $\vecy=\vecalpha$, the polynomial $P^{\FS}(\vecx, \vecalpha)$ in \autoref{def:succ-roabp} can be computed by a width $w^2$ roABP in variable order $x_1, x_2 \ldots, x_n$.
\end{claim}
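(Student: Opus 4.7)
The plan is to show that after substituting $\vy = \vaa$, the polynomial $P^{\FS}(\vx, \vaa)$ can be written as an explicit matrix product of the form $\vec{u}(x_1)^\top M_2(x_2) M_3(x_3) \cdots M_n(x_n) \vec{v}$, where $M_i(x_i)$ is a $w^2 \times w^2$ matrix whose entries are univariate polynomials in $x_i$ of degree at most $1$, and $\vec{u}(x_1), \vec{v}$ are of dimension $w^2$ (with $\vec{v}$ a constant vector). This matrix product structure is by definition a width-$w^2$ roABP in variable order $x_1, \ldots, x_n$.

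To carry this out, first I would rewrite $P^{\FS}(\vx, \vaa)$ by pulling the summations over $\ell_1, \ldots, \ell_n$ inside the product and regrouping. Observe that after the substitution $\vy = \vaa$, every quantity $p_{\ell_{i-1}}(\omega^{\ell_i} \alpha_i)$ and $p_{\ell_{i-1}}((\omega^{\ell_i}\alpha_i)^{2^{i-1} d w^2})$ is a scalar, so the $i$-th factor in the product is a degree-$1$ polynomial in $x_i$ alone. Thus each summand is a tensor product structure: the index $\ell_i$ is ``shared'' only between consecutive factors (factor $i$ and factor $i+1$), while $\ell_n$ is shared between factor $n$ and the trailing scalar $p_{\ell_n}(\alpha_{n+1})$, and $\ell_0$ only appears in factor $1$ via the convention $p_{\ell_0}(t) = t$.

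Next I would explicitly define the matrices. For $i = 2, \ldots, n$, set
\[
(M_i(x_i))_{\ell_{i-1}, \ell_i} \;\eqdef\; p_{\ell_{i-1}}(\omega^{\ell_i}\alpha_i) \;+\; x_i \cdot p_{\ell_{i-1}}\bigl((\omega^{\ell_i}\alpha_i)^{2^{i-1} d w^2}\bigr),
\]
which is a linear polynomial in $x_i$. Using the convention $p_{\ell_0}(t) = t$, define the starting row vector $\vec{u}(x_1) \in \F[x_1]^{w^2}$ by
\[
\vec{u}(x_1)_{\ell_1} \;\eqdef\; \omega^{\ell_1}\alpha_1 \;+\; x_1 \cdot (\omega^{\ell_1}\alpha_1)^{d w^2},
\]
and the terminal column vector $\vec{v} \in \F^{w^2}$ by $\vec{v}_{\ell_n} \eqdef p_{\ell_n}(\alpha_{n+1})$. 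Then a direct expansion of the matrix product $\vec{u}(x_1)^\top \prod_{i=2}^n M_i(x_i) \cdot \vec{v}$ recovers exactly the sum over $\ell_1, \ldots, \ell_n \in [w^2]$ in \autoref{def:succ-roabp}, term by term.

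Finally I would conclude by noting that the above matrix-product expression realizes a width-$w^2$ roABP reading the variables in the order $x_1, x_2, \ldots, x_n$: the row vector $\vec{u}(x_1)$ can be modeled by edges from a single source node into $w^2$ nodes (labeled by the linear polynomials in $x_1$), each $M_i(x_i)$ gives the edges in layer $i$, and $\vec{v}$ corresponds to constant-labeled edges into a single sink. There is no substantive obstacle here; the only point that requires care is the handling of the boundary index $\ell_0$ via the convention $p_{\ell_0}(t) = t$, which is exactly what ensures the first factor is a row vector rather than a $w^2 \times w^2$ matrix, and the verification that each edge label has degree at most $1$ in its own variable.
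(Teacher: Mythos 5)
Your proposal is correct and matches the paper's proof: the paper builds exactly this roABP as a layered graph with $w^2$ vertices per layer and edges from layer $i-1$ to layer $i$ labeled by $p_{\ell_{i-1}}(\omega^{\ell_i}\alpha_i) + x_i\cdot p_{\ell_{i-1}}((\omega^{\ell_i}\alpha_i)^{2^{i-1}dw^2})$, with terminal edges labeled $p_{\ell_n}(\alpha_{n+1})$; your matrix-product formulation is just the equivalent presentation of the same construction (an equivalence already noted in the paper's definition of roABPs), including the same handling of the boundary convention $p_{\ell_0}(t)=t$ for the first layer.
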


\begin{proof}
The construction is straightforward from \autoref{def:succ-roabp}.
Layer $V_0$ contains the source vertex $s$ and layer $V_{n+1}$ the sink vertex $t$. Layers $V_1, \ldots, V_{n}$ each contain $w^2$ vertices labeled by the set $[w^2]$. For every $i \in [n]$ and every $\ell \in V_i$, there is an edge from each vertex in the previous layer, labeled by the linear function (in $x_i$)
\[
p_{\ell_{i-1}} (\omega^{\ell_i} \alpha_i) + x_i \cdot  p_{\ell_{i-1}} ( (\omega^{\ell_i} \alpha_i)^{2^{i-1} dw^2} ).
\]
Finally, all vertices in $V_n$ are connected to $t$ with an edge labeled $p_{\ell_{n}} (\alpha_{n+1})$.
\end{proof}

\begin{corollary}
\label{cor:succinct-roABP}
The Forbes-Shpilka generator given in \autoref{lem:FS-generator} is a width $w^2$-roABP succinct generator for degree $d$ roABPs that read the variables in order $X_1, X_2, \ldots , X_N$.
\end{corollary}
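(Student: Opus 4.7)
The plan is to observe that Corollary 6.7 is essentially a direct assembly of the three preceding ingredients (Lemma 6.2, Claim 6.4, and Claim 6.5) against the definition of a succinct generator (Definition 2.3). So the proof proposal is to verify that each of the three conditions of Definition 2.3 is fulfilled by the polynomial $P^{\FS}(\vx,\vy)$ from Definition 6.3, with the target class $\cD$ being width-$w$ degree-$d$ roABPs in variable order $X_1,\ldots,X_N$, the ``succinctness'' class $\cC$ being width-$w^2$ roABPs in variable order $x_1,\ldots,x_n$, and the set of monomials $\cM$ being the multilinear monomials in $x_1,\ldots,x_n$ (so that $\F^\cM \cong \F^N$ with the natural identification of $\vecb\in\{0,1\}^n$ with indices in $[N]$).

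First, I would observe the indexing condition (Definition 2.3, item (1)): by construction, the $\vecb$-th coordinate of the map we are building is obtained by writing $P^{\FS}(\vx,\vy)$ as a polynomial in $\vx$ over the coefficient ring $\F[\vy]$ and reading off the coefficient of $\vx^\vecb$. This is built into the definition of $P^{\FS}$ and requires no work.

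Next, I would verify the succinctness condition (Definition 2.3, item (2)): for every fixing $\vy=\vaa\in\F^{n+1}$, Claim 6.5 explicitly builds a width-$w^2$ roABP computing $P^{\FS}(\vx,\vaa)$ in the variable order $x_1,\ldots,x_n$, so $P^{\FS}(\vx,\vaa)\in\cC$. Hence every element of the image of $\coeff_\vx(P^{\FS}(\vx,\vy))$ is the coefficient vector of a $\cC$-circuit, which is exactly what $\cC$-succinctness demands.

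Finally, I would verify the generator property (Definition 2.3, item (3)): Claim 6.4 says $\coeff_\vx(P^{\FS}) = \cG^{\FS}$ as polynomial maps in $\vy$, so for any target polynomial $D\in\cD$ (a non-zero width-$w$ roABP in $X_1,\ldots,X_N$), we have $D(\coeff_\vx(P^{\FS}(\vx,\vy))) = D\circ \cG^{\FS}(\vy)$, and Lemma 6.2 guarantees that the latter is a non-zero polynomial in $\vy$ whenever $D\not\equiv 0$. This is precisely condition (3). There is no real obstacle here; the entire content of the corollary is the observation that the ``succinct'' reformulation $P^{\FS}$ of $\cG^{\FS}$ has already been shown to be both a correct generator (Claim 6.4 plus Lemma 6.2) and succinctly computable (Claim 6.5), and the proof is therefore a one-line citation of the three prior results.
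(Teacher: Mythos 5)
Your proposal is correct and matches the paper's own proof, which simply cites \autoref{lem:FS-generator}, \autoref{cl:succinct-FS-is-FS}, and \autoref{cl:succinct-FS-succinct}; you have merely spelled out how those three ingredients map onto the conditions in the definition of a succinct generator. No gaps.
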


\begin{proof}
Immediate from \autoref{lem:FS-generator}, \autoref{cl:succinct-FS-is-FS} and \autoref{cl:succinct-FS-succinct}.
\end{proof}

\subsection{Different Variable Orderings}
\label{subsec:roabp-order}

The generator given by Forbes and Shpilka in \autoref{lem:FS-generator} hits roABPs that read the variables in the order $X_1, X_2, \ldots, X_N$ and not necessarily in any variable order. Obviously, we can apply a permutation $\sigma$ to the variables $x_1, \ldots, x_n$ in \autoref{def:succ-roabp} to obtain a roABP in the variables $\vecx$ in the order $\sigma$: the coefficient vector of this roABP hits roABPs in the variables $\vecX$ that read their variables in the order on $\set{X_1, \ldots, X_N}$ which is given by considering the lexicographic ordering induced on the set of multilinear monomials in $\set{x_1, \ldots, x_n}$ by the order $\sigma$, and using the canonical identification of a multilinear monomial with an index in $[N]$, say, using the binary representation. We call such an order relation on $[N]$ a \emph{monomial-compatible} ordering. Note that there are merely $n!$ such orderings among the $N!$ total orderings on $[N]$.

Since in our case we do not care about the \emph{size} of the hitting set, we can take the union of all $n!$ those succinct hitting sets to obtain the following corollary.

\begin{corollarywp}
There exists a width-$w^2$ roABP succinct hitting set for the class of width $w$, $N$ variate, and degree $d$ roABPs that read the variables in a monomial compatible ordering.
\end{corollarywp}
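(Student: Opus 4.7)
The plan is to leverage the previously established \autoref{cor:succinct-roABP}, which handles the specific variable order $X_1, X_2, \ldots, X_N$, and then to extend it to all monomial-compatible orderings by permuting the succinct variables $x_1, \ldots, x_n$ and taking a union over all $n!$ resulting hitting sets. Since the statement only asserts \emph{existence} of a hitting set with a certain succinctness (and places no bound on its size), taking such a union is harmless.

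Concretely, I would first observe that for any permutation $\sigma \in S_n$, one can form the polynomial $P^{\FS}_\sigma(\vx,\vy) := P^{\FS}(x_{\sigma(1)}, \ldots, x_{\sigma(n)}, \vy)$ from \autoref{def:succ-roabp}. By the same argument as in \autoref{cl:succinct-FS-succinct}, for every fixing $\vy = \vecalpha$, the polynomial $P^{\FS}_\sigma(\vx,\vecalpha)$ is computed by a width-$w^2$ roABP in the variable order $x_{\sigma(1)}, x_{\sigma(2)}, \ldots, x_{\sigma(n)}$. Then, identifying multilinear monomials in $\vx$ with subsets of $[n]$, and hence with indices in $[N]$ via binary representation, the coefficient vector $\coeff_\vx(P^{\FS}_\sigma)$ is exactly the image of the Forbes--Shpilka generator after permuting the output coordinates according to the monomial-compatible ordering on $[N]$ induced by $\sigma$.

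Next, I would invoke \autoref{lem:FS-generator} (or equivalently \autoref{cor:succinct-roABP}) applied to a target roABP $F$ whose variable order on $\{X_1, \ldots, X_N\}$ is the monomial-compatible order induced by some $\sigma$: after relabeling the $X_i$'s according to this order, $F$ becomes a width-$w$ roABP in the canonical order $X_1, \ldots, X_N$, so the Forbes--Shpilka generator hits it, which translates back (via the same relabeling) into the statement that the image of $\coeff_\vx(P^{\FS}_\sigma)$ hits $F$. Hence, defining
\[
\cH \;:=\; \bigcup_{\sigma \in S_n} \bigl\{\, \coeff_\vx\bigl(P^{\FS}_\sigma(\vx,\vecalpha)\bigr) \,:\, \vecalpha \in \F^{n+1} \,\bigr\},
\]
every element of $\cH$ is the coefficient vector of a width-$w^2$ roABP on $n$ variables (in one of $n!$ orders), and $\cH$ hits every width-$w$ degree-$d$ roABP on $N$ variables in any monomial-compatible ordering.

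There is no real obstacle here: the only mildly delicate point is keeping the bookkeeping straight between the permutation $\sigma$ acting on the succinct variables $\vx$ and the induced monomial-compatible ordering on the $2^n$ coordinates indexed by $\vX$. Once one verifies that binary-representation identification of subsets of $[n]$ with $[N]$ commutes appropriately with permutations of $[n]$, the result follows immediately from \autoref{cor:succinct-roABP}.
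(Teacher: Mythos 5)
Your proposal is correct and matches the paper's own argument: permute the succinct variables $x_1,\ldots,x_n$ in \autoref{def:succ-roabp} to handle each monomial-compatible ordering, and take the union over all $n!$ permutations, which is harmless since the corollary places no bound on the size of the hitting set. The bookkeeping you flag (the permutation of $[n]$ inducing a permutation of the $2^n$ coordinates via the binary/subset identification) is exactly the point the paper also treats informally, so nothing is missing.
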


\section{Discussion and Open Problems}\label{sec:discussion}

In this work, we have shown that many of the hitting sets we know for restricted algebraic models of computation can be represented in a succinct form as coefficient vectors of small circuits. This gives some positive answers to \autoref{conj:complexity-barrier}, and points to the possibility of an algebraic natural proofs barrier. The main problem left open by this work is to construct succinct hitting sets for stronger models for which we know how to construct hitting sets efficiently.

For example, while we were able to construct a succinct generator for commutative roABPs, our construction for general roABPs is not fully succinct, and also works only in certain variable orderings. Despite several works that obtain quasi-polynomial size hitting sets for roABPs in any order (\cite{FSS14, agks15}), none of them seems to fit easily into the succinct setting, each for its own reasons.

For bounded-depth multilinear formulas, subexponential size hitting sets were obtained by Oliveira, Shpilka and Volk~\cite{osv16}. The construction there can be roughly described as hashing the $N$ variables into $N^{1-\varepsilon}$ buckets, and then hitting each bucket independently using a generator for roABPs (in fact, commutative roABPs will suffice). The main challenge here seems to be the hashing part, which (in the succinct setting) would involve hashing monomials, and ensuring that the coefficient vector that is obtained through this process has a small circuit for any possible hash function. 

The main technical tool which we do not know how to emulate in the succinct setting is the Klivans-Spielman~\cite{ks01} generator. In this generator, the variable $X_i$ is mapped to $t^{k^i \bmod p}$, where $t$ is a new indeterminate, $p$ is chosen from an appropriately large set of primes and $k$ from an appropriately large set of natural numbers. The main feature of this generator is that given a ``small'' enough set of monomials $\mathcal{M}$, the parameters $k$, $p$ can be chosen from a ``not too large'' set, such that all the monomials in $\mathcal{M}$ are given distinct weights, and this can be done in a black-box manner, that is, without knowing $\mathcal{M}$, but only an upper bound on its size. Indeed, the noticeable difference from the constructions we have given in this paper is the  \emph{exponential} dependence on $i$ in the exponent of $t$, a feature which is not clear how to emulate in the succinct setting.

The main application of the Klivans and Spielman construction is to construct hitting sets for sparse polynomials.  While we are unable to make the resulting hitting set succinct, we developed an alternate hitting set which we succeeded in making succinct.  However, the Klivans and Spielman construction (or otherwise similar ideas) has also found applications beyond the class of sparse polynomials, such as in the construction hitting sets for roABPs in unknown order from the work of Agrawal, Gurjar, Korwar and Saxena~\cite{agks15}. Unfortunately, such works seem to rely heavily on properties of the Klivans and Spielman construction beyond that of just hitting sparse polynomials, and as such we are currently unable to make these hitting sets succinct.

A particular interesting application of the Klivans and Spielman construction is in the recent works of Fenner, Gurjar and Thierauf~\cite{FGT16} and its generalization by Gurjar and Thierauf \cite{GT17}.  These works construct hitting sets for the class of determinants of ``read-once matrices'', which are polynomials of the form $\det M$, where $M$ is a matrix in which each entry contains a variable $x_{i,j}$ or a field constant, and each variable appears at most once in the matrix. While this class of polynomials is very restricted, the partial derivative matrix used by Nisan~\cite{nis91}, Raz~\cite{raz2004}, and Raz-Yehudayoff~\cite{raz-yehudayoff}, is a read-once matrix. As such, the lower bounds proved in these papers are algebraically natural and the distinguisher used is a read-once determinant. The work of Raz and Yehudayoff~\cite{raz-yehudayoff} in particular shows that a read-once determinant can vanish on the coefficient vectors of constant-depth multilinear formulas, and as most of the constructions in this paper have this form this shows that these constructions cannot be succinct hitting sets for read-once determinants, and hence new ideas are needed.  Indeed, if one could establish a circuit class $\cC$ where there are $\cC$-succinct hitting sets for read-once determinants then this would show that no proof technique following the ideas of the above works can prove lower bounds for the class $\cC$. Such a result would be very interesting as those lower bounds methods are still very much state-of-the-art.

As mentioned earlier, stronger evidence towards an algebraic natural proofs barrier can also be obtained by designing pseudorandom polynomials whose security is based on widely-believed cryptographic assumptions. In particular, one possible approach is obtaining evidence in favor of the determinant-based construction of Aaronson and Drucker~\cite{aaronson-blog-post}.

\section*{Acknowledgements}
We thank Scott Aaronson, Andy Drucker, Josh Grochow, Mrinal Kumar, Shubhangi Saraf and Dor Minzer for useful conversations regarding this work. We also thank the anonymous reviewers for their careful reading of this paper and for many useful comments.

\bibliographystyle{customurlbst/alphaurlpp}
\bibliography{references}

\end{document}